\def\BibTeX{{\rm B\kern-.05em{\sc i\kern-.025em b}\kern-.08em
		T\kern-.1667em\lower.7ex\hbox{E}\kern-.125emX}}
\newcommand{\ifarxivelse}[2]{\iftoggle{notarxiv}{#1}{#2}}
\lstdefinestyle{customJ}{
  belowcaptionskip=1\baselineskip,
  breaklines=true,
  frame=L,
  xleftmargin=\parindent,
  language=Java,
  showstringspaces=false,
  basicstyle=\footnotesize\ttfamily,
  keywordstyle=\bfseries\color{green!40!black},
  commentstyle=\itshape\color{purple!40!black},
  identifierstyle=\color{blue},
  stringstyle=\color{orange},
}
\newtheorem{theorem}{Theorem}[section]
\newtheorem{example}[theorem]{Example}
\newtheorem{lemma}[theorem]{Lemma}
\newtheorem{corollary}[theorem]{Corollary}
\newtheorem{definition}[theorem]{Definition}
\Crefname{equation}{Eq.}{Eqs.}
\Crefname{figure}{Fig.}{Figs.}
\Crefname{tabular}{Tab.}{Tabs.}
\Crefname{remark}{Rem.}{Rems.}
\Crefname{section}{Sec.}{Secs.}
\Crefname{subsection}{Sec.}{Secs.}
\Crefname{theorem}{Thm.}{Thms.}
\Crefname{example}{Ex.}{Exs.}
\Crefname{lemma}{Lem.}{Lems.}
\Crefname{corollary}{Cor.}{Cors.}
\Crefname{definition}{Def.}{Defs.}
\Crefname{appendix}{App.}{Apps.}
\Crefname{algorithm}{Alg.}{Algs.}
\crefname{equation}{Eq.}{Eqs.}
\crefname{figure}{Fig.}{Figs.}
\crefname{tabular}{Tab.}{Tabs.}
\crefname{remark}{Rem.}{Rems.}
\crefname{section}{Sec.}{Secs.}
\crefname{subsection}{Sec.}{Secs.}
\crefname{theorem}{Thm.}{Thms.}
\crefname{example}{Ex.}{Exs.}
\crefname{lemma}{Lem.}{Lems.}
\crefname{corollary}{Cor.}{Cors.}
\crefname{definition}{Def.}{Defs.}
\crefname{appendix}{App.}{Apps.}
\crefname{algorithm}{Alg.}{Algs.}
\newcommand{\para}[1]{\smallskip\noindent{\em #1}}
\newcommand{\intersectionSym}{\cap}
\newcommand{\intersectionBin}{\mathbin{\intersectionSym}}
\newcommand{\UnionSym}{\bigcup}
\newcommand{\intersection}{\intersectionBin}
\newcommand{\Union}{\UnionSym}
\newtheorem{remark}{Remark}
\newcommand{\abs}[1]{\lvert #1 \rvert}
\newcommand{\Naturals}{\mathbb{N}}
\newcommand{\Reals}{\mathbb{R}}
\newcommand{\Rationals}{\mathbb{Q}}
\newcommand{\eqdef}{\vcentcolon=}
\newcommand{\PSPACE}{\mathbf{PSPACE}}
\newcommand{\PTIME}{\mathbf{PTIME}}
\newcommand{\MC}{\mathsf{M}}
\newcommand{\MDP}{\mathcal{M}}
\newcommand{\infinitepath}{\rho}
\newcommand{\Infinitepaths}{\mathsf{IPaths}}
\newcommand{\Finitepaths}{\mathsf{FPaths}}
\newcommand{\MEC}{\mathsf{MEC}}
\newcommand{\MECs}{\MEC}
\newcommand{\BSCCs}{\mathsf{BSCC}}
\newcommand{\QMDP}{{\widehat{\MDP}}}
\newcommand{\Qstates}{{\widehat{\states}}}
\newcommand{\Qinit}{{\widehat{\initstate}}}
\newcommand{\Qact}{{\widehat{\act}}}
\newcommand{\Qtrans}{{\widehat{\trans}}}
\newcommand{\Qstrat}{{\widehat{\strat}}}
\newcommand{\shat}{\widehat{s}}
\newcommand{\distribution}{d}
\NewDocumentCommand{\Distributions}{d()}{\IfNoValueTF{#1}{\mathcal{D}}{\mathcal{D}(#1)}}
\newcommand{\allstates}{\mathsf{S}}
\newcommand{\states}{\allstates}
\newcommand{\initstate}{s_{0}}
\newcommand{\act}{\mathsf{A}}
\newcommand{\trans}{\delta}
\newcommand{\targets}{\mathsf{F}}
\newcommand{\rew}{\mathsf{r}}
\newcommand{\meanpay}{\mathsf{mp}}
\newcommand{\sink}{\mathsf{z}}
\newcommand{\stay}{\mathsf{stay}}
\newcommand{\obj}{\Phi}
\newcommand{\objWR}{\Phi} 
\newcommand{\objMP}{\Phi^{MP}}
\newcommand{\objMO}{\mathcal{Q}}
\newcommand{\reach}{\lozenge}
\newcommand{\strat}{\sigma}
\newcommand{\Strats}{\Sigma}
\newcommand{\probability}{\mathbb{P}}
\newcommand{\val}{\mathsf{V}}
\newcommand{\achievable}{\mathcal{A}}
\newcommand{\pf}{\mathcal{P}}
\newcommand{\cp}{\mathsf{EP}}
\newcommand{\qee}{\hfill$\triangle$} 
\newcommand{\Expectation}{\mathbb{E}}
\newcommand{\CPTval}{\mathsf{CPT} \text{-} \val}
\newcommand{\cptfun}{\mathsf{cpt}}
\newcommand{\eu}{\mathsf{eu}}
\newcommand{\prosp}{\vec{x}}
\newcommand{\prospect}{\mathsf{prospect}}
\newcommand{\probabilities}{\vec{p}}
\newcommand{\probsappr}{\vec{p}_{\text{a}}}
\newcommand{\outcomevector}{\vec{o}}
\newcommand{\outcomes}{\mathsf{outc}}
\newcommand{\outcomeset}{\mathbb{O}}
\newcommand{\obtainset}{\mathsf{obtainset}}
\newcommand{\weight}{\mathsf{w}}
\newcommand{\pweight}{\weight^+}
\newcommand{\mweight}{\weight^-}
\newcommand{\weightP}{\pweight}
\newcommand{\weightM}{\mweight}
\newcommand{\gainrank}{\mathsf{grank}}
\newcommand{\lossrank}{\mathsf{lrank}}
\newcommand{\utility}{\mathsf{u}}
\newcommand{\util}{\utility}
\newcommand{\decweight}{\pi}
\newcommand{\probvector}{\probabilities}
\newcommand{\probvectorQ}{\vec{q}}
\newcommand{\lipCPT}{\mathsf{L}_{\cptfun}}
\newcommand{\lipWp}{\mathsf{L}_{\weightP}}
\newcommand{\lipWm}{\mathsf{L}_{\weightM}}
\newcommand*\myrefappendix[1]{%
		\ifbool{archiveversion}
		{\cref{#1}}
		{\cref{#1} in \cite{arxivVersion}}%
}
\def\orcidID#1{\textsuperscript{\,\smash{\protect\raisebox{-1.25pt}{\href{http://orcid.org/#1}{\protect\includegraphics[scale=.8]{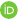}}}}}}
\begin{document}
	
	\title{Risk-aware Markov Decision Processes Using \\Cumulative Prospect Theory
		\thanks{This project has received funding from 
			the Fonds de la Recherche Scientifique - FNRS under grant No.\ T.0027.21, the Belgian National Lottery; 
			the ERC CoG 863818 (ForM-SMArt), the Austrian Science Fund (FWF) 10.55776/COE12; 
			the DFG project 427755713, GOPro, and the DFG research training group GRK 2428 Continuous Verification of Cyber-Physical Systems (ConVeY); 
			the EU's Horizon 2020 research and innovation programmes under the Marie Sklodowska-Curie grant agreement No.\ 101034413 (IST-BRIDGE) and the ERC Starting Grant DEUCE (101077178). 
		}
	}

\author{
	\IEEEauthorblockN{
		Thomas Brihaye\IEEEauthorrefmark{1}\orcidID{0000-0001-5763-3130},
		Krishnendu Chatterjee\IEEEauthorrefmark{2}\orcidID{0000-0002-4561-241X},
		Stefanie Mohr\IEEEauthorrefmark{3}\orcidID{0000-0002-8630-3218},
		Maximilian Weininger\IEEEauthorrefmark{2}\IEEEauthorrefmark{3}\IEEEauthorrefmark{4}\orcidID{0000-0002-0163-2152}
	}
	\vspace{0.2cm}
	
	\IEEEauthorblockA{
		\begin{tabular}{cccc}
			~~~~\shortstack{\IEEEauthorrefmark{1}University of Mons\\
				Mons, Belgium \\
				Thomas.Brihaye@umons.ac.be
			} & 
			\shortstack{\IEEEauthorrefmark{2}Institute of Science and Technology Austria\\
				Klosterneuburg, Austria \\
				Krishnendu.Chatterjee@ist.ac.at
			} & 
			\shortstack{\IEEEauthorrefmark{3}Technical University of Munich\\
				Munich, Germany \\
				mohr@in.tum.de
			} &
		\end{tabular}
	}
\vspace{0.2cm}

	\IEEEauthorblockA{
			\shortstack{\IEEEauthorrefmark{4}Ruhr-University Bochum\\ Bochum, Germany\\
			Maximilian.Weininger@rub.de
			}
	}
}

	\maketitle

	\begin{abstract}
		Cumulative prospect theory (CPT) is the first theory for decision-making under uncertainty that combines full theoretical soundness and empirically realistic features~\cite[Page~2]{wakker2010prospect}.
		While CPT was originally considered in one-shot settings for risk-aware decision-making, we consider CPT in sequential decision-making.
		The most fundamental and well-studied models for sequential decision-making are Markov chains (MCs), and their generalization Markov decision processes (MDPs).
		The complexity theoretic study of MCs and MDPs with CPT is a fundamental problem that has not been addressed in the literature.
		
		Our contributions are as follows:
		First, we present an alternative viewpoint for the CPT-value of MCs and MDPs.
		This allows us to establish a connection with multi-objective reachability analysis and conclude the strategy complexity result that memoryless randomized strategies are necessary and sufficient for optimality.
		Second, based on this connection, we provide an algorithm for computing the CPT-value in MDPs with infinite-horizon objectives.
        We show that the problem is in EXPTIME and fixed-parameter tractable.
		Moreover, we provide a polynomial-time algorithm for the special case of MCs.
	\end{abstract}
	
	\begin{IEEEkeywords}
		Probabilistic Verification, Markov Decision Processes, Risk, Cumulative Prospect Theory.
	\end{IEEEkeywords}
	
	\section{Introduction}

\para{Risk-aware Decision Making.} 
A decision-maker wants to find the optimal choice in a situation with multiple possible decisions, but the outcomes of these decisions are affected by uncertainty. 
Such situations arise in countless applications, from strategic business decisions over financial investment to medical treatments, see e.g.~\cite[Chap. 1.1]{wakker2010prospect}.
\begin{example}[Motivating example]\label{ex:intro-motivate}
    Imagine you find a coupon for a bet.
    There are two bets available:
    You can use your coupon and choose to play a \emph{safe bet} with a 95\% chance of winning 20€.
    Alternatively, you can pay an additional 5€ and play a \emph{risky bet} where you can win 55€ with 51\%, 5€ with a 5\% chance, but you can also lose the bet and get nothing with a 44\% chance.
    Thus, taking into account the additional 5€ paid, the second bet can result in a loss of 5€, no change, or winning 50€.
    Most people tend to choose the safe bet.
    This has been claimed to be \enquote{irrational} because the expectation of the safe bet is only 19€, while the expectation of the risky bet is higher~at~23.3€.
    We discuss a sequential version of this example (finding two coupons) in \cref{ex:2-running}.
    \qee
\end{example}

\para{Cumulative Prospect Theory over Expected Utility.}
Expected utility theory cannot explain human behaviour in examples like this. 
Thus, there has been an effort to replace it with another decision theory that faithfully represents human decision-making in order to predict or prescribe decisions that align with human preferences.
We refer to~\cite{fishburn1988expected} for a history of expected utility theory and the criticism it has received and to~\cite[P. 2]{wakker2010prospect} for an overview of the developments that led to the inception of \emph{cumulative prospect theory} (CPT), introduced in~\cite{CPT92}.
Unlike expected utility theory, CPT correctly models the decision of preferring the safe bet in the motivating example, as it accounts for factors in human decision-making like loss-aversion (losing money is more hurtful than gaining an equal amount) and re-weighting of probabilities (the small 5\% chance for getting nothing is increased in human perception, while the larger 95\% chance is decreased).

\para{Desired Properties of CPT.}
CPT combines theoretical soundness with empirical realism:
It is sound insofar as it satisfies two important axioms, called gain-loss consistency and sign-comonotonic tradeoff consistency, whereas other theories of decision under risk do not~\cite{CPTaxioms}.
It is empirically realistic, as it has been confirmed by several studies, see~\cite[Chap. 9.5]{wakker2010prospect} for an overview. 
In 2002, Daniel Kahneman received the Nobel Memorial Prize in Economic Sciences for his work on prospect theory.
In 2010, CPT was \enquote{still the only theory that can deliver the full spectrum of what is required for decision under uncertainty}~\cite[P. 2]{wakker2010prospect}.

\para{CPT in Sequential Decision Making.} 
CPT was originally considered in one-shot settings, but is also relevant in the context of sequential decision-making. 
For this, the most fundamental and well-studied models are Markov chains (MCs, e.g.~\cite{kulkarnimodeling-second-edition}), and their generalization Markov decision processes (MDPs, e.g.\cite{puterman}). 
Adapting the motivating example, MDPs can model not only a single betting decision but when we find two coupons for bets and we can bet twice consecutively.
An optimal strategy is a risk-aware plan for both bets. 
Usually, the value of an MDP is calculated as the expectation over the rewards of all paths. 
Replacing this expectation with the CPT-function allows risk-aware sequential decision-making in problems too large for humans to comprehend.
The computational study of MDPs with CPT is a fundamental algorithmic problem in risk-aware sequential decision making which has not been addressed in the literature.

\para{Our contribution.}
We focus on the algorithmic analysis of CPT in MCs and MDPs with infinite-horizon objectives, namely the basic and well-studied weighted reachability and long-run average reward. 
We provide the first deterministic algorithms and the first computational and strategy complexity results.
\begin{compactitem}
    \item First, we present a new, more intuitive definition of the CPT-value in sequential decision-making, which allows us to establish a connection to multi-objective reachability queries.
    We utilize this connection to apply results from multi-objective model checking.
    This way, we directly obtain the strategy complexity result that memoryless randomized strategies are necessary and sufficient for achieving the optimal CPT-value.
    \item Second, we provide algorithms for approximating the CPT-value of an MC or MDP to arbitrary precision $\varepsilon>0$. 
    (We explain in \Cref{rem:2-approx} why computing the exact CPT-value is unreasonable and infeasible.)
    Based on this, we also establish upper bounds on the computational complexity of the CPT-value approximation problem:
    Given an MDP, weighted reachability objective, precision $\varepsilon$, and threshold $v$, decide whether the CPT-value of the MDP is greater than $v+\varepsilon$ or smaller than $v-\varepsilon$.
    This CPT-value approximation is fixed-parameter tractable.
\end{compactitem}

\para{Technical contributions.}
One of our key contributions is a new and intuitive definition of the CPT-value in the sequential setting (\cref{sec:3-new-cpt-value-title}). 
The new definition immediately leads to our polynomial-time algorithm for Markov
chains (\cref{sec:4-mc-algo}). 
For MDPs, the new definition allows us to establish a connection to multi-objective reachability (\cref{sec:3-relation-cpt-multi-objective}). 
For the algorithmic results, we exploit this connection in order to reduce the problem of computing the CPT-value to a non-convex optimization problem (\cref{sec:4-mdp-algo}). 
The complexity of non-convex optimization is usually not discussed and only described as exponential in the number of variables, see e.g.~\cite[Chap. 3]{matthiesen2019efficient}.
We provide an algorithm with explicit runtime analysis.
Finally, we highlight that our theory is robust in the sense that we can generalize it to variants of CPT or to other objectives like mean payoff (\cref{sec:5-title}).

\subsection*{Related Work.}
\para{Risk-aware MDPs.}
Recently, there has been a lot of work on risk-aware MDPs.
Often, expectation is not replaced but rather complemented with something by considering a conjunction of objectives. This includes combining expectation with conditional value-at-risk~\cite{KM18-cvar,DBLP:conf/aaai/Meggendorfer22}, percentiles~\cite{DBLP:journals/fmsd/RandourRS17}, quantiles~\cite{DBLP:conf/fossacs/UmmelsB13,DBLP:journals/mor/JiangP18}, or variance~\cite{DBLP:conf/icml/MannorT11,DBLP:journals/jcss/BrazdilCFK17}.

In contrast, CPT is a \enquote{monolithic} theory of decision, i.e.\ it computes a single value for every choice, such that the largest value is the one that a human would prefer.
A similar approach is used in ~\cite{erisk,DBLP:journals/iandc/BaierCMP24}, replacing expectation with entropic risk, or~\cite{DBLP:conf/icalp/PiribauerSB22,DBLP:journals/eor/MaMX23,DBLP:conf/concur/BaierPS24}, using variance-penalized expected payoff.

In~\cite{DBLP:conf/cdc/Ramasubramanian21,DBLP:journals/ftml/A022}, the CPT-value in MDPs is approximated using model-free reinforcement learning algorithms.
These randomized algorithms only converge in the limit and have no stopping criterion indicating how close the current approximation is to the result; in contrast, our solution is deterministic and terminates with a required precision after finitely many steps.
In~\cite{lin2013dynamic,lin2018probabilistically,TopcuConference}, the authors consider a \enquote{nested} formulation of the CPT-value, i.e.\  the CPT-function is applied for every transition that a path in the MDP takes.
The convergence of their algorithms is proven only under additional restrictive assumptions, namely that the reward measure is b-bounded~\cite[Theorem~3.1]{lin2013dynamic} (also used in \cite{TopcuConference}) or that the MDP is uniformly transient~\cite[Theorem~10]{lin2018probabilistically}.
Additionally, they differ from our work in that the considered objectives are finite-horizon~\cite{lin2013dynamic,TopcuConference} or discounted and transient~\cite{lin2018probabilistically}, whereas we consider undiscounted infinite-horizon objectives.

\para{Comparison of Theories of Risk.}
CPT is the dominant theory of risk in psychology, and it generalizes classical expected utility, as well as the value-at-risk~\cite[Exerc. 6.4.4]{wakker2010prospect} in the sense that for a certain choice of internal parameters, CPT degenerates to these functions.
It is based on empirical studies~\cite{CPT92} and has been mathematically axiomatized, in particular satisfying the properties of gain-loss consistency and sign-comonotonic tradeoff consistency~\cite{CPTaxioms}.
The conditional value-at-risk and entropic risk have also been mathematically axiomatized in~\cite{DBLP:journals/mansci/WangZ21} and~\cite{DBLP:journals/fs/FollmerS02}, respectively, and are often employed in finance. Entropic risk has been shown to be consistent with human choice~\cite{DBLP:journals/eor/BrandtnerKR18} in portfolio selection; however, as it developed from an axiomatization (in contrast to CPT, where the axiomatization was derived from the empirical studies), there are fewer empirical studies and an ongoing debate about which variant of entropic risk is preferable~\cite{follmer2011entropic,DBLP:journals/eor/BrandtnerKR18,fischer2018discussion}.
The variance-penalized expected payoff is also a very common approach, and Harry Markowitz received the Nobel Memorial Prize in Economic Sciences for it~\cite{markowitz1991foundations}; however, the theoretical foundations have been subject to a lot of discussion, see the summary in~\cite{johnstone2013mean}, and we are not aware of an axiomatic characterization. 
Overall, there are arguments for investigating MDPs under these theories of risk, and we complement the state-of-the-art by providing results for CPT, for which so far there is no principled analysis of algorithms and complexity.

\para{Algorithms for CPT in other settings.} 
There exist algorithms converging in the limit to the CPT-value of a discounted deterministic two-player Markov game~\cite{DBLP:conf/aaai/TianST21}, or to approximate Nash-equilibria in multi-player stochastic games with incomplete information~\cite{DBLP:journals/tac/EtesamiSMP18,danis2023multi}. 
None of these papers provide results on algorithmic or strategy complexity.
	\section{Preliminaries}\label{sec:2-title}

These preliminaries are quite extensive, as they recall material from several research fields. 
Thus, some definitions not necessary for understanding the key insights of the paper as well as more extensive examples appear only in \ifarxivelse{\cite[App. A]{techreport}}{\cref{app:2-extended-prelims}}.
The preliminaries are organized as follows:
Firstly, \Cref{sec:2-main-cpt} provides the definition of cumulative prospect theory (CPT) and the CPT-function $\cptfun$; it additionally recalls the expected utility function $\eu$, because comparing the two functions is instructive.
Secondly, \Cref{sec:2-markov-subsection} briefly introduces Markov systems, i.e.\ Markov decision processes (MDPs) and their semantics, in particular defining the weighted reachability objective we consider throughout most of the paper.
\Cref{sec:2-problem-statement} combines the previous subsections to define the CPT-value of an MDP (and its expected value, again for accessibility and comparison) and provides the formal problem statement.
Finally, \Cref{sec:2-further} recalls further notions necessary for our technical reasoning: strongly connected components, end components and multi-objective reachability queries. 

\para{Conventions.}
We denote vectors using an arrow above a letter, e.g., $\vec{o}$.
For a $k$-vector $\vec{o}$ and for $1\leq i\leq k$, we write $o_i$ for the $i$-th element of this vector.
Vectors are compared point-wise, i.e.\ for ${\sim} \in \{<,>,\leq,\geq,=\}$, we have $\vec{x} \sim \vec{y}$ if and only if for all $1\leq i\leq k$ it holds that $x_i \sim y_i$.

$\Distributions(X)$ denotes the set of all \emph{probability distributions} over a finite set $X$, i.e.\ mappings $\distribution : X \to [0, 1]$ such that $\sum_{x \in X} \distribution(x) = 1$.
For a fixed ordering of elements in $X$, a distribution can be interpreted as a $k$-vector of probabilities, i.e.\ an element of $[0,1]^k$.

\subsection{Cumulative Prospect Theory}\label{sec:2-main-cpt}

This section recalls the basics of cumulative prospect theory (CPT) based on the textbook~\cite{wakker2010prospect}.

\subsubsection{Outcomes and Prospects}

Let $\outcomeset \subseteq \Reals$ be a finite set of \emph{outcomes}, i.e.\ of possible gains and losses that one can obtain. We assume without loss of generality that we always have $0 \in \outcomeset$.
We use $\outcomevector$ to denote the vector containing all outcomes, and assume w.l.o.g. that they are ordered increasingly (this is necessary for evaluating the CPT-function).
Throughout the paper, we denote by $k \eqdef \abs{\outcomeset}$ the number of outcomes.

A \emph{prospect} (also called lottery) is a tuple $(\outcomevector, \probabilities)$, where $\outcomevector$ is a vector of outcomes and $\probabilities \in [0,1]^k$ is a vector of probabilities.
Intuitively, a prospect is a probability distribution over $\outcomeset$.
We denote by $\prosp = (\outcomevector, \probabilities)$ a prospect consisting of $\outcomevector$, the ordered $k$-vector of outcomes, and $\probabilities$, the associated $k$-vector of probabilities.
In our examples, we write prospects in a list form, e.g.\ $\prosp \eqdef [o_1: p_1, \ldots, o_k: p_k]$, where $o_1,\ldots,o_k \in \outcomeset$ are the outcomes and $p_1,\ldots,p_k$ are their respective probabilities.

\begin{example}
	\label{ex:prospects}
	In \cref{ex:intro-motivate}, the set of outcomes is     $\{-5,0,20,50\}$
    and we can choose between two prospects 
    $[0:0.05, 20:0.95]$ and $[-5:0.44, 0: 0.05, 50: 0.51]$.
    \qee
\end{example}

\subsubsection{Evaluating Prospects Using Expected Utility Theory}\label{sec:2-prelim-eu}

When deciding which prospect to prefer, humans evaluate them intuitively; CPT is one model for this evaluation.
For better accessibility of the CPT-function, we first recall the well-known and simple definition of expected utility.
It is a function $\eu: \outcomeset^k \times [0,1]^k \to \Reals$ that takes a prospect and computes its expected value while also re-weighting every outcome with a utility function $u:\outcomeset \to \Reals$.
This utility function already allows capturing some parts of human behaviour, for example, the fact that the perceived gain between no money and one million Euros is larger than between the first and second million. 
Commonly, utility functions have an \enquote{S-shape}, as we exemplify in \Cref{fig:2-functions}. 
Formally, $\eu$ is defined on a prospect $\prosp = (\outcomevector,\probabilities)$ as follows:
\begin{equation}\label{eq:2-eu-app}
	\eu(\prosp)=\eu((\outcomevector,\probabilities)) = \sum_{i=1}^{k} \util(o_i) \cdot p_i
\end{equation}

\subsubsection{Evaluating Prospects Using Cumulative Prospect Theory}\label{sec:2-prelim-cpt-eval}
The CPT-function is more complicated than expected utility because it combines many particularities of human preference, see \cite{CPT92,wakker2010prospect}.
It uses both a utility function $\util$ and two probability weighting functions $\weightP$ and $\weightM$, which all 
are increasing (f is \emph{increasing} if for $x\geq x'$, $f(x)\geq f(x')$), continuous and exhibit an \enquote{S-shape} similar to that in \Cref{fig:2-functions}.
We depict and explain common choices in \ifarxivelse{\cite[App. A-A]{techreport}}{\Cref{app:2-u-and-w}}.
Here, we only highlight that the functions usually are non-convex and employ roots, and thus applying them can result in irrational numbers.
\begin{figure}
	\centering
		\centering
        \includegraphics[width=0.4\linewidth]{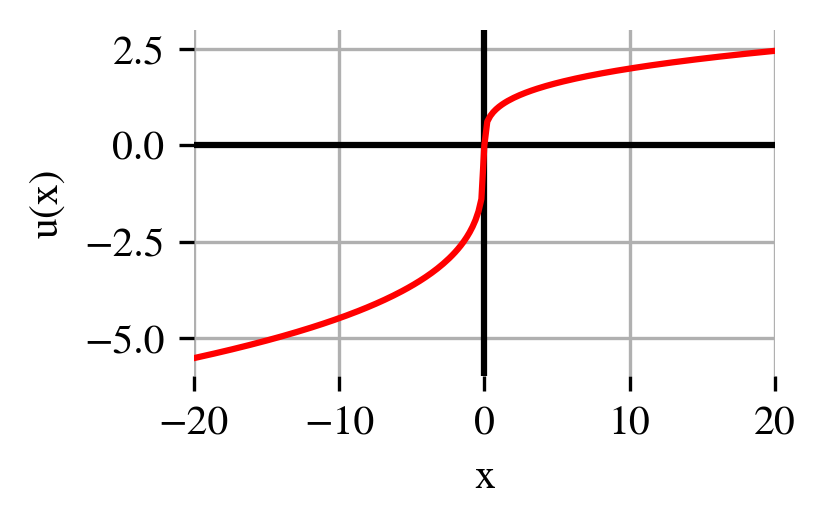}
	\caption{A depiction of a common utility function $\util$.
	}
	\label{fig:2-functions}
\end{figure}
The CPT-function $\cptfun: \outcomeset^k \times [0,1]^k \to \Reals$ maps a prospect $\prosp = (\outcomevector,\probabilities)$ to a real number as follows:
\begin{equation}\label{eq:2-cpt}
	\cptfun(\prosp) = \sum_{i=1}^{k}  \util(o_i) \cdot \decweight(\prosp, i),
\end{equation}
where $\decweight: (\outcomeset^k \times [0,1]^k) \times \Naturals\to \Reals$ is the \emph{decision weight} of the $i$-th outcome.
For classic expected utility theory, we have $\decweight(\prosp, i) \eqdef p_i$.
For CPT, $\decweight$ depends on a ranking of the outcomes and the probability weighting functions $\weightP$ and $\weightM$, and we provide the definition in \ifarxivelse{\cite[App. A-B]{techreport}}{\cref{app:2-CPT}}.
We highlight that for understanding the paper (except for some technical details in the proofs), it is sufficient to consider the function $\cptfun$ as a \textbf{black-box that maps prospects to reals}.

\subsection{Markov Systems and Their Semantics}\label{sec:2-markov-subsection}

\subsubsection{Markov Systems}

A \emph{Markov chain (MC)} (e.g.\ \cite{kulkarnimodeling-second-edition}) is a tuple $\MC = (\states,\initstate,\trans)$,
where $\states$ is a countable, non-empty set of \emph{states};
$\initstate\in\states$ is the \emph{initial state}; 
and $\trans \colon \states \to \Distributions(\states)$ is the \emph{transition function} that yields a probability distribution over successor states for each state.
$\MC$ is called finite if $\states$ is finite.

A (finite-state) \emph{Markov decision process (MDP)} (e.g.\ \cite{puterman}) is a tuple $\MDP = (\states, \initstate, \act, \trans)$, 
where $\states$ is a \emph{finite} set of states; 
$\initstate \in \states$ is the initial state; 
$\act$ denotes a finite set of \emph{actions} and we overload $\act$ to also assign to each state $s$ a non-empty set of \emph{available actions} $\act(s)$; 
and $\trans \colon \states \times \act \to \Distributions(\states)$ is the (partial) \emph{transition function} that for every state $s$ and available action $a \in \act(s)$ yields a distribution over successor states.

An MC is an MDP where every state has exactly one action, see~\cite[Cor. 3.1]{de1997formal}.
We use this in the following to define concepts only for MDPs, even though they also apply to MCs.
\cref{fig:motivating-example-MDP} shows our running example (\cref{ex:intro-motivate}) as an MDP.

\subsubsection{Semantics: Paths and Strategies}\label{app:2-mdp-semantics}

The \emph{semantics} of MDPs are defined as usual: The non-deterministic choices (which actions to choose) are resolved by means of a strategy (also called policy or scheduler).
This induces an MC with the respective probability space over infinite paths; we briefly recall this approach and refer to~\cite[Chap. 10]{BK08} for a more extensive description.
We use the following standard notation: For a set $S$, we denote by $S^*$ and $S^\omega$ the set of finite and infinite sequences comprised of elements of $S$, respectively.

In every state $s$, we choose an available action $a \in \act(s)$ and advance to a successor state $s'$ according to the probability distribution given by $\trans(s, a)$.
Starting in a state $s_0$ and repeating this process indefinitely yields an infinite path $\infinitepath = s_0 a_0 s_1 a_1 \dots \in (\states \times \act)^\omega$ such that (i) for every $i \in \Naturals_0$ we have $a_i \in \act(s_i)$ and (ii) $s_{i+1} \in \{s' \in \allstates \mid \trans(s_i, a_i)(s') > 0\}$.
A finite path is a prefix of an infinite path ending in a state, i.e.\ an element of $(\states \times \act)^\ast \times \states$ satisfying the conditions (i) and (ii).
$\Finitepaths_\MDP$ and $\Infinitepaths_\MDP$ denote the set of all such finite and infinite paths in a given MDP $\MDP$.
Further, we write $\infinitepath_i$ to denote the $i$-th state $s_i$ in a path.

A \emph{strategy} (also policy or scheduler) describes a way to choose actions.
In full generality, a strategy is a function $\strat \colon \Finitepaths_\MDP \to \Distributions(\act)$.
It samples an action from a distribution that is chosen depending on the history (the finite path observed so far).
We differentiate \emph{history-dependent} strategies from \emph{memoryless} strategies, whose choice depends only on the last state of the finite path.
We differentiate strategies that are \emph{randomized} from those where the distribution over actions is Dirac; these choose a single action surely and are called \emph{deterministic}.
We denote the set of all strategies by $\Strats$.

Resolving all nondeterministic choices by fixing a strategy $\strat$ yields a (countably infinite) Markov chain~\cite[Def. 10.92]{BK08}, denoted by $\MDP^{\strat}$.
This MC  induces a unique probability distribution $\probability_{\MDP}^{\strat}$ over the set of all infinite paths $\Infinitepaths_\MDP$ \cite[Chap.~10.1]{BK08} (where paths starting in $\initstate$ have measure 1).
Referring to the probability measure of an MC $\MC$, we omit the superscript and write $\probability_{\MC}$.

\subsubsection{Objectives}\label{sec:2-objective}

An \emph{objective} $\obj : \Infinitepaths_\MDP \to \Reals$ formalizes the \enquote{goal} of the player by assigning a value to each path.
In this paper, we mainly focus on \emph{weighted reachability (WR)} (e.g.,  \cite{everett1957recursive,KM18-cvar,AM09}, also called recursive payoff function or terminal reward).\footnote{Weighted reachability sometimes, e.g., in \cite{DBLP:conf/rp/BrihayeG23}, refers to minimizing the cost before reaching a target. Our techniques do not apply 
to this objective, as we discuss in 
\ifarxivelse{\cite[App. D-C]{techreport}}{\Cref{app:5-num-outcomes}}.\label{fn:2-weighted-reach}.}
Intuitively, this objective assigns rewards to some selected target states.
The value of a path is the reward of the first visited target state.
Formally, let $\targets\subseteq\states$ be a set of target states and $\rew \colon \targets \to \Rationals$ a reward function assigning rewards to them. 
We have $\objWR(\infinitepath) = \rew(\min_{i\in\Naturals} \{\infinitepath_i \mid \infinitepath_i \in \targets\})$ if such an $i$ exists and 0 otherwise. 
One can choose an arbitrary value instead of 0 for paths not reaching the target, see \Cref{sec:5-title}.

We assume w.l.o.g. that all target states are absorbing, i.e.\ for all $s\in\targets$ we have $\trans(s,a)(s)=1$ for all $a\in\act(s)$; this is justified because after reaching a target, the value of a path is fixed.
Further, we assume that $\rew(s) \neq 0$ for all $s\in\targets$, as the reward of 0 is also obtained by not reaching a target; this simplifies the notation of the concepts introduced in \cref{sec:3-new-cpt-value-title}.

In the following, we overload $\obj$ to denote both the tuple $(\targets, \rew)$ describing the weighted reachability objective as well as the induced function that maps infinite paths to reals.

Our results carry over to the mean payoff objective (e.g.~\cite{puterman}, also called long-run average reward), as we explain in \Cref{sec:5-title}.
In order to not detract from the key message, we delay the formal definition of mean payoff to that section.

\begin{figure}
	\centering
     \begin{tikzpicture}[xscale=1,yscale=0.7]
    \node[draw,circle, minimum size=0.5cm] (s0) at (0,0){$s_0$};

    \node[draw,circle,fill=black,inner sep=0pt,minimum size=5pt] (a1) at (-1.5,0){};
   
    \node[draw,circle,fill=black,inner sep=0pt,minimum size=5pt] (a2) at (1.7,0){};

    \node[draw,circle, minimum size=0.5cm] (s1) at (-3,-1.2){$s_1$};
    \node[draw,circle, minimum size=0.5cm] (s2) at (-0.5,-1.2){$s_2$};
    \node[draw,circle, minimum size=0.5cm] (s3) at (1,-1.2){$s_3$};
    \node[draw,circle, minimum size=0.5cm] (s4) at (3,-1.2){$s_4$};

    \node (s1c) at (-3,-2){$20$};
    \node (s2c) at (-0.5,-2){$0$};
    \node (s3c) at (1,-2){$-5$};
    \node (s4c) at (3,-2){$50$};

    \draw[->,thick] (-.75,0.75) -- (s0);
    \draw[->,thick] (s0) -- node[fill=white, inner sep=1pt] {$a_1$} (a1);
    \draw[->,thick] (s0) -- node[fill=white, inner sep=1pt] {$a_2$} (a2);

    \draw[->,thick] (a1) -- node[fill=white, inner sep=1pt] {$.95$} (s1);
    \draw[->,thick] (a1) -- node[fill=white, inner sep=1pt] {$.05$} (s2);

    \draw[->,thick] (a2) -- node[fill=white, inner sep=1pt] {$.05$} (s2);
    \draw[->,thick] (a2) -- node[fill=white, inner sep=1pt] {$.44$} (s3);
    \draw[->,thick] (a2) -- node[fill=white, inner sep=1pt] {$.51$} (s4);

\end{tikzpicture}
	\caption{\cref{ex:intro-motivate} as MDP. 
		We omit the self looping actions on states $s_1, \ldots, s_4$.}
	\label{fig:motivating-example-MDP}
\end{figure}
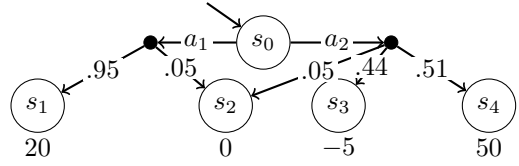

\begin{example}[Extended version in \ifarxivelse{\cite[App. A-C]{techreport}}{\cref{app:2-exs}}]\label{ex:2-running}
	We model \cref{ex:intro-motivate} as an MDP $\MDP=(\states,\initstate,\act,\trans)$ (shown in \Cref{fig:motivating-example-MDP}).
	We have $\states=\{s_0,s_1,s_2,s_3,s_4\}$, and $\act(s_0)=\{a_1,a_2\}$, with action $a_1$ representing the safe bet with a  95\% chance of 20€, $\trans(s_0,a_1)=(s_1\rightarrow 0.95, s_2\rightarrow 0.05)$, and $a_2$, the risky bet, modeled as $\trans(s_0,a_2)=(s_2\rightarrow 0.05, s_3\rightarrow 0.44, s_4\rightarrow 0.51)$.
	The set of target states is $\targets=\{s_1,s_3,s_4\}$ with rewards
    $r(s_1)=20$, $r(s_3)=-5$, $r(s_4)=50$.
	$s_2$ yields a reward of 0, because there is no path to a target state.
	
	The actions correspond to the following prospects: 
    $\prosp_1=[0:0.05,\;20:0.95]$ and $\prosp_2=[-5:0.44,\;0:0.05,\;50:0.51]$.
	Using the standard choices for $\util$ and $\decweight$ (see \ifarxivelse{\cite[App. A-A]{techreport}}{\cref{app:2-u-and-w}}), we compute 
    $\cptfun(\prosp_1) \approx 11.07$ and $\cptfun(\prosp_2) \approx 10.19$.
    Using the expected utility function $\eu$ with $\util(x)=x$, we get 
    $\eu(\prosp_1)=19$ and $\eu(\prosp_2)=23.3$.
	Thus, expected utility theory recommends choosing $a_2$ while CPT predicts that most humans prefer $a_1$. 

    For a truly sequential example, we extend this by finding two coupons for bets.
    We depict in \ifarxivelse{\cite[Fig. 7]{techreport}}{\Cref{fig:app-motivating-example-MDP-sequential} in \cref{app:2-exs}} the MDP, where we have the same choice between $a_1$ and $a_2$ a second time in each of the states $s_1,s_2,s_3,s_4$.
    For each combination of decisions between safe and risky, we get a prospect.
    Evaluating both expected utility $\eu$ and CPT $\cptfun$, we see that expected utility chooses the risky bet twice, whereas CPT recommends the safe and risky bet each once.
	\qee
\end{example}

\subsection{CPT-value and Problem Statement}\label{sec:2-problem-statement}

\subsubsection{Expected Value and CPT-value}\label{sec:2-expval-and-cptval}
The definition of the value of an MDP under CPT is quite complicated.
For easier accessibility, we first recall the standard way, i.e.\ the value using expected utility theory.
Observe that $\obj(\infinitepath)$ is a random variable; after choosing a strategy $\strat\in\Strats$, it is distributed according to the probability measure $\probability_{\MDP}^{\strat}$.
Then, the value under expected utility theory is the expectation of this random variable, where the expectation over a countable set $X$ under a probability measure $\probability$ is defined as usual: $\sum_{x\in X} x \cdot \probability[x]$.
Denoting the expectation under the probability measure as $\Expectation_{\MDP}^{\strat}$, we define the value as follows:
\begin{equation}
	\val(\MDP,\obj) \eqdef \sup_{\strat\in\Strats} \Expectation_{\MDP}^{\strat}[\util(\obj(\infinitepath))]\label{eq:2-eu-val}
\end{equation}
Since one can rescale the rewards and use $\util(x)=x$, the utility function is often omitted.

For the CPT-value, we replace the expectation with the CPT-function.
Here, we only show the definition that mimics the style of other papers considering MDPs with CPT, e.g.~\cite[Chap. 3.7]{DBLP:journals/ftml/A022}.
However, we delegate the extensive explanation to \ifarxivelse{\cite[App. A-D]{techreport}}{\cref{app:2-cptval}}, as our first contribution is to replace it with a more intuitive and easier-to-read definition in \cref{sec:3-new-cpt-value}.
We define positive and negative outcomes as $\outcomes^+(\objWR) \eqdef \{ o\in \outcomes(\objWR)\mid o>0\}$, and analogously $\outcomes^-(\objWR)$. 
\vspace{-1em}
{\par\nopagebreak\small%
\begin{align}
	\scriptsize
	\label{eq:2-cpt-val}
	&\CPTval(\MDP,\obj) \eqdef \sup_{\strat\in\Strats} \bigg(\\
	&
	\sum_{o \in \outcomes^+(\obj)} \util(o) \cdot \left( \weightP\left( \probability_{\MDP}^{\strat}(\obj(\infinitepath) \geq o) \right) - \weightP\left( \probability_{\MDP}^{\strat}(\obj(\infinitepath) > o) \right)\right) + \nonumber \\
	&\sum_{o \in \outcomes^-(\obj)} \util(o) \cdot \left(\weightM\left( \probability_{\MDP}^{\strat}(\obj(\infinitepath) \leq o) \right) - \weightM\left( \probability_{\MDP}^{\strat}(\obj(\infinitepath) < o) \right)\right) \nonumber
	\bigg)
\end{align}
}%

\subsubsection{Problem statement}\label{sec:2-problem-statement-formal}

We formalize our problem statement as follows:
\medskip

\noindent\framebox[\linewidth]{\parbox{0.95\linewidth}{
		\underline{\textbf{CPT-value approximation problem}}
		
		\textbf{Given} an MDP or MC $\MDP$, a weighted reachability objective $\obj$, a precision $\varepsilon>0$ and a threshold $v$.
		
		\textbf{Return} $
		\begin{cases}
			\mathsf{true} &\mbox{if } \CPTval(\MDP,\obj) > v+\varepsilon\\
			\mathsf{false} &\mbox{if } \CPTval(\MDP,\obj) < v-\varepsilon\\
			\text{arbitrary} &\mbox{otherwise}
		\end{cases}
		$
}}

\begin{restatable}[Exact computation]{remark}{remarkExact}
	\label{rem:2-approx}
	There are two reasons why we are only interested in approximating the CPT-value, namely that an exact computation is unreasonable and infeasible.
	It is unreasonable because the parameters of the CPT-function, e.g.\ the utility function $\util$ and the decision weight $\decweight$, are only empirically estimated by analyzing the behaviour of many people~\cite[Chap. 9.5]{wakker2010prospect}.
	Thus, naturally, they are only approximations of the \enquote{true} functions for a particular person.
	It is infeasible because evaluating the CPT-function on a prospect can result in irrational numbers, as, for example, the utility function uses roots, see \ifarxivelse{\cite[App. A-A]{techreport}}{\Cref{app:2-u-and-w}}.
\end{restatable}

\subsection{Technical Notions Required for the Solution}\label{sec:2-further}

For the analysis of MDPs, the following two graph theoretical concepts are of great importance: \emph{strongly connected components} and \emph{maximal end components}.

\subsubsection{Strongly Connected Components}\label{sec:2-graph-components}

A non-empty set of states $C \subseteq \states$ in an MDP is \emph{strongly connected} if for every pair $s, s' \in C$ there is a non-empty finite path from $s$ to $s'$.
Such a set $C$ is a \emph{strongly connected component} (SCC) if it is inclusion maximal, i.e.\ there exists no strongly connected $C'$ with $C \subsetneq C'$.
SCCs are disjoint, so each state belongs to at most one SCC.
An SCC is \emph{bottom} (BSCC) if additionally no path leads out of it, i.e.\ for all $s \in C$, $a\in\act(s)$, $s' \in S \setminus C$ we have $\trans(s, a, s') = 0$.
The set of BSCCs of an MDP $\MDP$, denoted $\BSCCs(\MDP)$, can be determined in linear time~\cite{DBLP:journals/siamcomp/Tarjan72}.

A state might not be in a BSCC in an MDP, but in a BSCC in the induced MC under some strategy, see \cref{ex:3-stopping}.

\subsubsection{Maximal End Components}\label{sec:2-end-components}

A pair $(T, B)$, where $\emptyset \neq T \subseteq \states$ and $\emptyset \neq B \subseteq \Union_{s \in T} \act(s)$, is an \emph{end component (EC)} of an MDP $\MDP$ if (i)~for all $s \in T, a \in B \intersection \act(s)$ we have $\{s' \mid \trans(s, a,s')>0\} \subseteq T$, and (ii)~for all $s, s' \in T$ there is a finite path $\infinitepath = s a_0 \dots a_n s' \in (T \times B)^* \times T$, i.e.\ the path stays inside $T$ and only uses actions in $B$.
Intuitively, an EC describes a set of states where for some particular strategy all possible paths remain inside these states (i.e.\ under this strategy, the EC is a BSCC in the induced MC).
An EC $(T, B)$ is a \emph{maximal end component (MEC)} if there is no other EC $(T', B')$ such that $T \subseteq T'$ and $B \subseteq B'$.
The set of MECs in a subset $\states' \subseteq \states$, denoted by $\MECs_\MDP(\states')$, can be computed in polynomial time~\cite{DBLP:journals/jacm/CourcoubetisY95}.

\subsubsection{Multi-Objective Reachability Queries}\label{sec:2-MO-query}
For our technical reasoning, we additionally require \emph{multi-objective} (non-weighted) reachability queries, see e.g.~\cite{EKVY08}.
These are given as an n-tuple of target sets $\objMO = (\targets_1,\ldots,\targets_n)$.
Intuitively, we want to compute the optimal probability to reach all of these target sets; however, there is no single optimal probability but multiple incomparable points as they trade off reaching one goal with reaching another.

\begin{figure}
	\centering
	\begin{subfigure}{0.49\linewidth}
		\centering
		 \begin{tikzpicture}[scale=1]
    \node[draw,circle, minimum size=0.5cm] (s0) at (0,0){$s_0$};

    \node[draw,circle,fill=black,inner sep=0pt,minimum size=5pt] (b) at (.75,1){};

    \node[draw,circle,fill=black,inner sep=0pt,minimum size=5pt] (a) at (1,0){};
   
    \node[draw,circle,fill=black,inner sep=0pt,minimum size=5pt] (c) at (.75,-1){};

    \node[draw,circle, minimum size=0.5cm,fill=blue!50!white] (s1) at (2,1.3){$s_1$};
    \node[draw,circle, minimum size=0.5cm] (s2) at (2.5,0){$s_2$};
    \node[draw,circle, minimum size=0.5cm,fill=red!30!white] (s3) at (2,-1.3){$s_3$};

    \draw[->,thick] (-.75,0) -- (s0);
    \draw[->,thick] (s0) -- node[fill=white, inner sep=0pt] {$a$} (a);
    \draw[->,thick] (s0) -- node[fill=white, inner sep=0pt] {$b$} (b);
    \draw[->,thick] (s0) -- node[fill=white, inner sep=0pt] {$c$} (c);

    \draw[->,thick] (b) -- node[fill=white, inner sep=0pt] {$.8$} (s1);
    \draw[->,thick] (b) -- node[pos=.3,fill=white, inner sep=0pt] {$.2$} (s2);

    \draw[->,thick] (a) edge[bend right] node[pos=.2,fill=white, inner sep=0pt] {$.5$} (s1);
    \draw[->,thick] (a) edge[bend left] node[pos=.2,fill=white, inner sep=0pt] {$.5$} (s3);
    
    \draw[->,thick] (c) -- node[pos=.3,fill=white, inner sep=0pt] {$.2$} (s2);
    \draw[->,thick] (c) -- node[fill=white, inner sep=0pt] {$.8$} (s3);

\end{tikzpicture}
		\caption{}
		\label{fig:mdp-for-pareto}
	\end{subfigure}
	\begin{subfigure}{0.49\linewidth}
		\centering
		 \begin{tikzpicture}[scale=.45]
    \draw[fill=black!10!white](0,0)--(0,3.2)--(2,2)--(3.2,0)--(0,0);

    \draw[->,thick] (-.5,0) -- (5,0);
    \draw[->,thick] (0,-.5) -- (0,5);

    \draw[thick] (4,-.25) -- (4,.25);
    \draw[thick] (-.25,4) -- (.25,4);

    \node (1a) at (4.2,-.6){$1$};
    \node (1b) at (-.6,4){$1$};

    \node (d1a) at (5.5,-.6){$\diamond \{s_1\}$};
    \node (d1b) at (-.6,5.5){$\diamond \{s_3\}$};

    \node (f1n) at (-1.35,3.2){$\textcolor{green!80!black}{(0,0.8)}$};
    \node (f2n) at (2.95,2.6){$\textcolor{green!80!black}{(0.5,0.5)}$};
    \node (f3n) at (2.8,-.6){$\textcolor{green!80!black}{(0.8,0)}$};

    \node[draw=green!80!black,circle,fill=green!80!black,inner sep=0pt,minimum size=5pt] (f1) at (0,3.2){};
    \node[draw=green!80!black,circle,fill=green!80!black,inner sep=0pt,minimum size=5pt] (f2) at (2,2){};
    \node[draw=green!80!black,circle,fill=green!80!black,inner sep=0pt,minimum size=5pt] (f3) at (3.2,0){};

    \draw[thick,green!80!black](0,3.2)--(2,2)--(3.2,0);


\end{tikzpicture}
		\caption{}
		\label{fig:pareto-front}
	\end{subfigure}
	
	\caption{(a) An example MDP. (b) Pareto frontier and achievable points for the multi-objective query $\objMO = (\textcolor{blue!90!white}{\{s_1\}},\textcolor{red!50!white}{\{s_3\}})$.}
\end{figure}
\begin{example}
	Consider the MDP in \cref{fig:mdp-for-pareto} with the multi-objective query 
	$\objMO$ = (\textcolor{blue!90!white}{$\{s_1\}$},\textcolor{red!50!white}{$\{s_3\}$}),
	i.e.\ we want to maximize our chances to reach the red and blue states.
    $s_2$ is not in any target set of our multi-objective query $\objMO$.
	Only action $b$ lets us achieve the point $(0.8,0)$, i.e.\ we reach a blue state with $0.8$ and a red state with probability $0$.
	Conversely, by playing only action $c$, we achieve the point $(0,0.8)$, and by playing action $a$, we achieve the point $(0.5,0.5)$.
	Alternatively, we can consider randomized strategies that schedule several actions with a certain probability. 
	For example, a strategy playing $a$ and $b$ with equal probability 
    yields the point $(0.25, 0.65)$.
	All of these points are Pareto optimal, i.e.\ we cannot improve one dimension without worsening another. Thus, all of them are part of the solution of the multi-objective query.
	\qee
\end{example}
Given a multi-objective reachability query $\objMO$, we want to compute the set of achievable points, i.e.\ 
\[
\achievable_{\MDP}^{\objMO} \eqdef \{(p_1,\ldots,p_n) \mid \exists \strat \in \Strats. \forall i\in [1,n]. p_i \leq \probability_{\MDP}^{\strat}[\reach \targets_i]\},
\]
where $\reach \targets \eqdef \{ \infinitepath\in\Infinitepaths \mid \exists j\in \Naturals. \infinitepath_j \in \targets \}$ denotes the set of paths that reach a target set $\targets$.
Moreover, the \enquote{border} of this set is of interest, i.e.\ the Pareto optimal points; formally, the \emph{Pareto frontier} (also called Pareto curve~\cite{EKVY08}) is the set 
\[
\pf_{\MDP}^{\objMO} \eqdef \{ u \in \achievable_{\MDP}^{\objMO} \mid \neg\exists v\in\achievable_{\MDP}^{\objMO}. u\neq v \wedge u\leq v \}.
\]
\begin{example}
	Continuing the previous example, we show the Pareto frontier and achievable set in \Cref{fig:pareto-front}: The Pareto frontier is the green line and the set of achievable points is the area shaded grey, i.e.\ the polytope obtained from connecting the extreme points of the Pareto frontier and the origin point with 0 in every dimension.
	
	Even in this simple example, the Pareto frontier contains infinitely many points.
	We can obtain a finite representation by only considering the extreme points whose convex combinations make up the Pareto frontier.
	In our example, these extreme points are $\{(0.8,0),\,(0,0.8),\,(0.5,0.5)\}$.
	\qee
\end{example}

We denote by $\cp(\pf_{\MDP}^{\objMO})$ the set of all extreme points of a given Pareto frontier $\pf_{\MDP}^{\objMO}$.
Intuitively, these are the corners of the polytope that is the set of achievable points.
Formally, it is the set of all points on the Pareto frontier which are not a convex combination of other points on the Pareto frontier.

\begin{restatable}[Complexity of Pareto Frontier]{remark}{remarkPareto}
\label{rem:2-pareto-computation}
	While computing a Pareto frontier, in general, is $\PSPACE$-hard and requires exponential memory~\cite[Thm. 2]{DBLP:journals/fmsd/RandourRS17}, in the case of absorbing targets (i.e.\ for all $s\in\targets_i$ we have $\trans(s,a)(s)=1$ for all $a\in\act(s)$), it can be computed in $\PTIME$ by solving a multi-objective linear program and memoryless randomized strategies are sufficient~\cite[Thm. 3.2]{EKVY08}.
	We will reduce our problem to the latter case, i.e.\ with absorbing targets.
\end{restatable}
While the size of the exact Pareto frontier can be super-polynomial~\cite[Theorem 2.1]{EKVY08}, an $\varepsilon$-approximation containing polynomially many extreme points and can be computed in polynomial time~\cite[Cor. 3.5]{EKVY08}.
 
	\section{Mathematical characterization}\label{sec:3-title}
Throughout the paper, we often fix a weighted reachability (WR) objective $\objWR$ with the target set $\targets$ and reward function $\rew$,
and then talk about the associated vector $k$-vector of outcomes $\vec{o} = (o_1,\ldots,o_k)$.
We formalize this intuitive notion:

Firstly, this uses the fact that we overload $\obj$ to denote both the tuple $(\targets, \rew)$ describing the weighted reachability objective as well as the induced function that maps infinite paths to reals.
Secondly, the set of associated outcomes is formally defined as $\outcomes(\objWR) \eqdef \{ o \mid \exists s\in\targets. \rew(s) = o\} \cup \{0\}$.
Intuitively, the set of outcomes is the image set of the reward function $\rew$, i.e.\ all numbers that some state maps to.
Finally, we comment on a technicality: When applying a history-dependent strategy on an MDP, the set of states in the induced MC are all paths in the MDP~\cite[Def. 10.92]{BK08}. Thus, the set of target states becomes the set of all paths that contain a target state. 
We do not make this transformation explicit in the paper, but overload $\targets$ to always refer to the correct set of target states. Note that the set of outcomes is unaffected by this transformation.

\subsection{Alternative Definition of CPT-value}\label{sec:3-new-cpt-value-title}

The hitherto existing definition for CPT-value of an MDP (\cref{eq:2-cpt-val}) is quite complicated.
We propose a new, equivalent definition that is more intuitive.
Further, the new definition immediately leads to an algorithm for MCs (\cref{sec:4-mc-algo}).
Additionally, it allows us to establish a connection between the CPT-value in MDPs and multi-objective reachability queries (\Cref{sec:3-relation-cpt-multi-objective}).
We exploit this connection for the strategy complexity result and the algorithm for MDPs (\cref{sec:4-mdp-algo}).

\para{The Prospect Induced by a Markov Chain.}
The functions for computing both expected utility and CPT are defined on prospects, see \Cref{sec:2-main-cpt}.
Our goal is to interpret an MC (the result of fixing a strategy) as a prospect.
Then, we can evaluate strategies by applying the function given in \Cref{eq:2-cpt}, instead of mixing its internals into the value definition (as in \Cref{eq:2-cpt-val}).

So what is the prospect induced by a Markov chain $\MC$?
Intuitively, we want to compute the probability of reaching each of the absorbing target states; this yields a prospect because it gives us the probability for every outcome.
We need to consider two technicalities: 
Firstly, there can be multiple target states with the same reward, so their probabilities need to be summed up.
Secondly, a path in the MC can reach a state from which no target state is reachable anymore. Such a path eventually reaches a non-target \emph{bottom strongly connected component} (BSCC, see \cref{sec:2-graph-components}). 
The probability of these kinds of paths has to be summed up for the ourcome 0.
To formalize this, we define the set of states that obtain a certain outcome $o_i$, using a case distinction to take care of outcome~0.

\begin{equation}
    \obtainset(\MC,\objWR,o_i) \eqdef 
    \begin{cases}
    \{s\in\targets \mid \rew(s) = o_i\} &\mbox{ if } o_i\neq 0\\
    \bigcup_{C \in \BSCCs(\MC). C\cap \targets = \emptyset} C  &\mbox{ if } o_i=0
    \end{cases}
    \label{eq:3-obtainset}
\end{equation}

Using this and denoting the associated outcome vector of $\obj$ by $\vec{o}$, we define the induced prospect where the probability of each outcome is the probability to reach its $\obtainset$:
\begin{align}   
	&\prospect(\MC,\obj) \eqdef (\vec{o},\vec{p}), \nonumber \\
	&\hspace{0.5cm}\text{where for } 1\leq i \leq k: p_i = 
	\probability_{\MC}[\reach \obtainset(\MC, \obj,o_i)] \label{eq:3-induced-prospect}
\end{align} 

\begin{restatable}[Correctness of Induced Prospect]{lemma}{defsMakeSense}\label{lem:3-defs-make-sense}
	For every MC $\MC$ and WR-objective $\obj$, it holds that $\probability_{\MC}[\obj(\infinitepath)=o_i] = p_i$, where $(\vec{o},\vec{p}) = \prospect(\MC,\obj)$.
\end{restatable}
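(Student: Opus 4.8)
The statement claims that the probability assigned to outcome $o_i$ by the induced prospect, namely $p_i = \probability_{\MC}[\reach \obtainset(\MC,\obj,o_i)]$, coincides with the probability that the weighted-reachability objective $\obj$ takes value exactly $o_i$ along a random path. Let me think about whether this is essentially a definitional unfolding or whether there's real content.

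Recall $\obj(\infinitepath) = \rew(\infinitepath_j)$ where $j$ is the first index with $\infinitepath_j \in \targets$, and $\obj(\infinitepath) = 0$ if no target is ever reached. Targets are absorbing and $\rew(s) \neq 0$ for all $s \in \targets$ (both assumed w.l.o.g. in the preliminaries). So the two cases of $\obtainset$ — $o_i \neq 0$ versus $o_i = 0$ — really do need separate treatment.

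**Case $o_i \neq 0$.** Here $\obtainset(\MC,\obj,o_i) = \{s \in \targets \mid \rew(s) = o_i\}$. I'd argue: $\obj(\infinitepath) = o_i$ iff the first target state visited by $\infinitepath$ has reward $o_i$. Since all target states are absorbing, once a path enters $\targets$ it stays in that state forever; hence "first target visited has reward $o_i$" is equivalent to "$\infinitepath$ reaches the set $\{s \in \targets : \rew(s) = o_i\}$" — because reaching *any* state of reward $o_i$ means (by absorption, and by the fact that target states of different rewards are distinct absorbing states that cannot be visited after entering an $o_i$-target) that the first target hit had reward $o_i$. More carefully: if $\infinitepath$ reaches some $s \in \targets$ with $\rew(s) = o_i$, let $j$ be that hitting time; any target visited strictly before would have had to be left, contradicting absorption. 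So the first target is $s$ and $\obj(\infinitepath) = o_i$. Conversely if $\obj(\infinitepath) = o_i \neq 0$ then a target is reached and its reward is $o_i$, so $\infinitepath \in \reach\obtainset$. Thus the two events are literally the same set of paths, giving equality of probabilities.

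**Case $o_i = 0$.** Here $\obtainset = \bigcup_{C \in \BSCCs(\MC),\, C \cap \targets = \emptyset} C$. I need: $\obj(\infinitepath) = 0$ iff $\infinitepath$ reaches some non-target BSCC. Since $\rew(s) \neq 0$ on every target, $\obj(\infinitepath) = 0$ iff $\infinitepath$ never visits $\targets$. The key probabilistic fact (standard for finite Markov chains, e.g. \cite[Chap.~10]{BK08}) is that almost every path eventually enters and stays in some BSCC. So $\probability_{\MC}$-a.s., a path either (a) reaches a target (and by absorption stays there — and a single absorbing target state is itself a BSCC, but it *does* intersect $\targets$), or (b) reaches a BSCC disjoint from $\targets$. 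These are a.s. exhausting and mutually exclusive modulo a null set. Hence $\{\obj(\infinitepath) = 0\} = \{\infinitepath \text{ never hits } \targets\}$ agrees up to a null set with $\{\infinitepath \text{ reaches a non-target BSCC}\} = \reach\obtainset(\MC,\obj,0)$, so the probabilities coincide. One subtlety worth a sentence: I should note that a non-target BSCC, once entered, is never left, so a path reaching it indeed never hits $\targets$ afterward; combined with the fact that on the prefix before entering it no target was hit (else, by absorption, the path would be stuck in $\targets$, not in the non-target BSCC), we get that reaching a non-target BSCC implies $\obj(\infinitepath) = 0$.

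**Main obstacle.** The only non-routine point is the $o_i = 0$ case, specifically invoking the standard "a.s. absorption into some BSCC" theorem for countable/finite Markov chains and arguing cleanly that the target-avoiding event and the non-target-BSCC-reaching event differ by a null set. For $\MC$ arising as $\MDP^\strat$ with a history-dependent $\strat$ the chain is countably infinite, so I'd either restrict to the finite-memory case relevant to the paper or cite the appropriate general result; the overloading of $\targets$ and $\BSCCs$ to the path-space chain (as flagged just before the lemma) needs to be kept consistent, but the outcome set is unchanged, so the argument goes through verbatim. Everything else is bookkeeping around the absorbing-target and nonzero-reward conventions.
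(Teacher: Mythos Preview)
Your proposal is correct and follows essentially the same approach as the paper: a case split on $o_i \neq 0$ versus $o_i = 0$, using the absorbing-target assumption for the former and the almost-sure BSCC-absorption theorem for the latter. The paper phrases the zero case via the complement $1 - \probability_{\MC}[\reach \targets]$ rather than your null-set argument, and it does not explicitly flag the countable-chain subtlety you raise, but the substance is identical.
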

\begin{proof}
	This proof works by straightforward application of the definition of $\obtainset$ and induced prospect.
	By \Cref{eq:3-induced-prospect}, we have 
	$p_i = \probability_{\MC}[\reach \obtainset(\MC, \obj,o_i)]$.
	We now distinguish the cases whether the outcome $o_i$ is 0 or not.
	
	\textbf{Case 1:} For $o_i\neq0$, by \Cref{eq:3-obtainset}
	we have $\obtainset(\MC,\objWR,o_i) = \{s\in\targets \mid \rew(s) = o_i\}$.
	Since all non-target states have a reward of 0 and all target states are absorbing, 
    we can only obtain reward $o_i$ by reaching its $\obtainset$. 
	Thus, we conclude by stating:
	\[
	\probability_{\MC}[\obj(\infinitepath)=o_i] = 
	\probability_{\MC}[\reach \obtainset(\MC, \obj,o_i)].
	\]
	
	\textbf{Case 2:} For $o_i=0$, note that all paths not reaching a target state have reward 0.
	Recall we assume w.l.o.g. that no target state has a reward of 0, because removing such a state from the target set does not change the reward of a path reaching it.
	(Note that if we did not use this assumption, the $\obtainset$ for outcome 0 would also include all target states $s$ with $\rew(s)=0$; we make the assumption to avoid this additional notation in the definition of $\obtainset$.)
	From~\cite[Thm. 3.2 and Cor. 3.1]{de1997formal}, we have that every path in an MC will end up in a BSCC with probability 1.
	Using this and the fact that all target states are BSCCs (as they are absorbing by assumption), we have
	\begin{equation}
		1 = \probability_{\MC}[\reach \targets] + \probability_{\MC}[\reach \{C \in \BSCCs(\MDP^\strat) \mid C\cap \targets = \emptyset\}]\label{eq:3-bscc-with-prob-1}        
	\end{equation}
	Now we conclude as follows:
	\begin{align*}
		&\probability_{\MC}[\obj(\infinitepath)=0] = 
		1 - \probability_{\MC}[\reach \targets] \tag{By definition of $\obj$.}\\
		&=
		\probability_{\MC}[\reach \{C \in \BSCCs(\MDP^\strat) \mid C\cap T = \emptyset\}] \tag{By \Cref{eq:3-bscc-with-prob-1}} \\
		&= \probability_{\MC}[\reach \obtainset(\MC, \obj,0)]. \tag{By \Cref{eq:3-obtainset}}
	\end{align*}
	
\end{proof}

\para{New Equivalent CPT-value Definition.}\label{sec:3-new-cpt-value}
Using the prospect induced by an MC, we simplify the definition of CPT-value.
For an MDP $\MDP$, we fix a strategy to get an MC and then apply the function $\cptfun$ on the prospect induced by this MC.
The CPT-value is the supremum over all strategies.

\begin{align}
    \widehat{\CPTval}(\MDP,\obj) \eqdef\sup_{\strat\in\Strats}\cptfun(\prospect(\MDP^\strat,\obj)) \label{eq:3-cpt-value}
\end{align}

\begin{restatable}[Equivalence of Definitions]{theorem}{defsAlign}\label{thm:3-defs-align}
    The definitions of CPT-value in \Cref{eq:2-cpt-val,eq:3-cpt-value} coincide.
    Formally:
    \[
        {\CPTval}(\MDP,\obj) = \widehat{\CPTval}(\MDP,\obj).
    \]
\end{restatable}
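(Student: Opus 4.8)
The plan is to establish the identity strategy by strategy: fix an arbitrary $\strat \in \Strats$, let $\MC = \MDP^\strat$ be the induced Markov chain, and show that $\cptfun(\prospect(\MC,\obj))$ coincides with the bracketed expression inside the supremum in \Cref{eq:2-cpt-val} evaluated at $\strat$. Taking the supremum over all $\strat$ on both sides then yields $\widehat{\CPTval}(\MDP,\obj) = \CPTval(\MDP,\obj)$. Since \Cref{eq:2-cpt-val} ranges over history-dependent strategies, whose induced Markov chains have paths as states, we invoke the convention from the start of \cref{sec:3-title}: passing to the path-MC leaves the outcome set unchanged and we overload $\targets$ accordingly, so that \Cref{lem:3-defs-make-sense} applies to $\MC = \MDP^\strat$ without modification.

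The computational heart is a purely local manipulation. Write $(\vec{o},\vec{p}) = \prospect(\MC,\obj)$ with $o_1 < o_2 < \cdots < o_k$ the increasingly ordered outcomes of $\obj$ (pairwise distinct, since $\outcomeset$ is a set). By \Cref{lem:3-defs-make-sense}, $\probability_{\MC}[\obj(\infinitepath) = o_i] = p_i$ for all $i$, so the cumulative distribution functions match:
\begin{align*}
\probability_{\MC}[\obj \geq o_i] = \textstyle\sum_{j \geq i} p_j,\quad \probability_{\MC}[\obj > o_i] = \textstyle\sum_{j > i} p_j,\\
\probability_{\MC}[\obj \leq o_i] = \textstyle\sum_{j \leq i} p_j,\quad \probability_{\MC}[\obj < o_i] = \textstyle\sum_{j < i} p_j.
\end{align*}
Now expand $\cptfun(\prospect(\MC,\obj)) = \sum_{i=1}^{k} \util(o_i)\cdot\decweight(\prospect(\MC,\obj),i)$ as in \Cref{eq:2-cpt}, and split the sum by the sign of $o_i$. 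The summand for the unique index with $o_i = 0$ drops out because of the reference-point normalization $\util(0)=0$ (see \ifarxivelse{\cite[App. A-A]{techreport}}{\Cref{app:2-u-and-w}}), which is exactly why \Cref{eq:2-cpt-val} only sums over $\outcomes^+(\obj)$ and $\outcomes^-(\obj)$. For an index $i$ with $o_i > 0$, the rank-dependent decision weight (defined in \ifarxivelse{\cite[App. A-B]{techreport}}{\cref{app:2-CPT}}) is $\decweight(\prospect(\MC,\obj),i) = \weightP(\sum_{j\geq i} p_j) - \weightP(\sum_{j>i} p_j)$, which by the displayed identities equals $\weightP(\probability_{\MC}[\obj\geq o_i]) - \weightP(\probability_{\MC}[\obj > o_i])$; summing $\util(o_i)$ times this over all positive $o_i$ reproduces verbatim the first sum of \Cref{eq:2-cpt-val}. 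Symmetrically, for $o_i < 0$ the decision weight is $\weightM(\sum_{j\leq i} p_j) - \weightM(\sum_{j<i} p_j) = \weightM(\probability_{\MC}[\obj\leq o_i]) - \weightM(\probability_{\MC}[\obj < o_i])$, yielding the second sum. Hence $\cptfun(\prospect(\MDP^\strat,\obj))$ equals the bracketed expression in \Cref{eq:2-cpt-val} at $\strat$, and taking suprema closes the argument.

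I expect the only genuine work to be bookkeeping: carefully unfolding the appendix definition of $\decweight$ and checking that its ranking convention (outcomes ordered increasingly, gains ranked from the top outcome downward, losses from the bottom upward, with $\weightP(0)=\weightM(0)=0$) is aligned with the ordering of $\vec{o}$ fixed in $\prospect$. One auxiliary remark worth including is that outcomes of $\obj$ receiving probability $0$ under $\strat$ contribute a zero decision weight (since then $\sum_{j\geq i}p_j = \sum_{j>i}p_j$, resp.\ $\sum_{j\leq i}p_j = \sum_{j<i}p_j$), so it is harmless that the fixed outcome vector $\vec{o}$ can be strictly larger than the support of the induced prospect — both definitions already quantify over the same fixed outcome universe, so this is merely a consistency check rather than a gap in the reduction.
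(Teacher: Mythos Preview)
Your proposal is correct and follows essentially the same route as the paper: fix a strategy, unfold $\cptfun$ via \Cref{eq:2-cpt}, invoke \Cref{lem:3-defs-make-sense} to identify $p_i$ with $\probability_{\MC}[\obj=o_i]$, rewrite the decision weights as differences of weighted tail probabilities, and take the supremum. The one cosmetic discrepancy is the handling of the $o_i=0$ summand: you discard it using $\util(0)=0$, whereas the paper's definition of $\decweight$ (in \ifarxivelse{\cite[App.~A-B]{techreport}}{\cref{app:2-CPT}}) sets the decision weight itself to $0$ in that case and the proof appeals to that directly; either justification works, but citing the $\decweight$ clause is slightly more robust since it does not depend on the particular choice of utility function.
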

\begin{proof}[Proof Sketch]
	This proof is mostly by definition.
	However, the definition of the CPT-function, in particular that of the decision weights, is quite involved, making it rather long, so we delegate it to \ifarxivelse{\cite[App. B-A]{techreport}}{\cref{app:3-proof-thm-defs-align}}.
	For instructive purposes, we also show that the equality holds in the special case of expected utility.
	The proof first unfolds the definition of $\widehat{\CPTval}$ and of the CPT-function $\cptfun$, and then applies \cref{lem:3-defs-make-sense} to show that the probabilities appearing in the induced prospect are exactly those used in \cref{eq:2-cpt-val}.
\end{proof}

Using \cref{thm:3-defs-align}, from now on we just write $\CPTval$ even when referring to \Cref{eq:3-cpt-value}.
Further, the theorem allows us to view the CPT-function as a black-box that is applied to a prospect and not worry about its internals; \Cref{sec:5-title} discusses the resulting advantages in more detail.

\para{Key contribution.}
The definition of the induced prospect in \cref{eq:3-induced-prospect} is the basis of all our results. 
Our algorithm for MCs (\cref{sec:4-mc-algo}) works by computing the induced prospect and then applying the CPT-function to it.
Our algorithm for MDPs intuitively works by concisely representing the set of all prospects that can be induced by some strategy in the MDP and then picking the optimum.

\subsection{Reduction to Stopping Markov Decision Processes}\label{sec:3-stopping}

For our reasoning in \Cref{sec:3-relation-cpt-multi-objective}, we want to employ the classical assumption that the MDPs we consider are \emph{stopping}, i.e.\ they reach a target state or a dedicated sink state almost surely.
Formally, we add a new state $\sink$ having a reward of \textbf{z}ero; 
then we require for every $\strat\in\Strats$ that 
$\probability_{\MDP}^{\strat}[\reach \targets \cup \{\sink\}] = 1$.

Intuitively, we merge all the BSCCs that cannot reach a target anymore (the $\obtainset$ for outcome 0) into a single state to more easily detect when a run obtains 0.
However, doing this in MDPs presents a complication:

\begin{example}\label{ex:3-stopping}
	Consider the simple MDP in \Cref{fig:3-stopping}.
    In state $s$, one can choose 
    to self-loop or to go to $t$.
	Thus, depending on the strategy, either $s$ is in the $\obtainset$ of outcome 0, or it is not in any $\obtainset$.
	Note that in the first case, the resulting MC would not have the arrow for the action b, and $s$ would be a BSCC in the induced MC.
	\qee
\end{example}

\begin{figure}[t]
	\centering
	 \begin{tikzpicture}[scale=1.4]
    \node[draw,circle, minimum size=0.5cm] (s0) at (0,0){$s$};

    \node[draw,circle, minimum size=0.5cm] (t) at (1.5,0){$t$};
    
    \node (tc) at (2,0){$-5$};

    \draw[->,thick] (-.75,0) -- (s0);
    \draw[->,thick] (s0) --  node[above] {b} (t);
     \draw[->,thick] (s0) edge[loop above] node[above] {a} (s0);
     \draw[->,thick] (t) edge[loop above] node[above] {a} (t);

\end{tikzpicture}
	\caption{A simple example of a non-stopping MDP. Transition probabilities are omitted, as they are equal to 1.}
	\label{fig:3-stopping}
\end{figure}
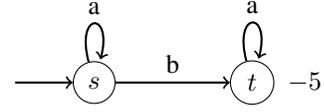

Such regions where under some strategies paths can cycle indefinitely are \emph{maximal end components} (MECs, see \cref{sec:2-end-components}).
For \emph{non-trivial} MECs (ones that are not targets, but can reach the target with positive probability), it depends on the strategy whether they should be merged into $\sink$ or not.
We solve this using the classical solution of \enquote{collapsing} MECs, see~\cite[Theorem 3.8]{de1997formal}, as also done in, e.g.,~\cite{AshokCDKM17,HM18}.

\begin{restatable}[Reduction to Stopping MDPs]{lemma}{stopping}\label{lem:3-stopping}
	For every MDP $\MDP$ and WR-objective~$\obj$, we can in polynomial time construct a stopping MDP $\QMDP$ with the same CPT-value, i.e.\ $\CPTval(\MDP,\obj) = \CPTval(\QMDP,\obj)$.
\end{restatable}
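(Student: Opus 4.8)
The plan is to use the classical MEC-collapsing construction and verify that it preserves the CPT-value by appealing to \cref{thm:3-defs-align} --- i.e., it suffices to show that the set of prospects inducible by strategies is preserved. Concretely, I would first compute $\MECs_\MDP(\states \setminus \targets)$, the maximal end components among non-target states (these can be found in polynomial time, see \cref{sec:2-end-components}). Then I would build $\QMDP$ by collapsing each such MEC $(T,B)$ into a single representative state $\shat$: every action leaving $T$ (i.e. an action $a \in \act(s)$ for some $s \in T$ with $\trans(s,a)(s') > 0$ for some $s' \notin T$) becomes an action of $\shat$ with the same successor distribution, and all internal actions of the MEC are discarded and replaced by a single fresh action $\stay$ with $\Qtrans(\shat, \stay)(\sink) = 1$, where $\sink$ is the newly added zero-reward sink. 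States not in any non-trivial MEC are kept as-is.

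**Next I would argue** that $\QMDP$ is stopping. Any strategy in $\QMDP$ that never plays $\stay$ and never reaches $\targets$ would have to cycle forever through the non-collapsed part; but after collapsing, the only end components among non-target states are singletons of the form $\{\shat\}$ with the single action $\stay$ (which leaves to $\sink$) or trivial states with no internal cycle --- hence any infinite path not absorbed in $\targets$ is absorbed in $\sink$ with probability $1$. This uses the standard fact that collapsing MECs destroys all non-trivial end components outside the targets; I would cite \cite[Theorem 3.8]{de1997formal} for the correctness of the collapsing construction.

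**The core of the proof** is the value-preservation. By \cref{thm:3-defs-align} it is enough to show $\{\prospect(\MDP^\strat, \obj) \mid \strat \in \Strats_\MDP\} = \{\prospect(\QMDP^{\Qstrat}, \obj) \mid \Qstrat \in \Strats_{\QMDP}\}$, since $\CPTval$ is the supremum of $\cptfun$ over exactly this set and $\cptfun$ only depends on the prospect. For the inclusion ``$\supseteq$'': given $\Qstrat$ in $\QMDP$, whenever it plays $\stay$ in $\shat$ we can simulate this in $\MDP$ by an internal strategy of the MEC $(T,B)$ that stays inside $T$ forever (possible by the EC definition), so the corresponding paths land in a non-target BSCC of $\MDP^\strat$ and thus get outcome $0$ --- matching the probability mass $\Qstrat$ sends to $\sink$; all other moves translate verbatim, so the probability of each outcome $o_i \neq 0$ is unchanged and the mass on outcome $0$ agrees. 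For ``$\subseteq$'': given $\strat$ in $\MDP$, the probability of reaching each target-reward class is the same, and the probability of eventually getting stuck in a non-target BSCC (outcome $0$ by \cref{lem:3-defs-make-sense}) equals the probability of reaching $\sink$ under a suitably mimicking strategy in $\QMDP$ --- the standard argument is that the probability of ``exiting $T$ through exit action $a$'' under $\strat$ equals the probability $\Qstrat$ assigns to playing $a$ at $\shat$, and the residual probability of never exiting is routed to $\sink$. In both directions the key technical point is that within a MEC a strategy's only externally observable behaviour is the distribution over which exit (or no exit) eventually occurs; I would formalize this via the notion of exits of an end component.

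**The main obstacle** I anticipate is handling history-dependent strategies cleanly in the ``$\subseteq$'' direction: a general strategy $\strat$ may interleave excursions into a MEC and exits in a history-dependent way, so the ``probability of eventually exiting through $a$'' must be defined carefully (as a sum over all finite histories that exit for the first time through $a$), and one must check this is well-defined and that the mimicking strategy $\Qstrat$ in $\QMDP$ reproducing these exit probabilities actually exists. This is exactly the content of the classical MEC-collapsing lemma \cite[Theorem 3.8]{de1997formal} applied to (non-weighted) reachability to each $\obtainset(\MDP, \obj, o_i)$; since those target sets are absorbing and the outcome-$0$ set corresponds precisely to ``never reaching any target,'' the theorem transfers directly, and I would invoke it rather than re-proving it. The polynomial-time claim then follows since MEC decomposition and the collapsing are both polynomial.
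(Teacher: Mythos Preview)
Your proposal is correct and follows essentially the same approach as the paper: collapse non-target MECs, add the $\stay$ action to a fresh sink $\sink$, argue stopping, and then show that the set of inducible prospects is preserved so that \cref{thm:3-defs-align} yields equality of the CPT-values. The paper's full proof likewise establishes both inclusions of the prospect sets by mimicking strategies across the quotient.

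The one point where the paper is more careful than your sketch is the ``$\subseteq$'' direction for history-dependent strategies. You propose to simply invoke \cite[Theorem~3.8]{de1997formal}; the paper instead spells out the simulation explicitly, because at this stage the strategy complexity for $\CPTval$ is not yet established (that only comes later, in \cref{thm:3-strat-compl}), so one cannot silently restrict to memoryless strategies as in~\cite{AshokCDKM17,HM18}. Concretely, the paper constructs the mimicking $\Qstrat$ by, for each history ending in a MEC entry, computing in the induced (countable) MC the exact distribution over first-exit actions (or the probability of never exiting, routed to $\stay$), and also addresses the subtlety that a history-dependent $\strat$ may behave differently depending on \emph{which} state of the MEC was entered---something the plain quotient forgets. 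Your proposal correctly flags this obstacle; just be aware that invoking \cite{de1997formal} as a black box may not immediately cover the full history-dependent randomized case needed here, so you may have to unfold that argument rather than cite it.
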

\begin{proof}[Proof Sketch]
	We follow the idea of~\cite{AshokCDKM17}, replacing every MEC with a single representative; its available actions are (i) all actions leaving the original MEC and (ii) a special action called $\stay$ leading to the sink state $\sink$.
	Intuitively, this allows to mimic every behaviour of the original MEC: If a strategy leaves the MEC in the original MDP, we can equivalently pick a distribution over the leaving actions, all of which are still available.
	If a strategy stays in the MEC forever in the original MDP (obtaining a reward of 0), this is represented by taking action $\stay$ and going to the dedicated sink state $\sink$.
	For the MDP in \Cref{fig:3-stopping}, this procedure removes the self-looping action and adds an action leading to the newly added state $\sink$.
	
	The resulting MDP $\MDP'$ is stopping because we eliminate every MEC in $\states \setminus (\targets \cup \{\sink\})$.
    Thus, it is impossible to surely cycle forever in any state but the absorbing targets or sink.
	The transformation takes polynomial time, as it computes MECs and iterates over them once.
	
	The full proof provided in \ifarxivelse{\cite[App. B-B]{techreport}}{\cref{app:3-stopping-details}} is more complex than similar proofs in related work~\cite{AshokCDKM17,HM18}, because the authors there relied on the assumption that memoryless deterministic strategies suffice for their objectives.
	In our case, the strategy complexity for $\CPTval$ is not yet solved, thus we also have to consider history-dependent randomized strategies.
\end{proof}

\subsection{Connection between CPT-value and Multi-objective Reachability Queries}
\label{sec:3-relation-cpt-multi-objective}

Intuitively, the new definition of CPT-value in \Cref{eq:3-cpt-value} leads to the following algorithm: 
For every strategy, evaluate the prospect of its induced MC, and take the supremum of these values.
As the set of all possible prospects is infinite, we now describe a way to finitely represent it.
Essentially, we construct a multi-objective reachability query from the given WR-objective $\obj$.
Consequently, the Pareto frontier for this query contains all possible probability distributions over target states and thus represents all prospects.

To transform the WR-objective to a multi-objective reachability query, we define the set of states that obtain a certain outcome.
We need to lift the definition of $\obtainset$ from MCs to MDPs.
All non-zero outcomes $o$ are obtained in target states that w.l.o.g.\ are absorbing (see \cref{sec:2-objective}).
For all strategies $\strat$, we have $\obtainset(\MDP,\obj,o)=\obtainset(\MDP^\strat,\obj,o)$.
However, the $\obtainset$ of 0 can vary depending on the strategy (see \Cref{ex:3-stopping}).
This is why in the previous \cref{sec:3-stopping}, we proved that w.l.o.g.\ we can restrict to stopping MDPs, where (i) states obtaining the reward 0 have been merged into a single sink state $\sink$; and (ii) $\targets \cup \{\sink\}$ is reached almost surely under all strategies.
We have $\obtainset(\MDP,\obj,0)=\{\sink\}$.
We define the multi-objective reachability query for a stopping MDP $\MDP$ and a WR-objective~$\obj$ (with outcomes numbered $o_1$ to $o_k$):
\begin{equation}
\label{eq:3-mo-query}
    \objMO(\MDP,\objWR) \eqdef (\obtainset(\MDP,\objWR,o_1),\ldots,\obtainset(\MDP,\objWR,o_k))
\end{equation}

We highlight that every point on the Pareto frontier of this transformed query $\pf_{\MDP}^{\objMO(\MDP,\objWR)}$ is a probability distribution.
This is because $\biguplus_{i=1}^k \obtainset(\objWR,o_k) = \targets \cup \{\sink\}$ and the MDP is stopping.
Thus, every point $(p_1,\ldots,p_k)$ on the Pareto frontier uniquely identifies a possible prospect. 
We formalize this:

\begin{restatable}{lemma}{core}
    \label{th:3-core-theorem}
    Fix a stopping MDP $\MDP$ and WR-objective $\obj$, with $\vec{o}$ being the vector of outcomes.
    The set of all prospects $\MDP$ can induce under some strategy $\strat$ corresponds to the Pareto frontier of the multi-objective reachability query $\objMO(\MDP,\obj)$.
    Formally:
    \[
    \Bigl\{ \prospect(\MDP^\strat,\objWR) \mid \strat\in\Strats \Bigr\} = 
    \Bigl\{ (\vec{o},\vec{p}) \mid \vec{p} \in \pf_{\MDP}^{\objMO(\MDP,\objWR)}\Bigr\}
    \]
\end{restatable}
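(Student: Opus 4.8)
The plan is to prove the two set inclusions separately, using the probabilistic characterization of $\obtainset$ from \cref{lem:3-defs-make-sense} together with the definition of the achievable set and Pareto frontier of a multi-objective reachability query. First I would fix a stopping MDP $\MDP$ and a WR-objective $\obj$ with outcome vector $\vec{o}=(o_1,\ldots,o_k)$; since the target states and the sink $\sink$ are absorbing, $\objMO(\MDP,\obj)$ is a query with absorbing targets, so \cref{rem:2-pareto-computation} applies. The crucial structural fact I would establish up front is that the sets $\obtainset(\MDP,\obj,o_1),\ldots,\obtainset(\MDP,\obj,o_k)$ partition $\targets\cup\{\sink\}$: the non-zero outcomes partition $\targets$ by definition, and $\obtainset(\MDP,\obj,0)=\{\sink\}$ by the reduction to stopping MDPs. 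Combined with the stopping property, this gives that for \emph{every} strategy $\strat$, $\sum_{i=1}^k \probability_{\MDP}^{\strat}[\reach \obtainset(\MDP,\obj,o_i)] = \probability_{\MDP}^{\strat}[\reach(\targets\cup\{\sink\})] = 1$, because the target sets are pairwise disjoint absorbing sets and are reached almost surely. This is the key identity tying the ``probability'' coordinate of a point on the Pareto frontier to a genuine probability distribution over outcomes, i.e.\ a prospect.

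For the inclusion ``$\subseteq$'': given any strategy $\strat$, let $\vec{p}$ be the vector with $p_i = \probability_{\MDP}^{\strat}[\reach \obtainset(\MDP,\obj,o_i)]$. Then $(\vec{o},\vec{p})$ is exactly $\prospect(\MDP^\strat,\obj)$ by \cref{eq:3-induced-prospect} (using $\obtainset(\MDP,\obj,o_i)=\obtainset(\MDP^\strat,\obj,o_i)$, which holds for non-zero outcomes trivially and for $0$ because in a stopping MDP the sink is the unique relevant BSCC). By definition $\vec{p}\in\achievable_{\MDP}^{\objMO(\MDP,\obj)}$. I then need $\vec{p}\in\pf_{\MDP}^{\objMO(\MDP,\obj)}$, i.e.\ $\vec{p}$ is Pareto optimal: this follows from the partition identity, since $\sum_i p_i = 1$, so no achievable point can dominate $\vec{p}$ in one coordinate without strictly decreasing another (any achievable point $\vec{q}$ also satisfies $\sum_i q_i \le 1$ coordinatewise-witnessed sums, and $\vec{q}\ge\vec{p}$ with $\vec{q}\neq\vec{p}$ would force $\sum_i q_i > 1$, contradicting that $\vec{q}$ is achievable via some strategy whose induced reach-probabilities to a partition sum to at most $1$). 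For ``$\supseteq$'': take $\vec{p}\in\pf_{\MDP}^{\objMO(\MDP,\obj)}$. By definition of $\achievable$, there is a strategy $\strat$ with $p_i \le \probability_{\MDP}^{\strat}[\reach \obtainset(\MDP,\obj,o_i)]$ for all $i$. The right-hand sides sum to $1$ and the $p_i$ sum to at most $1$; Pareto optimality of $\vec{p}$ forces equality $p_i = \probability_{\MDP}^{\strat}[\reach \obtainset(\MDP,\obj,o_i)]$ for every $i$ (if one inequality were strict, $\sum_i p_i < 1 = \sum_i \probability_{\MDP}^{\strat}[\cdots]$, and the point $(\probability_{\MDP}^{\strat}[\reach\obtainset(\MDP,\obj,o_i)])_i$ would be achievable and strictly dominate $\vec{p}$). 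Hence $(\vec{o},\vec{p}) = \prospect(\MDP^\strat,\obj)$, which lies in the left-hand set.

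I expect the main obstacle to be the careful handling of the $\obtainset$ for outcome $0$ and the argument that Pareto optimality upgrades the componentwise inequality $p_i \le \probability_{\MDP}^{\strat}[\reach\obtainset(\MDP,\obj,o_i)]$ to equality; this is where the stopping assumption and the disjoint-absorbing-targets partition do all the work, and it must be stated cleanly. A secondary subtlety is confirming that for an achievable witness $\strat$ the vector of its reach-probabilities to the partition is itself achievable and sums to exactly $1$ — this needs the remark that the target sets are absorbing and pairwise disjoint, so the events $\reach\obtainset(\MDP,\obj,o_i)$ are pairwise disjoint and their union is $\reach(\targets\cup\{\sink\})$, which has probability $1$ by the stopping property. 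Once these points are pinned down, the rest is a direct unwinding of definitions, so I would not belabor it.
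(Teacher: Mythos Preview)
Your proposal is correct and follows essentially the same two-inclusion argument as the paper's own proof; the key observation in both is that the $\obtainset$'s partition $\targets\cup\{\sink\}$ so that every achievable point realized by a strategy sums to $1$ and hence is Pareto optimal. You are more explicit than the paper about upgrading the achievability inequality to equality in the ``$\supseteq$'' direction and about matching $\obtainset(\MDP,\obj,o_i)$ with $\obtainset(\MDP^\strat,\obj,o_i)$, which the paper handles only by the remark preceding the lemma that every Pareto point is a probability distribution.
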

\begin{proof}
	
	We show equality by proving inclusion in both directions.
	First, we fix an arbitrary strategy $\strat\in\Strats$ and show $\prospect(\MDP^\strat,\objWR) \in \Bigl\{ (\vec{o},\vec{p}) \mid \vec{p} \in \pf_{\MDP}^{\objMO(\MDP,\objWR)}\Bigr\}$. 
	
	Let $(\vec{o},\vec{q}) = \prospect(\MDP^\strat,\objWR)$ be the induced prospect.
	By definition of induced prospect, for every $i$ with $1 \leq i \leq k$, it holds that $q_i = \probability_{\MDP}^\strat[\reach \obtainset(\MDP, \obj,o_i)]$.
	Thus, by definition of $\objMO(\MDP,\objWR)$ and the achievable set, we know that $\vec{q} \in \achievable_{\MDP}^{\objMO(\MDP,\objWR)}$, as the target sets of the multi-objective query correspond exactly to the outcomes.
	
	Every point on the Pareto frontier $\pf_{\MDP}^{\objMO(\MDP,\objWR)}$ is a probability distribution.
	Hence, there is no point that is strictly better than $\vec{q}$.
	Thus, $\vec{q} \in \pf_{\MDP}^{\objMO(\MDP,\objWR)}$, which proves our goal.
	\medskip
	
	The other direction works analogously:
	Let $\vec{p}$ be a point on the Pareto frontier.
	Hence, a strategy $\strat$ exists that achieves exactly this distribution over the target sets,  which correspond to the outcomes.
	The prospect corresponding to this point is also in the set of all prospects induced by the MDP, or more formally: $(\vec{o},\vec{p}) \in \Bigl\{ \prospect(\MDP^\strat,\objWR) \mid \strat\in\Strats \Bigr\}$.
\end{proof}

We lift this connection between all possible prospects and the Pareto frontier to the CPT-value:

\begin{restatable}[Connection between CPT-value and Multi-objective Reachability]{theorem}{cptvalpf}\label{cor:3-cpt-val-pf}
    Fix an MDP $\MDP$ and WR-objective $\obj$ with $k$ outcomes, and let $\vec{o}$ be the vector of these outcomes.
    We can construct in polynomial time an MDP $\QMDP$ and a multi-objective reachability query $\objMO(\QMDP,\obj)$ of dimension $k$ with absorbing targets, such that the CPT-value of $\MDP$ equals the supremum over all prospects corresponding to points in the Pareto frontier $\pf_{\QMDP}^{\objMO(\QMDP,\objWR)}$.
    Formally:
    \[\CPTval(\MDP,\objWR) = \sup_{\vec{p} \in \pf_{\QMDP}^{\objMO(\QMDP,\objWR)}} \cptfun((\vec{o},\vec{p})).
    \]
\end{restatable}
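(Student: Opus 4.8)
The plan is to chain together the two main results already established in this section. First, by \cref{lem:3-stopping}, we can in polynomial time construct a stopping MDP $\QMDP$ with the same CPT-value, i.e.\ $\CPTval(\MDP,\objWR) = \CPTval(\QMDP,\objWR)$; note that the reduction preserves the set of outcomes, so the outcome vector $\vec{o}$ is unchanged. This reduces the claim to proving the statement for the stopping MDP $\QMDP$ in place of $\MDP$.

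Second, for the stopping MDP $\QMDP$ I would unfold the new definition of CPT-value from \cref{eq:3-cpt-value}, namely $\CPTval(\QMDP,\objWR) = \sup_{\strat\in\Strats}\cptfun(\prospect(\QMDP^\strat,\objWR))$. Now I apply \cref{th:3-core-theorem}: the set of prospects $\{\prospect(\QMDP^\strat,\objWR)\mid\strat\in\Strats\}$ equals $\{(\vec{o},\vec{p})\mid\vec{p}\in\pf_{\QMDP}^{\objMO(\QMDP,\objWR)}\}$. Since the supremum of $\cptfun$ over a set depends only on that set, we get
\[
\CPTval(\QMDP,\objWR) = \sup_{\strat\in\Strats}\cptfun(\prospect(\QMDP^\strat,\objWR)) = \sup_{\vec{p}\in\pf_{\QMDP}^{\objMO(\QMDP,\objWR)}}\cptfun((\vec{o},\vec{p})),
\]
which together with $\CPTval(\MDP,\objWR)=\CPTval(\QMDP,\objWR)$ is exactly the claimed identity.

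I would then need to verify the side conditions stated in the theorem: that $\objMO(\QMDP,\objWR)$ has dimension $k$ (immediate from \cref{eq:3-mo-query}, since it is a tuple indexed by the $k$ outcomes), that its targets are absorbing (the non-zero-outcome targets are absorbing by the w.l.o.g.\ assumption in \cref{sec:2-objective}, and the sink $\sink$ is absorbing by construction in \cref{lem:3-stopping}), and that the whole construction — running the MEC-collapsing reduction and then forming the query — takes polynomial time (both steps are polynomial, as argued in \cref{lem:3-stopping} and \cref{rem:2-pareto-computation}). This also justifies invoking the absorbing-targets case, which is the only reason we insisted on the stopping reduction in the first place.

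The main obstacle is not really a technical one — the proof is essentially bookkeeping that glues \cref{lem:3-stopping} and \cref{th:3-core-theorem}. The one subtlety worth stating carefully is why taking the supremum over the full strategy set and over only the Pareto frontier yield the same value: strategies whose induced prospect lies strictly below a Pareto-optimal point are not lost, because every achievable prospect is coordinatewise dominated by a point on the frontier, and one has to note that this domination does not a priori preserve the $\cptfun$-value (as $\cptfun$ is not assumed monotone in $\vec{p}$). This is precisely why \cref{th:3-core-theorem} is phrased as an \emph{equality} of prospect sets rather than a domination statement: every prospect that some strategy induces is itself exactly a frontier point, so no appeal to monotonicity of $\cptfun$ is needed, and the two suprema range over literally the same set of prospects.
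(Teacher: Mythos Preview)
Your proposal is correct and follows essentially the same approach as the paper: apply \cref{lem:3-stopping} to pass to a stopping MDP with the same CPT-value, then unfold \cref{eq:3-cpt-value} and invoke \cref{th:3-core-theorem} to replace the supremum over strategies by the supremum over Pareto-frontier points. Your additional explicit checks of the side conditions (dimension $k$, absorbing targets, polynomial time) and your remark that no monotonicity of $\cptfun$ is needed because \cref{th:3-core-theorem} gives an equality of prospect sets are helpful elaborations, but the core argument is identical to the paper's.
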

\begin{proof}
Using \Cref{lem:3-stopping}, we construct $\QMDP$ from $\MDP$ in polynomial time, with $\CPTval(\MDP,\obj) = \CPTval(\QMDP,\obj)$.
Then: 
\begin{align*}
	\CPTval(\QMDP,\objWR) &\eqdef \sup_{\strat\in\Strats}\cptfun(\prospect(\QMDP^\strat,\obj)) \tag{By \Cref{eq:3-cpt-value}}\\
	&=\sup_{ \vec{x} \in \{\prospect(\QMDP^\strat,\objWR) \mid \strat\in\Strats \} } \cptfun(\vec{x}) \\
	&=\sup_{\vec{p} \in \pf_{\QMDP}^{\objMO(\QMDP,\objWR)}} \cptfun((\vec{o},\vec{p})) \tag{By \Cref{th:3-core-theorem}}
\end{align*}
\end{proof}

\para{Key contribution.} The definition of the multi-objective query in \cref{eq:3-mo-query}, together with the technical stopping assumption from \Cref{sec:3-stopping}, allows us to reduce the problem of computing the CPT-value to finding the CPT-optimal point on the Pareto frontier.
Solving this optimization problem is the only remaining step to obtain the algorithm for MDPs in \cref{sec:4-mdp-algo}.

Moreover, this reduction immediately yields an upper bound on strategy complexity:
The optimal strategy achieves a point on the Pareto frontier $\pf_{\MDP}^{\objMO(\MDP,\objWR)}$ by \Cref{cor:3-cpt-val-pf}, and every point on the Pareto frontier can be achieved using a memoryless randomized strategy by \cref{rem:2-pareto-computation}. Together with an example that randomized strategies can achieve a strictly higher CPT-value than deterministic ones (see \ifarxivelse{\cite[App. B-C]{techreport}}{\Cref{app:examples-rand}}), we obtain the following strategy complexity:
\begin{restatable}[Strategy Complexity]{corollary}{stratcompl}
	\label{thm:3-strat-compl}
	Memoryless randomized strategies are necessary and sufficient to achieve the optimal CPT-value in an MDP $\MDP$ with a WR-objective~$\obj$. 
\end{restatable}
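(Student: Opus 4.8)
The plan is to prove the two halves of the claim — sufficiency and necessity — separately, since they come from two rather different sources.

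\textbf{Sufficiency.} The plan is to chain together results already established in the excerpt. By \cref{cor:3-cpt-val-pf}, we have $\CPTval(\MDP,\objWR) = \sup_{\vec{p} \in \pf_{\QMDP}^{\objMO(\QMDP,\objWR)}} \cptfun((\vec{o},\vec{p}))$, where $\QMDP$ is the stopping MDP with absorbing targets built from $\MDP$. The supremum is over the Pareto frontier, which is a compact set (closed and bounded subset of $[0,1]^k$), and $\cptfun$ is continuous because $\util$, $\weightP$, $\weightM$ are all continuous; hence the supremum is attained at some point $\vec{p}^\star \in \pf_{\QMDP}^{\objMO(\QMDP,\objWR)}$. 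By \cref{rem:2-pareto-computation}, since the targets of the query $\objMO(\QMDP,\objWR)$ are absorbing, every point of the Pareto frontier — in particular $\vec{p}^\star$ — is achievable by a memoryless randomized strategy $\strat^\star$ in $\QMDP$. By \cref{th:3-core-theorem} (or directly from the definition of induced prospect), $\prospect(\QMDP^{\strat^\star},\objWR) = (\vec{o},\vec{p}^\star)$, so $\cptfun(\prospect(\QMDP^{\strat^\star},\objWR)) = \CPTval(\QMDP,\objWR) = \CPTval(\MDP,\objWR)$, i.e.\ $\strat^\star$ is optimal. One technical point to address: $\strat^\star$ is memoryless randomized in $\QMDP$, and one must argue this carries back to a memoryless randomized strategy in $\MDP$ attaining the same value — this is a routine consequence of the MEC-collapsing construction in \cref{lem:3-stopping}, since a strategy that "stays" in a collapsed MEC can be mimicked by a memoryless randomized strategy inside the original MEC achieving the same reward-$0$ behaviour, and choices at exit points are memoryless by construction. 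I would spell this out in a sentence or two.

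\textbf{Necessity.} Here I would simply invoke the example promised in the excerpt (\cite[App. B-C]{techreport}, resp.\ \cref{app:examples-rand}): there is an MDP with a WR-objective on which every \emph{deterministic} strategy — whether memoryless or history-dependent — yields a strictly smaller value than some randomized strategy, hence the optimal CPT-value cannot be achieved by any deterministic strategy. This shows randomization is necessary. It remains to note that memory is not needed beyond memoryless, which is exactly the sufficiency direction. Combining the two gives: memoryless randomized strategies are both necessary (randomization cannot be dropped) and sufficient (memory cannot help), which is the statement.

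\textbf{Main obstacle.} The conceptually substantive work is essentially all upstream — in \cref{cor:3-cpt-val-pf} and \cref{rem:2-pareto-computation} — so the corollary itself is short. The one place where a little care is genuinely required is transporting the optimal memoryless randomized strategy from $\QMDP$ back to $\MDP$: the collapsed MDP has a structurally different state space (MECs replaced by representatives with a $\stay$ action), so one must exhibit the correspondence between a strategy that plays $\stay$ in $\QMDP$ and an actual memoryless randomized strategy inside the original MEC that remains there forever with probability one (such a strategy exists by standard MEC theory). I expect this to be the only non-immediate step, and it is already implicitly handled by the proof of \cref{lem:3-stopping}.
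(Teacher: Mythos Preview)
Your proposal is correct and follows essentially the same route as the paper: sufficiency via \cref{cor:3-cpt-val-pf} together with \cref{rem:2-pareto-computation}, and necessity via the explicit example in \cref{app:examples-rand}. You are in fact slightly more careful than the paper, which treats both the attainment of the supremum and the transport of the memoryless randomized strategy from $\QMDP$ back to $\MDP$ as implicit; your identification of the latter as the only non-immediate step is accurate, and the paper indeed handles it at the end of the proof of \cref{lem:3-stopping} (and again in \cref{sec:4-strats}).
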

	\section{Algorithm and computational complexity}\label{sec:4-title}
We present algorithms to compute the CPT-value of an MC and an MDP, as well as their complexity. 
We conclude the section by demonstrating how to extract an $\varepsilon$-optimal strategy.

\subsection{Algorithms and Complexity for MC}\label{sec:4-mc-algo}

\begin{algorithm}[t]
	\caption{CPT-value in Markov Chains}\label{alg:4-cpt-mc}
	\begin{algorithmic}[1]
		\Require MC $\MC$ and WR-objective $\obj$, CPT-function $\cptfun$
		\Ensure CPT-value $\CPTval(\MC,\obj)$
		\State $\outcomevector\gets \text{unique}^\uparrow(\outcomes(\objWR) )$\label{line:4-outcomes}\Comment{Outcomes}
		\State $\nu^\MC\gets \mathsf{stat\_distr}(\MC)$ \label{line:4-lim-dis} \Comment{Stationary distribution}
		\State $\probabilities \gets \left( \sum_{s \in \obtainset(\MC,o_i,\obj)} \nu^{\MC}(s) \right)_{i\in[1,k]}$\label{line:4-probs}
		\State \textbf{Return} $\cptfun((\outcomevector,\probabilities))$ \label{line:4-cpt-function}\Comment{CPT-function}
	\end{algorithmic}
\end{algorithm}

We present our algorithm for computing the CPT-value of an MC in \cref{alg:4-cpt-mc}:
Lines \ref{line:4-outcomes}-\ref{line:4-probs} compute the prospect of the MC, $\prospect(\MC,\obj) = (\outcomevector,\probabilities)$, before applying the CPT-function on it in Line~\ref{line:4-cpt-function} (see \cref{sec:2-prelim-cpt-eval} for its pseudocode).
We compute the prospect by first creating the increasingly sorted vector of all outcomes in \Cref{line:4-outcomes}, numbered $o_1, \ldots, o_k$.
Next, we find the \emph{stationary distribution} $\nu : \states \to [0,1]$ of the MC, e.g.\ using the efficient algorithms of~\cite{DBLP:conf/tacas/Meggendorfer23}.
This is a probability distribution over states; as all target states are absorbing, the stationary distribution on these coincides with the probability of reaching them, see~\cite[Eq. (3)]{DBLP:conf/tacas/Meggendorfer23}.
We use them to aggregate the probabilities of reaching an outcome and construct the probability distribution over outcomes in \Cref{line:4-probs}.

\begin{example}
	We exemplify \cref{alg:4-cpt-mc} on the MC obtained by always playing $a_1$ in the MDP from \Cref{fig:motivating-example-MDP}; we essentially remove action $a_2$ as well as states $s_3$ and $s_4$ from the figure.
    The algorithm first obtains the vector of outcomes in the MC, namely 0 and 20.
    The stationary distribution of this MC is $(s_0\rightarrow 0, s_1\rightarrow 0.95, s_2\rightarrow 0.05)$. 
	This defines the prospect $\prosp_{a_1}=[0:0.05, 20: 0.95]$. 
	Now, we can evaluate this prospect using the CPT-function.
	We remark that while this running example is very simple (as a sink state is reached after one step), the algorithm is applicable to arbitrary MCs. \qee
\end{example}

\begin{lemma}[Correctness]
	\label{lem:4-algo-cpt-mc}
	Given a Markov chain $\MC$ and a WR-objective $\obj$, 
	\cref{alg:4-cpt-mc} computes $\CPTval(\MC)$ up to the precision of the given oracle for the CPT-function $\cptfun$.
\end{lemma}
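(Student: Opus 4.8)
The plan is to show that the prospect $(\outcomevector,\probabilities)$ assembled in Lines~\ref{line:4-outcomes}--\ref{line:4-probs} is precisely the induced prospect $\prospect(\MC,\obj)$ from \cref{eq:3-induced-prospect}, and then to invoke that an MC is an MDP with a unique strategy, so that $\CPTval(\MC,\obj) = \cptfun(\prospect(\MC,\obj))$ by \cref{eq:3-cpt-value}. The only inexactness then comes from the oracle call to $\cptfun$ in Line~\ref{line:4-cpt-function}, which yields the claimed ``up to the precision of the oracle''.

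First I would note that Line~\ref{line:4-outcomes} returns exactly $\outcomevector$, the increasingly sorted vector of $\outcomes(\obj)$, by definition of $\text{unique}^\uparrow$. It then remains to prove, for each $1\le i\le k$, that $\sum_{s\in\obtainset(\MC,\obj,o_i)}\nu^\MC(s) = \probability_\MC[\reach\obtainset(\MC,\obj,o_i)]$, i.e.\ that Line~\ref{line:4-probs} computes $p_i$ as in \cref{eq:3-induced-prospect}. I would split along the two cases of \cref{eq:3-obtainset}. For $o_i\ne 0$, the set $\obtainset(\MC,\obj,o_i)$ consists of absorbing target states; each such state $s$ is a singleton BSCC, and the stationary mass $\nu^\MC(s)$ equals the absorption probability $\probability_\MC[\reach\{s\}]$---this is exactly~\cite[Eq.~(3)]{DBLP:conf/tacas/Meggendorfer23}. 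Summing over the pairwise disjoint states of $\obtainset(\MC,\obj,o_i)$ gives $\probability_\MC[\reach\obtainset(\MC,\obj,o_i)]$. For $o_i=0$, $\obtainset(\MC,\obj,0)$ is a union of non-target BSCCs; here I would use the standard fact that the total long-run (stationary) mass $\sum_{s\in C}\nu^\MC(s)$ of a BSCC $C$ equals the hitting probability $\probability_\MC[\reach C]$, together with disjointness of distinct BSCCs, to obtain $\sum_{s\in\obtainset(\MC,\obj,0)}\nu^\MC(s)=\probability_\MC[\reach\obtainset(\MC,\obj,0)]$.

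The one step that requires care rather than being purely definitional is the $o_i=0$ case: one must pin down which notion of ``stationary distribution'' is meant for a possibly reducible MC (the long-run/Cesàro distribution computed in Line~\ref{line:4-lim-dis}) and recall that the total mass it assigns to a BSCC is exactly the probability of being absorbed in that BSCC; the bottom-BSCC bookkeeping is analogous to Case~2 of \cref{lem:3-defs-make-sense}. Combining the two cases yields $(\outcomevector,\probabilities)=\prospect(\MC,\obj)$, hence Line~\ref{line:4-cpt-function} returns $\cptfun(\prospect(\MC,\obj))$, which by \cref{eq:3-cpt-value} and the uniqueness of the strategy in an MC equals $\CPTval(\MC,\obj)$---up to the precision of the given oracle for $\cptfun$.
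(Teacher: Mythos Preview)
Your proposal is correct and follows essentially the same approach as the paper: show that Lines~\ref{line:4-outcomes}--\ref{line:4-probs} produce exactly $\prospect(\MC,\obj)$ by using that the stationary mass on absorbing states (and more generally BSCCs) equals the reachability probability~\cite[Eq.~(3)]{DBLP:conf/tacas/Meggendorfer23}, then conclude via the new CPT-value definition in \cref{eq:3-cpt-value}. Your explicit case split on $o_i\neq 0$ versus $o_i=0$ is slightly more detailed than the paper's treatment, which handles both cases in one line, but the argument is the same.
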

\begin{proof}
	On absorbing states, the stationary distribution $\nu^\MC$ coincides with the reachability probability~\cite[Eq. (3)]{DBLP:conf/tacas/Meggendorfer23}.
	Thus, we have for all $i$ with $1\leq i\leq k$ that $\sum_{s \in \obtainset(\MC,\obj,o_i)} \nu^{\MC}(s) =     \probability_{\MC}[\reach \obtainset(\MC, \obj,o_i)] = \probability_{\MC}[\obj(\infinitepath)=o_i]$.
	Consequently, the prospect computed by the algorithm is the induced prospect of the given MC, formally $(\outcomevector,\probabilities) = \prospect(\MC,\objWR)$.
	Then correctness follows immediately from correctness of the new definition of the CPT-value (\Cref{thm:3-defs-align}).

Since the CPT-function typically involves utility and weighting functions that contain roots (see \ifarxivelse{\cite[App. A-A]{techreport}}{\cref{app:2-u-and-w}}), our algorithm can only return a value as precise as the $\cptfun$-oracle.
\end{proof}

\begin{lemma}[Complexity]
	\label{lem:4-algo-cpt-mc-complexity}
	Given an MC $\MC$ and WR-objective $\obj$, \Cref{alg:4-cpt-mc} terminates in polynomial time.
\end{lemma}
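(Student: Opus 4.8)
The plan is to verify that every line of \cref{alg:4-cpt-mc} runs in time polynomial in the size of the MC $\MC$ and the encoding of $\obj$, given that line~\ref{line:4-cpt-function} is a single call to the $\cptfun$-oracle (which by \cref{rem:2-approx} we treat as a black box and do not count). First I would observe that line~\ref{line:4-outcomes} only reads off the image of the reward function $\rew$ on the finite target set $\targets$, adds $0$, and sorts; this is clearly polynomial, and in particular $k = \abs{\outcomeset} \leq \abs{\targets} + 1$ is polynomially bounded. Next, line~\ref{line:4-lim-dis} computes the stationary distribution of $\MC$; here I would cite the cited efficient algorithms of~\cite{DBLP:conf/tacas/Meggendorfer23}, which run in polynomial time (e.g.\ via solving the linear system characterizing stationary probabilities, or via the SCC-decomposition-based approach), so this step is polynomial as well.

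Then I would handle line~\ref{line:4-probs}: for each of the $k$ outcomes $o_i$ we form the sum of $\nu^\MC(s)$ over $s \in \obtainset(\MC,\obj,o_i)$. Computing each $\obtainset$ requires, for $o_i \neq 0$, a pass over $\targets$ selecting states with reward $o_i$; for $o_i = 0$, it requires computing the BSCCs of $\MC$ and unioning those disjoint from $\targets$. By \cref{sec:2-graph-components}, BSCCs are computable in linear time~\cite{DBLP:journals/siamcomp/Tarjan72}, and the subsequent union and summation are linear in $\abs{\states}$. Doing this for all $k$ outcomes is polynomial since $k$ is polynomially bounded and the $\obtainset$s are pairwise disjoint (so the total work across all $i$ is $O(k \cdot \abs{\states})$ at worst, in fact $O(\abs{\states})$ for the summations plus one BSCC computation). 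Finally, line~\ref{line:4-cpt-function} is one oracle call on a prospect of size $k$.

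Summing the contributions of all four lines, each of which is polynomial, gives that \cref{alg:4-cpt-mc} terminates in polynomial time, which is the claim.

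I do not anticipate a genuine obstacle here — the statement is essentially a bookkeeping exercise. The only subtlety worth stating explicitly is the convention that the $\cptfun$-oracle call is a unit-cost operation (consistent with \cref{rem:2-approx}, which explains why exact evaluation is infeasible and why we work with an oracle); once that convention is fixed, every remaining step is a standard polynomial-time graph or linear-algebra computation, and the mild point to double-check is that $k$ is polynomially bounded in the input size, which follows immediately from $k \leq \abs{\targets} + 1$.
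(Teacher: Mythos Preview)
Your proposal is correct and follows essentially the same approach as the paper: a line-by-line verification that each step of \cref{alg:4-cpt-mc} runs in polynomial time, treating the $\cptfun$-evaluation as a constant-time oracle call. You are somewhat more detailed (explicitly bounding $k$ and handling the BSCC computation for outcome $0$), but the structure and reasoning match the paper's proof.
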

\begin{proof}
	The stationary distribution can be computed in polynomial time by solving a system of equations~\cite[Thm. 2.5]{kulkarnimodeling-second-edition}.
	Additionally, the algorithm consists of at most two iterations over all states in the MC: 
	once to find all unique rewards and once to accumulate the probabilities for each reward.
	Finally, as the CPT-function (which uses elementary functions in $\util$ and $\weight$) is part of our input, we assume that we have an oracle for it and can evaluate it in constant time.
\end{proof}

\begin{theorem}[MC complexity]
\label{th:4-mc}
    The CPT-value approximation problem for MCs is in $\PTIME$.
\end{theorem}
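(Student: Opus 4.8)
The plan is simple: the theorem follows by instantiating \cref{alg:4-cpt-mc} and combining its correctness (\cref{lem:4-algo-cpt-mc}) with its polynomial running time (\cref{lem:4-algo-cpt-mc-complexity}); the only additional work is to discharge the \cptfun-oracle assumption, i.e.\ to certify that the approximate comparison against the threshold $v$ up to precision $\varepsilon$ can be carried out in time polynomial in the bit-size of $\MC$, $\obj$, $v$ and $\varepsilon$.

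First, I would compute the induced prospect $(\outcomevector,\probabilities) = \prospect(\MC,\objWR)$ exactly as in lines~\ref{line:4-outcomes}--\ref{line:4-probs}. By \cref{lem:4-algo-cpt-mc-complexity} this takes polynomial time: one pass over the states for the sorted outcome vector, solving a linear system for the stationary distribution, and one pass to aggregate probabilities via $\obtainset$. Since the stationary distribution solves a polynomial-size linear system with rational coefficients of polynomial bit-length, each $p_i$ is a rational of polynomial bit-length, and so is each outcome $o_i$.

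Second, I would argue that $\cptfun((\outcomevector,\probabilities))$ can be approximated to additive error $\varepsilon$ in polynomial time, thereby replacing the oracle. With the prospect fixed, $\cptfun((\outcomevector,\probabilities)) = \sum_{i=1}^{k} \util(o_i)\cdot\decweight(\prospect,i)$, where each decision weight $\decweight(\prospect,i)$ is a finite combination of values $\weightP(q)$ and $\weightM(q)$ for rationals $q$ (partial sums of the $p_i$) of polynomial bit-length. The standard $\util,\weightP,\weightM$ are built from elementary functions such as $m$-th roots; $\sqrt[m]{q}$ can be computed to $n$ bits in time polynomial in $n$ and the bit-length of $q$ (bisection or Newton iteration). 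As $\util$ and $\weightP,\weightM$ are continuous, hence Lipschitz, on the relevant compact domains (with constants $\lipCPT$, $\lipWp$, $\lipWm$), evaluating each argument to sufficiently many bits and propagating the errors through the finitely many additions, subtractions and multiplications defining $\cptfun$ yields a rational $c$ with $|c - \cptfun((\outcomevector,\probabilities))| < \varepsilon$. Because $\varepsilon$ is part of the input, $\log(1/\varepsilon)$ is polynomial in the input size, so the required precision — and hence the whole evaluation — is polynomial.

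Finally, by \cref{thm:3-defs-align} and \cref{lem:4-algo-cpt-mc} we have $\CPTval(\MC,\objWR) = \cptfun(\prospect(\MC,\objWR)) = \cptfun((\outcomevector,\probabilities))$, so $|c - \CPTval(\MC,\objWR)| < \varepsilon$. Then $\CPTval(\MC,\objWR) > v+\varepsilon$ implies $c > v$, and $\CPTval(\MC,\objWR) < v-\varepsilon$ implies $c < v$; returning $\mathsf{true}$ iff $c > v$ solves the CPT-value approximation problem, and every step above is polynomial, so the problem is in $\PTIME$. I expect the main obstacle to be not the algorithm itself but precisely this last bookkeeping: bounding the bit-length of the $p_i$, and certifying via the Lipschitz constants that the roots appearing in $\util,\weightP,\weightM$ can be approximated to the precision needed to make the threshold comparison correct, all within polynomially many bits.
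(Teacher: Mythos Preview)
Your approach matches the paper's: the theorem is stated there without an explicit proof, as an immediate corollary of \cref{lem:4-algo-cpt-mc} (correctness) and \cref{lem:4-algo-cpt-mc-complexity} (polynomial running time). The paper simply keeps the $\cptfun$-oracle assumption throughout (see the proof of \cref{lem:4-algo-cpt-mc-complexity}: ``as the CPT-function \ldots\ is part of our input, we assume that we have an oracle for it and can evaluate it in constant time''), so your first paragraph already reproduces the paper's argument in full.

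Where you differ is in going further and sketching how to \emph{discharge} the oracle by approximating $\util$, $\weightP$, $\weightM$ via root extraction to sufficiently many bits. This is a genuine addition over the paper, which never justifies that the oracle can be implemented in polynomial time; your version is more self-contained in that respect. One small slip to fix: ``continuous, hence Lipschitz, on the relevant compact domains'' is not valid reasoning---continuity on a compact set gives uniform continuity, not Lipschitz continuity, and indeed the standard $\weightP$ with exponent $\gamma<1$ has unbounded derivative near $0$. For your purposes this does not matter, since you only need to evaluate $\weightP$, $\weightM$, $\util$ at finitely many fixed rational arguments (the partial sums of the exactly computed $p_i$) to a prescribed number of bits; no Lipschitz bound is required for that, only that each elementary operation (powers with rational exponent, sums, quotients) can be approximated to $n$ bits in time polynomial in $n$ and the bit-length of the argument. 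If you rephrase the error-propagation step accordingly---bounding the accumulated error through the $2k$ additions/subtractions and $k$ multiplications by the (polynomially bounded) $\abs{\util(o_i)}$---the argument goes through.
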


\subsection{Algorithm and Complexity for MDP}\label{sec:4-mdp-algo}

\begin{algorithm}[t]
	\caption{CPT-value in Markov Decision Processes}\label{alg:4-cpt-mdp}
	\begin{algorithmic}[1]
		\Require MDP $\MDP$, WR-objective $\objWR$, CPT-function $\cptfun$, precision $\varepsilon$
		\Ensure Approximation of the CPT-value $\CPTval(\MDP,\obj)$
		\State $\QMDP\gets \mathsf{make\_stopping}(\MDP,\objWR)$ \Comment{Transform to stopping}\label{line:stopping}
		\State $\objMO\gets \objMO(\QMDP,\objWR)$ \Comment{Transform to multi-objective}\label{line:mo}
		\State $\pf \gets $ $\varepsilon$-approximation of $\pf_{\QMDP}^{\objMO}$
		\label{line:pareto}
		\State \textbf{Return} $\varepsilon$-approximation of $\sup_{\probabilities \in \pf} \cptfun((\outcomevector,\probabilities))$\label{line:optimitzation} 
	\end{algorithmic}
\end{algorithm}

\para{Algorithm Overview.}
\cref{alg:4-cpt-mdp} shows how to compute the CPT-value of an MDP.
It transforms the MDP to a stopping one with the same CPT-value (Line~\ref{line:stopping}, see \Cref{lem:3-stopping}), and then translates the given WR-objective to a multi-objective query (Line~\ref{line:mo}, see \cref{eq:3-mo-query}). 
Line~\ref{line:pareto} uses a standard procedure to compute (an approximation of) the Pareto frontier (e.g.\ from \cite{EKVY08}).
By \cref{th:3-core-theorem}, this Pareto frontier represents all possible prospects.
Thus, in Line~\ref{line:optimitzation}, we can rewrite the problem as an optimization problem which finds the $\cptfun$-optimal point on the approximation of the Pareto frontier. 
Solving this optimization problem yields the CPT-value.

\begin{example}
	The MDP $\MDP$ from \cref{ex:2-running} is already stopping. 
	We have $\objMO = (F_{-5},F_0,F_{20},F_{50}) =$ $(\{s_3\}, \{s_2\}, \{s_1\}, \{s_4\})$, where each set corresponds to one possible outcome.
    Thus, $\cp(\pf_{\MDP}^{\objMO})=\{(0.95,0.05,0,0),(0,0.05,0.44,0.51)\}$.
    Here, the Pareto frontier forms a line between the two extreme points, so we can easily find the optimal CPT-value. 
    In general, the last step is more involved.
	\qee
\end{example}

\para{Key Difficulties.}
There are two obstacles to overcome to prove the correctness of the algorithm:
Firstly, since the CPT-function is non-convex (see \ifarxivelse{\cite[App. C-A]{techreport}}{\cref{app:4-pseudoconvex}}), Line~\ref{line:optimitzation} cannot be computed simply.
Non-convex optimization (also called global optimization) is a very difficult problem.
We refer to~\cite{matthiesen2019efficient} for a discussion of available algorithms, improving upon~\cite{tuy1998convex}.
Most importantly, known algorithms for non-convex optimization (i)~are always at least worst-case exponential~\cite[Chap. 3]{matthiesen2019efficient}, (ii)~often require the objective function to be monotonic~\cite[Chap. 4]{matthiesen2019efficient}, which CPT is not (see \ifarxivelse{\cite[App. A]{techreport}}{\cref{app:4-cpt-increasing}}), and (iii)~often only converge in the limit, without guaranteeing a precision in finite time~\cite[Chap. 3.2.1]{matthiesen2019efficient}.
For solving the optimization problem in Line~\ref{line:optimitzation}, we thus take a brute-force approach: We prove Lipschitz-continuity of the CPT-function and discretize the set of all possible probability vectors with sufficient precision, allowing us to prove an exponential bound on the runtime.

Secondly, bounding the imprecision of \Cref{alg:4-cpt-mdp} is intricate because there are multiple sources of approximation error: (i)~from solving the optimization problem in Line~\ref{line:optimitzation}, (ii)~from approximating the Pareto frontier in Line~\ref{line:pareto}, and (iii)~from propagating the latter error through the CPT-function.
We show how to bound this error in \cref{lem:4-algo-mdp}.
We remark that, apart from approximation errors, the correctness of our approach is immediate from \cref{cor:3-cpt-val-pf}.

\para{Section Outline.} 
In the remainder of this section, we analyse \cref{alg:4-cpt-mdp}.
For correctness, \cref{lem:4-cpt-lip-cont} proves the main requisite: Lipschitz-continuity of the CPT-function, allowing us to solve the optimization problem (\cref{lem:4-solve-opt-new}) and prove the correctness (\cref{lem:4-algo-mdp}), and to overcome both of the key difficulties mentioned above.
\cref{lem:4-algo-mdp-complexity} proves a bound on the runtime of \Cref{alg:4-cpt-mdp}, required to establish the complexity of the CPT-value approximation problem for MDPs in \cref{th:4-mdp} and \cref{cor:4-fpt}.
Finally, we also remark on how to compute a CPT-optimal strategy.

\begin{restatable}[Lipschitz-continuity of CPT-function]{lemma}{cptLc}
	\label{lem:4-cpt-lip-cont}
	The CPT function $\cptfun$ is Lipschitz-continuous with respect to the probability vector $\probabilities$, i.e.\ for every $\varepsilon \in \Reals$ and vector of outcomes $\outcomevector\in \Reals^k$, we have that if two vectors $\probabilities, \probvectorQ \in [0,1]^k$ satisfy $\abs{\probabilities - \probvectorQ} \leq \varepsilon$ (using point-wise difference and $L_2$-norm), then 
	\[
		\abs{\cptfun((\outcomevector,\probabilities)) - \cptfun((\outcomevector,\probvectorQ))} \leq \varepsilon \cdot \lipCPT,
	\]
	where $\lipCPT \eqdef \util^* \cdot \max(\lipWp,\lipWm) \cdot (2k^2+k)$ is the Lipschitz-constant of the CPT-function, which depends on the outcome with the largest absolute utility $\util^* \eqdef \max_{i \in [1,k]} \abs{\util(o_i)}$, the Lipschitz-constants $\lipWp,\lipWm$ of the probability weighting functions $\pweight$ and $\mweight$, and the number of outcomes $k$.
\end{restatable}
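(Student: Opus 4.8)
The plan is to prove Lipschitz-continuity of $\cptfun$ by unfolding its definition \eqref{eq:2-cpt} and controlling each summand's dependence on the probability vector $\probabilities$. Recall $\cptfun((\outcomevector,\probabilities)) = \sum_{i=1}^k \util(o_i) \cdot \decweight(\prosp,i)$, so by the triangle inequality it suffices to bound $\abs{\decweight((\outcomevector,\probabilities),i) - \decweight((\outcomevector,\probvectorQ),i)}$ for each $i$, and then multiply by $\abs{\util(o_i)} \leq \util^*$ and sum over the $k$ outcomes. The key observation is that each decision weight $\decweight(\prosp,i)$ is a difference of two terms of the form $\weightP$ or $\weightM$ applied to a \emph{cumulative} probability — either a tail sum $\sum_{j \geq i} p_j$ (for gains) or a prefix sum $\sum_{j \leq i} p_j$ (for losses). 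Since outcomes are ordered and the ranking used in $\decweight$ is determined purely by this order (not by $\probabilities$), the same cumulative sums appear for both $\probabilities$ and $\probvectorQ$, so there is no combinatorial mismatch to worry about; the argument reduces to perturbing the arguments of the weighting functions.

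The steps I would carry out, in order: (1) State the relevant explicit form of $\decweight$ from the appendix definition, emphasizing that it is $\weightP(\text{cumulative tail})-\weightP(\text{cumulative tail shifted by one})$ for gains and the analogue with $\weightM$ and prefixes for losses. (2) Show that if $\abs{\probabilities - \probvectorQ} \leq \varepsilon$ in $L_2$-norm (hence also in $L_1$-norm, which is $\leq \sqrt{k}\,\varepsilon$, or one can work directly with the fact that each coordinate differs by at most $\varepsilon$), then any cumulative sum over a subset $J \subseteq \{1,\ldots,k\}$ satisfies $\abs{\sum_{j \in J} p_j - \sum_{j \in J} q_j} \leq \sum_{j \in J}\abs{p_j - q_j}$, which is bounded by something like $k\varepsilon$ (or $\sqrt{k}\,\varepsilon$ depending on the norm bookkeeping — I'd choose the bound that lands on the stated constant $2k^2+k$). (3) Apply Lipschitz-continuity of $\weightP$ and $\weightM$ with constants $\lipWp, \lipWm$ to turn a perturbation $\eta$ in the cumulative argument into a perturbation $\max(\lipWp,\lipWm)\cdot\eta$ in the weight value. (4) Combine: each $\decweight$ is a difference of two such terms, contributing a factor of $2$; there are $k$ terms in the outer sum; each $\util(o_i)$ contributes $\util^*$; and the cumulative-sum step contributes the remaining factor of $k$ (the extra $+k$ over $2k^2$ presumably comes from one of the endpoint/degenerate cumulative sums being handled separately, e.g. the full sum equals $1$ for both vectors so it contributes nothing, shaving the count). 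Assembling these yields $\lipCPT = \util^* \cdot \max(\lipWp,\lipWm)\cdot(2k^2+k)$.

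The main obstacle I anticipate is purely bookkeeping: getting the constant to come out \emph{exactly} as $2k^2 + k$ rather than, say, $2k^2$ or $k^2 + k$ or $2k\sqrt{k}$. This requires being careful about (a) whether one passes through $L_1$ or stays with the coordinatewise $L_2$ bound, (b) exactly how many distinct cumulative sums appear and which ones are trivially constant (the total mass, which is $1$ for any genuine probability vector), and (c) whether the "difference of two weight terms" in $\decweight$ is counted as contributing $2$ per outcome or whether adjacent differences telescope so that the total count over all $i$ is smaller. A clean way to avoid the telescoping subtlety is to bound each $\decweight$ term \emph{independently} by $2 \cdot \max(\lipWp,\lipWm) \cdot k\varepsilon$ — treating the two weight evaluations separately — rather than exploiting cancellations, which gives the honest (if slightly loose) constant $2k^2$ for the decision weights alone, and then the residual $+k$ is absorbed from a more careful treatment of the gain/loss split or the zero outcome. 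I would write the proof with the looser-but-correct chain of inequalities and simply verify at the end that the product of factors is $\leq \util^*\cdot\max(\lipWp,\lipWm)\cdot(2k^2+k)$.
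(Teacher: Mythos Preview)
Your proposal is correct and follows essentially the same route as the paper: unfold $\cptfun$, pull out $\util^*$, bound each $\lvert\decweight(\cdot,\probabilities,i)-\decweight(\cdot,\probvectorQ,i)\rvert$ by applying Lipschitz-continuity of $\weightP,\weightM$ to the two cumulative-sum arguments, and sum over $i$. Regarding the constant you were unsure about: the paper bounds each coordinate difference by $\varepsilon$ (from the $L_2$ assumption), so the two cumulative sums in $\decweight$ differ by at most $(k-i+1)\varepsilon$ and $(k-i)\varepsilon$ respectively (and analogously for losses), giving $\lvert\decweight(p_i)-\decweight(q_i)\rvert \leq \max(\lipWp,\lipWm)\cdot(2k+1)\varepsilon$ uniformly in $i$; summing over $k$ outcomes yields the factor $k(2k+1)=2k^2+k$.
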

\begin{proof}[Proof Sketch]
This very technical proof (\ifarxivelse{\cite[App. C-C]{techreport}}{\cref{app:4-cpt-lipschitz}}) proceeds by straightforward unfolding of the CPT-function $\cptfun$.
\end{proof}

\begin{lemma}[Solving the Optimization Problem]\label{lem:4-solve-opt-new}
	Given a Pareto frontier $\pf$ and an outcome vector $\outcomevector$, we can $\varepsilon$-approximate 
	$\sup_{\probabilities \in \pf} \cptfun((\outcomevector,\probabilities))$, and the computation takes time 
	$(\frac {\lipCPT \cdot \sqrt{k}} \varepsilon)^k \cdot \mathcal{O}(n\log(n))$, where $n = \max(\cp(\pf),2^k)$.
\end{lemma}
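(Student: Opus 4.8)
\textbf{The plan} is to solve the optimization problem by brute-force discretization: replace the continuous maximization $\sup_{\probabilities\in\pf}\cptfun((\outcomevector,\probabilities))$ by a maximization over a finite grid, and control the incurred error using the Lipschitz-continuity of $\cptfun$ from \cref{lem:4-cpt-lip-cont}. Note first that $\pf$ is a bounded polytope given by its finitely many extreme points $\cp(\pf)$, hence compact, and that every point of $\pf$ lies on the hyperplane $\sum_{i=1}^k p_i = 1$ (the underlying MDP is stopping); since $\cptfun$ is continuous, the supremum is attained by some $\probabilities^\star\in\pf$.

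First I would fix the resolution $\delta \eqdef \varepsilon/(\lipCPT\cdot\sqrt k)$ and take the uniform grid $G_\delta \eqdef \{j\delta \mid j\in\Naturals_0\}^k \cap [0,1]^k$ over the hypercube. It contains at most $(\lceil 1/\delta\rceil+1)^k = \mathcal{O}\big((\lipCPT\sqrt k/\varepsilon)^k\big)$ points, and — since a grid cell is a cube of side $\delta$ whose diagonal has length $\sqrt k\,\delta$ — every point of $[0,1]^k$ lies within $L_2$-distance $\tfrac{\sqrt k}{2}\delta$ of some grid point. For each $\probvectorQ\in G_\delta$, compute its projection $\Pi(\probvectorQ)\eqdef\mathrm{proj}_{\pf}(\probvectorQ)$ onto the polytope $\pf$ (a convex program over $\mathrm{conv}(\cp(\pf))$), evaluate $\cptfun((\outcomevector,\Pi(\probvectorQ)))$ via the definition of the CPT-function with the fixed, pre-sorted outcome vector $\outcomevector$, and return the largest value obtained.

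For correctness, every evaluated argument $\Pi(\probvectorQ)$ lies in $\pf$, so the returned value is at most $\sup_{\probabilities\in\pf}\cptfun((\outcomevector,\probabilities))$. Conversely, let $\probvectorQ^\star\in G_\delta$ be a grid point nearest to $\probabilities^\star$; then $\abs{\probvectorQ^\star-\probabilities^\star}\le\tfrac{\sqrt k}{2}\delta$, and since $\mathrm{proj}_{\pf}$ is nonexpansive and fixes $\probabilities^\star\in\pf$ we get $\abs{\Pi(\probvectorQ^\star)-\probabilities^\star} \le \abs{\probvectorQ^\star-\probabilities^\star} \le \tfrac{\sqrt k}{2}\delta = \tfrac{\varepsilon}{2\lipCPT}$; plugging this into \cref{lem:4-cpt-lip-cont} gives $\cptfun((\outcomevector,\Pi(\probvectorQ^\star))) \ge \cptfun((\outcomevector,\probabilities^\star)) - \tfrac{\varepsilon}{2}$, so the returned maximum is within $\varepsilon$ of the true supremum. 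For the runtime, there are $\mathcal{O}\big((\lipCPT\sqrt k/\varepsilon)^k\big)$ grid points, and per point we perform one projection onto $\pf$ and one $\mathcal{O}(k)$ evaluation of $\cptfun$; both are polynomial in the size of the vertex/facet description of $\pf$ and in $k$, and are bounded by $\mathcal{O}(n\log n)$ with $n = \max(\abs{\cp(\pf)}, 2^k)$ — the $2^k$ in $n$ absorbs the cost of handling the halfspace representation of $\pf$, whose facet count can be exponential in $k$. Multiplying yields the claimed bound $(\lipCPT\sqrt k/\varepsilon)^k \cdot \mathcal{O}(n\log n)$.

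\textbf{The main obstacle} is the error accounting rather than any single computation: one must evaluate only on points of $\pf$ (here, projected grid points), because $\cptfun$ is \emph{not} monotone and an unrestricted grid maximum over $[0,1]^k$ would in general strictly exceed $\sup_{\probabilities\in\pf}\cptfun$; and one must line up the covering radius $\tfrac{\sqrt k}{2}\delta$ against exactly the Lipschitz constant $\lipCPT$ of \cref{lem:4-cpt-lip-cont}, so that $\delta$, and hence the grid size $(\lipCPT\sqrt k/\varepsilon)^k$, comes out as stated. The secondary technical point is pinning the per-grid-point cost (projection onto, or membership testing against, $\pf$) inside $\mathcal{O}(n\log n)$, which is where the $2^k$ term in $n$ is needed.
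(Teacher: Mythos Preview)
Your approach is essentially the paper's: discretize $[0,1]^k$ with mesh $\delta=\varepsilon/(\lipCPT\sqrt k)$, use \cref{lem:4-cpt-lip-cont} to bound the error per cell by $\varepsilon$, and take the maximum over finitely many representatives lying in $\pf$. The correctness argument you give (nonexpansive projection keeps you $\varepsilon/(2\lipCPT)$-close to $\probabilities^\star$) is sound, and you are right that one must evaluate only at points of $\pf$ because $\cptfun$ is not monotone.

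The one genuine gap is in the per-point cost. The paper obtains the $\mathcal{O}(n\log n)$ factor by computing, for each small hypercube, its \emph{intersection} with the polytope $\pf$ and picking any point from a non-empty intersection, citing a polytope--polytope intersection algorithm~\cite{DBLP:conf/iccS/TereshchenkoCF13} whose runtime is $\mathcal{O}(n\log n)$ in the total number of vertices $n=\max(\abs{\cp(\pf)},2^k)$. You instead perform Euclidean \emph{projection} of each grid point onto $\pf$; this is a convex quadratic program over $\mathrm{conv}(\cp(\pf))$, and there is no standard result that solves it in $\mathcal{O}(n\log n)$ time. Your remark that ``the $2^k$ in $n$ absorbs the cost of handling the halfspace representation of $\pf$'' does not help either: a polytope with $\abs{\cp(\pf)}$ vertices in dimension $k-1$ can have up to roughly $\abs{\cp(\pf)}^{\lfloor (k-1)/2\rfloor}$ facets, which is not bounded by $\max(\abs{\cp(\pf)},2^k)$. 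So as written, your runtime analysis does not deliver the stated bound. Replacing your projection step by the paper's hypercube--polytope intersection test (which also yields a point of $\pf$ in the current cell whenever one exists) fixes this and makes the argument match.
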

\begin{proof}
	The Pareto frontier is a set of $k$-dimensional vectors, i.e.\ $\pf \subseteq \Reals^k$. Further, all components of the vectors are bounded to be in $[0,1]$. 
	We can discretize this space of vectors into hypercubes with side length $\frac \varepsilon {\lipCPT \cdot \sqrt{k}}$.
	The resulting number of hypercubes is finite, namely $(1 / \frac \varepsilon {\lipCPT \cdot \sqrt{k}})^k = (\frac {\lipCPT \cdot \sqrt{k}} \varepsilon)^k$.
	
	The largest distance between two vectors in a hypercube is less than the length of its diagonal, i.e.\ $\sqrt{k}$ times its side length, resulting in $\frac \varepsilon {\lipCPT}$.
	Consequently, the difference between the CPT-function evaluated on two different vectors in the same hypercube is at most $\frac \varepsilon {\lipCPT} \cdot \lipCPT = \varepsilon$.
	Using this, we proceed as follows: 
	For each hypercube, compute its intersection with $\pf$, and if it is non-empty, select one vector from the intersection.
	The CPT-value of this vector is $\varepsilon$-close to the CPT-value of all other vectors in the hypercube. 
	Consequently, taking the maximum over all hypercubes yields a point that is $\varepsilon$-close to the desired quantity $\sup_{\probabilities \in \pf} \cptfun((\outcomevector,\probabilities))$.
	
	The runtime of this approach is dominated by the number of hypercubes we consider, $(\frac {\lipCPT \cdot \sqrt{k}} \varepsilon)^k$. 
	For each hypercube, computing the intersection takes time in $\mathcal{O}(n\log(n))$, where $n$ is the number of vertices of the considered polytope~\cite[Lem.~3]{DBLP:conf/iccS/TereshchenkoCF13}. 
	Thus, $n$ is the maximum of $\cp(\pf)$ and $2^k$, depending on whether the Pareto frontier or the the hypercube has more vertices.
	Evaluating the CPT-function takes constant time because we assumed that we have an oracle for it.
\end{proof}

While the proposed algorithm is exponential, its asymptotic complexity is in line with the best known algorithms for non-convex optimization, which are at least exponential in the number of variables~\cite[Chap. 3]{matthiesen2019efficient}, i.e.\ in our case also exponential in $k$.
Still, in practice, we recommend using the Branch-and-Bound algorithm~\cite[Chap. 3.1]{matthiesen2019efficient}, which, intuitively, incrementally discretizes the set of possible probability vectors, potentially avoiding to investigate all hypercubes.
Using it requires specifying a \enquote{bound operation}, which is possible using the Lipschitz-continuity of the CPT-function.

\begin{restatable}[Correctness and Error Estimation]{lemma}{algomdp}
	\label{lem:4-algo-mdp}
	For an MDP $\MDP$, a WR-objective $\objWR$ with $k$ different outcomes, a CPT-function $\cptfun$, and a given precision $\varepsilon$, \cref{alg:4-cpt-mdp} computes the CPT-value $\CPTval(\MDP,\obj)$ up to a precision of $\varepsilon \cdot (1+\lipCPT)$, where $\lipCPT$ is the Lipschitz-constant of the CPT-function.
\end{restatable}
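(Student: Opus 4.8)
The plan is to track the three sources of approximation error in \cref{alg:4-cpt-mdp} separately --- the reductions in Lines~\ref{line:stopping}--\ref{line:mo}, the approximation of the Pareto frontier in Line~\ref{line:pareto}, and the discretized optimization in Line~\ref{line:optimitzation} --- and then combine them by the triangle inequality.

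First I would dispatch Lines~\ref{line:stopping} and~\ref{line:mo}: by \cref{lem:3-stopping} and \cref{cor:3-cpt-val-pf}, these steps are exact, i.e.\ $\CPTval(\MDP,\obj) = \sup_{\probabilities \in \pf_{\QMDP}^{\objMO}} \cptfun((\outcomevector,\probabilities))$, where $\objMO = \objMO(\QMDP,\objWR)$ and $\outcomevector$ is the outcome vector of $\objWR$. So it remains to bound the distance between the value returned in Line~\ref{line:optimitzation} and this supremum.

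Next I would control the Pareto-frontier approximation error. Let $\widehat\pf$ be the $\varepsilon$-approximation of $\pf_{\QMDP}^{\objMO}$ computed in Line~\ref{line:pareto} (which exists, is computable in polynomial time, and has polynomially many extreme points by \cite[Cor. 3.5]{EKVY08}); it is within point-wise (Hausdorff) distance $\varepsilon$ of the true frontier, so for every $\probabilities \in \pf_{\QMDP}^{\objMO}$ there is some $\probvectorQ \in \widehat\pf$ with $\abs{\probabilities - \probvectorQ} \leq \varepsilon$, and symmetrically. Applying \cref{lem:4-cpt-lip-cont} to each such pair gives $\abs{\cptfun((\outcomevector,\probabilities)) - \cptfun((\outcomevector,\probvectorQ))} \leq \varepsilon \cdot \lipCPT$, and taking suprema on both sides yields $\abs{\sup_{\probabilities \in \pf_{\QMDP}^{\objMO}} \cptfun((\outcomevector,\probabilities)) - \sup_{\probvectorQ \in \widehat\pf} \cptfun((\outcomevector,\probvectorQ))} \leq \varepsilon \cdot \lipCPT$. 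Finally, \cref{lem:4-solve-opt-new} guarantees that the value returned in Line~\ref{line:optimitzation} is within $\varepsilon$ of $\sup_{\probvectorQ \in \widehat\pf} \cptfun((\outcomevector,\probvectorQ))$. Chaining the two bounds, the returned value is within $\varepsilon \cdot \lipCPT + \varepsilon = \varepsilon \cdot (1+\lipCPT)$ of $\CPTval(\MDP,\obj)$, which is the claim.

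The step I expect to be the main obstacle is the middle one: one must be precise about the notion of ``$\varepsilon$-approximation of the Pareto frontier'' supplied by the cited result (additive versus relative error, and in which norm), and, if it is not already in the $L_2$ point-wise form required by \cref{lem:4-cpt-lip-cont}, convert it --- possibly incurring an extra $\sqrt{k}$ factor that has to be absorbed into the error budget or folded into the Lipschitz constant. By contrast, the exactness of the stopping and multi-objective reductions and the concluding triangle inequality are routine.
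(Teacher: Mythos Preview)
Your proposal is correct and follows essentially the same route as the paper: exactness of Lines~\ref{line:stopping}--\ref{line:mo} via \cref{lem:3-stopping} and \cref{cor:3-cpt-val-pf}, then Lipschitz-continuity (\cref{lem:4-cpt-lip-cont}) to convert the $\varepsilon$-error on the Pareto frontier into an $\varepsilon\cdot\lipCPT$ error on the supremum, then \cref{lem:4-solve-opt-new} for the optimization error $\varepsilon$, summed by the triangle inequality. The paper does not address your concern about the precise norm of the Pareto-frontier approximation either; it simply takes the point-wise distance as given.
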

\begin{proof}
	Let $v$ be the output of \cref{alg:4-cpt-mdp}.
	The following chain of equations proves our goal:	
	\begin{align*}
		&\abs{\CPTval(\MDP,\obj) - v} \\
		&\leq 
		\abs{\CPTval(\MDP,\obj) - \sup_{\probsappr \in \pf} \cptfun((\outcomevector,\probsappr)) + \varepsilon_{\text{opt}}} \tag{I}\\
		&= \abs{\sup_{\probabilities\in\pf_{\QMDP}^{\objMO}} \cptfun(\outcomevector,\probabilities) - \sup_{\probsappr\in\pf} \cptfun(\outcomevector,\probsappr)} + \varepsilon_{\text{opt}} \tag{II}\\
		&\leq \varepsilon_{\pf} \cdot \lipCPT + \varepsilon_{\text{opt}} \tag{III}\\
		&= \varepsilon \cdot (1+\lipCPT) \tag{IV}
	\end{align*}
	
	We explain every step separately:
	
	\noindent\textbf{(I):} By  \cref{lem:4-solve-opt-new}, the algorithm returns an $\varepsilon_{\text{opt}}$-approximation of $\sup_{\probsappr \in \pf} \cptfun((\outcomevector,\probsappr))$, where the subscript opt indicates the error from non-convex optimization.
	
	\noindent\textbf{(II):} 
	By \cref{cor:3-cpt-val-pf}, we have $\CPTval(\MDP,\obj) = \sup_{\vec{p} \in \pf_{\QMDP}^{\objMO(\QMDP,\objWR)}} \cptfun((\vec{o},\vec{p}))$.
	Note that we replaced $\MDP$ with its stopping variant $\QMDP$ constructed in Line~\ref{line:stopping}.
	
	\noindent\textbf{(III):}
    $\varepsilon_{\pf}$ is the error introduced by approximating the frontier, i.e.\ the maximum distance between any two vectors $\probabilities$ and $\probsappr$.
	This error is propagated through the CPT-function $\cptfun$, multiplying it with the Lipschitz-constant $\lipCPT$, as defined in \cref{lem:4-cpt-lip-cont}.
	
	\noindent\textbf{(IV):}
	We choose $\varepsilon_{\text{opt}}$ and $\varepsilon_{\pf}$ to be equal to the precision $\varepsilon$ given in the claim.
    In practice, using different precision for these two sources of error could be advantageous.	
\end{proof}

\begin{lemma}[Runtime of \cref{alg:4-cpt-mdp}]
	\label{lem:4-algo-mdp-complexity}
	Given an MDP $\MDP$, a WR-objective $\objWR$ with $k$ different outcomes, 
	a CPT-function $\cptfun$, and a precision $\varepsilon$, 
	\cref{alg:4-cpt-mdp} has the following runtime:
	The first three steps are polynomial in the size of the MDP $\MDP$, i.e.\ $\abs{\MDP}$, and the inverted precision $\frac 1 \varepsilon$.
	The last step, solving the optimization problem, takes the larger of the following two times:
	$\mathcal{O}\left((\frac{k^2 \cdot \util^* \cdot \sqrt{k} } \varepsilon)^k \cdot (2^k \cdot k)\right)$ or $\mathcal{O}\left((\frac {k^2 \cdot \util^* \cdot \sqrt{k} } \varepsilon)^k \cdot \text{poly}(|\MDP|,\frac{1}{\varepsilon})\right)$, where $\util^*$ is the largest absolute utility appearing in $\objWR$.
\end{lemma}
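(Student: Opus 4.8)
The plan is to bound the runtime of \cref{alg:4-cpt-mdp} line by line, invoking the complexity statement already available for each of the four subroutines, and then to substitute the explicit value of the Lipschitz constant $\lipCPT$ from \cref{lem:4-cpt-lip-cont} into the generic bound of \cref{lem:4-solve-opt-new}. The genuinely nontrivial ingredients (Lipschitz-continuity of $\cptfun$ and the discretization argument) are already established upstream, so this lemma is essentially an assembly of those pieces.

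First I would handle Lines~\ref{line:stopping}--\ref{line:pareto}. Line~\ref{line:stopping} invokes $\mathsf{make\_stopping}$, which runs in time polynomial in $\abs{\MDP}$ by \cref{lem:3-stopping} and produces a stopping MDP $\QMDP$ of size $\mathcal{O}(\abs{\MDP})$. Line~\ref{line:mo} constructs $\objMO(\QMDP,\objWR)$ as in \cref{eq:3-mo-query}; this only requires computing the image of $\rew$ and partitioning the states of $\QMDP$ accordingly, hence is polynomial in $\abs{\QMDP}$. Line~\ref{line:pareto} computes an $\varepsilon$-approximation of $\pf_{\QMDP}^{\objMO}$; since the targets of $\objMO(\QMDP,\objWR)$ are absorbing (as exploited in \cref{cor:3-cpt-val-pf}), \cref{rem:2-pareto-computation} applies, and moreover such an approximation with only polynomially many extreme points is computable in polynomial time by \cite[Cor. 3.5]{EKVY08}; so this step is polynomial in $\abs{\QMDP}$ and $\frac{1}{\varepsilon}$, and in particular the output $\pf$ satisfies $\abs{\cp(\pf)} \in \text{poly}(\abs{\MDP},\frac{1}{\varepsilon})$. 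Combining the three lines yields the claimed polynomial bound in $\abs{\MDP}$ and $\frac{1}{\varepsilon}$.

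Then I would treat Line~\ref{line:optimitzation}. By \cref{lem:4-solve-opt-new} this step runs in time $\bigl(\tfrac{\lipCPT \cdot \sqrt{k}}{\varepsilon}\bigr)^k \cdot \mathcal{O}(n \log n)$ with $n = \max(\abs{\cp(\pf)}, 2^k)$, where $\pf$ and the internal precision are those from Line~\ref{line:pareto}, i.e.\ the input $\varepsilon$ (consistently with the proof of \cref{lem:4-algo-mdp}). Substituting $\lipCPT = \util^* \cdot \max(\lipWp,\lipWm) \cdot (2k^2+k)$ and treating the Lipschitz constants $\lipWp,\lipWm$ of the probability weighting functions as constants of the instance, we get $\lipCPT \cdot \sqrt{k} \in \mathcal{O}(k^2 \cdot \util^* \cdot \sqrt{k})$, so the leading factor is $\mathcal{O}\bigl((\tfrac{k^2 \cdot \util^* \cdot \sqrt{k}}{\varepsilon})^k\bigr)$. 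For the $n\log n$ term I split on the definition of $n$: if $2^k \geq \abs{\cp(\pf)}$ then $\mathcal{O}(n\log n) = \mathcal{O}(2^k \cdot k)$, giving the first stated bound; otherwise $n = \abs{\cp(\pf)} \in \text{poly}(\abs{\MDP},\frac{1}{\varepsilon})$ by the previous paragraph, so $\mathcal{O}(n\log n)$ is in $\text{poly}(\abs{\MDP},\frac{1}{\varepsilon})$, giving the second. Taking the larger of the two cases proves the claim.

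The mathematical content here is light, so there is no deep obstacle; the only points requiring care are bookkeeping. One must cite the correct bound on the size of the approximate Pareto frontier (\cite[Cor. 3.5]{EKVY08}), which is applicable precisely because \cref{lem:3-stopping} reduces us to absorbing targets; one must be consistent about which $\varepsilon$ governs the Pareto approximation versus the optimization (both set to the input $\varepsilon$ as in the proof of \cref{lem:4-algo-mdp}); and one must be honest that absorbing the $\sqrt{k}$, $2k^2+k$, and $\max(\lipWp,\lipWm)$ factors into the stated $\mathcal{O}$-expression is a simplification that treats the weighting-function Lipschitz constants as instance constants. I expect the most delicate of these to be the last: making the passage from $\lipCPT$ to the clean $\tfrac{k^2\cdot\util^*\cdot\sqrt k}{\varepsilon}$ term precise when it is raised to the $k$-th power.
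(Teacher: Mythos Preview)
Your proposal is correct and follows essentially the same line-by-line approach as the paper's proof: invoking \cref{lem:3-stopping} for Line~\ref{line:stopping}, a direct size argument for Line~\ref{line:mo}, \cite[Cor.~3.5]{EKVY08} for Line~\ref{line:pareto}, and then substituting $\lipCPT$ into the bound of \cref{lem:4-solve-opt-new} with the same case split on $n$. The paper makes the identical simplification you flag as delicate---absorbing $\max(\lipWp,\lipWm)$ into the $\mathcal{O}$ as an instance constant---so your concern there is well-placed but not a divergence.
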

\begin{proof}
	We show the runtime of each algorithm step.
	\begin{itemize}
		\item Line~\ref{line:stopping} takes polynomial time in the size of the MDP $\abs{\MDP}$ by \cref{lem:3-stopping}, and the resulting stopping MDP $\QMDP$ has at most $\abs{\states} + 1$ many states.
		This upper bound is proven as follows: The MEC quotient construction (see \ifarxivelse{Def. B-2}{\cref{def:3-mec-quotient}} in \ifarxivelse{\cite[App. B-B]{techreport}}{\cref{app:3-stopping-details}}) removes every MEC and adds a single new state for it. Since a MEC is a non-empty set of states, this transformation cannot increase the number of states, but at most leave it unchanged (if every MEC in the original MDP has size 1 or there are no MECs).
		Additionally, the MEC quotient adds the new sink state $\sink$, which is the reason for the one additional state. 
		\item Line~\ref{line:mo} takes linear time in the number of states $\abs{\states}$, as it iterates once over all target states.
		\item Line~\ref{line:pareto} approximates the Pareto frontier in time polynomial in the size of the MDP and the precision, $\text{poly}(|\MDP|,\frac{1}{\varepsilon})$ \cite[Cor. 3.5]{EKVY08}. The result has polynomial many points in the input, $|\cp(\pf)|=\text{poly}(|\MDP|,\frac{1}{\varepsilon})$.
		\item Line~\ref{line:optimitzation} takes time
		$(\frac {\lipCPT \cdot \sqrt{k}} {\varepsilon_{\text{opt}}})^k \cdot \mathcal{O}(n\log(n))$, where $n = \max(\cp(\pf),2^k)$ (by \cref{lem:4-solve-opt-new}).
		It remains to discuss two things: 
		Firstly, $\lipCPT$ depends on the given WR-objective, namely quadratically on the number of outcomes $k$ and linearly on the outcome with the largest possible utility $\util^* \eqdef \max_{i \in [1,k]} \abs{\util(o_i)}$ (whereas the Lipschitz-constants of the probability weighting functions are not relevant for asymptotic complexity).
		Thus, we replace $\lipCPT$ with $k^2 \cdot \util^*$.
		Secondly, we make a case distinction on whether $n$ is $2^k$, or whether it is $\cp(\pf)$.
		In the former case, the runtime is in
		$\mathcal{O}\left((\frac {k^2 \cdot \util^* \cdot \sqrt{k} } \varepsilon)^k \cdot (2^k \cdot k)\right)$.
		In the latter case, the runtime is in $\mathcal{O}\left((\frac {k^2 \cdot \util^* \cdot \sqrt{k} } \varepsilon)^k \cdot \text{poly}(|\MDP|,\frac{1}{\varepsilon})\right)$. \qed
	\end{itemize}
\renewcommand{\qedsymbol}{} 
\end{proof}

\begin{theorem}[MDP complexity]
\label{th:4-mdp}
    The CPT-value approximation problem for MDPs can be solved in time polynomial in the size of the MDP $\abs{\MDP}$ and exponential in the number of outcomes $k$, where the basis of the exponentiation is a polynomial containing $k$, the largest absolute utility $\util^*$, and the desired precision $\varepsilon$.
\end{theorem}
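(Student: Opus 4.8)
The plan is to obtain the theorem as an essentially immediate corollary of the correctness statement \cref{lem:4-algo-mdp} and the runtime statement \cref{lem:4-algo-mdp-complexity}, after rescaling the internal precision of \cref{alg:4-cpt-mdp} so that its guaranteed error matches the precision demanded by the CPT-value approximation problem. Concretely, given an MDP $\MDP$, a WR-objective $\objWR$ with $k$ outcomes, a precision $\varepsilon$ and a threshold $v$, I would first compute $\lipCPT \eqdef \util^* \cdot \max(\lipWp,\lipWm) \cdot (2k^2+k)$ from the objective and the (input-independent) probability weighting functions, and then run \cref{alg:4-cpt-mdp} with internal precision $\varepsilon' \eqdef \varepsilon / (1+\lipCPT)$. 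By \cref{lem:4-algo-mdp} the returned value $v^*$ satisfies $\abs{\CPTval(\MDP,\obj) - v^*} \leq \varepsilon' \cdot (1+\lipCPT) = \varepsilon$. The decision procedure then answers $\mathsf{true}$ iff $v^* \geq v$: if $\CPTval(\MDP,\obj) > v+\varepsilon$ then $v^* \geq \CPTval(\MDP,\obj)-\varepsilon > v$, so we answer $\mathsf{true}$; symmetrically, if $\CPTval(\MDP,\obj) < v-\varepsilon$ then $v^* \leq \CPTval(\MDP,\obj)+\varepsilon < v$, so we answer $\mathsf{false}$; in the remaining case any answer is permitted, so correctness of the reduction holds.

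For the complexity claim, I would substitute $\varepsilon'$ into the runtime bound of \cref{lem:4-algo-mdp-complexity}. The first three lines of \cref{alg:4-cpt-mdp} stay polynomial in $\abs{\MDP}$ and $1/\varepsilon'$, hence polynomial in $\abs{\MDP}$ and $1/\varepsilon$, since $1/\varepsilon' = (1+\lipCPT)/\varepsilon$ and $\lipCPT$ is polynomial in $k$ and $\util^*$. For Line~\ref{line:optimitzation}, \cref{lem:4-algo-mdp-complexity} gives a bound of the form $(\frac{k^2 \cdot \util^* \cdot \sqrt{k}}{\varepsilon'})^k$ multiplied by a factor that is either $2^k \cdot k$ or $\text{poly}(\abs{\MDP},1/\varepsilon)$. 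Expanding $1/\varepsilon' = (1+\lipCPT)/\varepsilon$ and using $\lipCPT = \Theta(k^2 \util^*)$ up to the constant $\max(\lipWp,\lipWm)$, the base of the exponentiation is a polynomial in $k$, $\util^*$ and $1/\varepsilon$; raising it to the $k$-th power and multiplying by the stated $2^k$- or $\text{poly}(\abs{\MDP},1/\varepsilon)$-factor yields exactly the claimed shape — polynomial in $\abs{\MDP}$, exponential in $k$, with the base of the exponent a polynomial in $k$, $\util^*$ and $\varepsilon$.

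The only substantive point beyond bookkeeping is verifying that the rescaling by $(1+\lipCPT)$ does not escape the claimed complexity class, i.e.\ that $\lipCPT$ is itself polynomial in precisely the parameters allowed to appear in the base of the exponential; this holds because $\lipCPT$ is quadratic in $k$, linear in $\util^*$, and linear in $\lipWp,\lipWm$, which are constants fixed by the (input-independent) weighting functions. Hence I expect the main obstacle to be purely presentational: threading the $(1+\lipCPT)$ factor through the exponent of the runtime bound of \cref{lem:4-algo-mdp-complexity} carefully. Everything else is a direct consequence of \cref{lem:4-algo-mdp} and \cref{lem:4-algo-mdp-complexity}, which in turn rest on \cref{cor:3-cpt-val-pf} (reduction to the Pareto frontier of the multi-objective query), \cref{lem:4-cpt-lip-cont} (Lipschitz-continuity of $\cptfun$), and \cref{lem:4-solve-opt-new} (solving the discretized optimization over the Pareto frontier).
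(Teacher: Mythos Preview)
Your proposal is correct and follows essentially the same approach as the paper: rescale the internal precision of \cref{alg:4-cpt-mdp}, invoke \cref{lem:4-algo-mdp} for correctness, and substitute into \cref{lem:4-algo-mdp-complexity} for the runtime bound. Your rescaling factor $\varepsilon/(1+\lipCPT)$ is in fact slightly cleaner than the paper's $\varepsilon/\lipCPT$ (which does not exactly match the $(1+\lipCPT)$ factor in \cref{lem:4-algo-mdp}), and you are more explicit about the decision-procedure step, but asymptotically and structurally the arguments coincide.
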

\begin{proof}
	We run \cref{alg:4-cpt-mdp} with precision $\frac \varepsilon \lipCPT$ in order to achieve overall precision $\varepsilon$ (\cref{lem:4-solve-opt-new}).
	Inserting this in \cref{lem:4-algo-mdp-complexity} yields a runtime in $\mathcal{O}\left((\frac{k^4 \cdot (\util^*)^2 \cdot \sqrt{k} } \varepsilon)^k \cdot (2^k \cdot k)\right)$ or $\mathcal{O}\left((\frac{k^4 \cdot (\util^*)^2 \cdot \sqrt{k} } \varepsilon)^k \cdot \text{poly}(|\MDP|,\frac{1}{\varepsilon})\right)$.
	Note that Line~\ref{line:optimitzation} dominates the runtime of the algorithm.
\end{proof}

\begin{remark}[Lower runtime bound for the algorithm]\label{rem:lower-compl-bound}
    It is unlikely that with an approach like ours --- using non-convex optimization to find the optimal point on the Pareto frontier --- the worst-case runtime can be polynomial in $k$.
    The CPT-function is necessarily non-convex, as both the utility and the probability weighting functions have a non-convex \enquote{S-shape}.
    One of the largest known classes of optimization functions, where such an optimization problem still is polynomial, is the class of \emph{pseudo-convex} functions \cite{pseudo-convex}.
    However, as shown by the counter-example in \ifarxivelse{\cite[App. C-A]{techreport}}{\Cref{app:4-pseudoconvex}}, the CPT-function is not pseudo-convex. 
    Thus, the existence of a polynomial-time algorithm for the computation is non-trivial and challenging.
\end{remark}

The complexity of the CPT-value approximation problem largely depends on factors not related to the MDP, namely number of outcomes, largest utility, and precision.
However, the number of outcomes $k$ is typically small compared to the size of the MDP, and similarly, we can often bound the largest appearing utilities in the objective (using knowledge of the objective) and the precision (commonly $10^{-6}$).
Thus, the problem is fixed-parameter tractable~\cite[Def. 1]{grohe1999descriptive}.

\begin{corollary}
\label{cor:4-fpt}
    The CPT-value approximation problem for MDPs is (a) in EXPTIME and (b) fixed-parameter tractable when fixing the precision $\varepsilon$, the number of outcomes $k$, and the largest absolute utility $\util^*$.
\end{corollary}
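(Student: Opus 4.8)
The plan is to obtain both claims directly from the quantitative runtime analysis of \cref{lem:4-algo-mdp-complexity} and \cref{th:4-mdp}, combined with the correctness guarantee of \cref{lem:4-algo-mdp}. Recall that running \cref{alg:4-cpt-mdp} with its internal precision rescaled by $\lipCPT$ (as in the proof of \cref{th:4-mdp}) solves the CPT-value approximation problem, and does so in time bounded by the larger of
\[
\mathcal{O}\!\left(\Bigl(\tfrac{k^{4}\cdot(\util^*)^{2}\cdot\sqrt{k}}{\varepsilon}\Bigr)^{k}\cdot 2^{k}\cdot k\right)
\quad\text{and}\quad
\mathcal{O}\!\left(\Bigl(\tfrac{k^{4}\cdot(\util^*)^{2}\cdot\sqrt{k}}{\varepsilon}\Bigr)^{k}\cdot\text{poly}\bigl(\abs{\MDP},\tfrac1\varepsilon\bigr)\right),
\]
i.e.\ a factor $\text{poly}(\abs{\MDP},\tfrac1\varepsilon)$ (in particular polynomial in the size of the MDP) multiplied by a factor of the shape $\bigl(\text{poly}(k,\util^*,\tfrac1\varepsilon)\bigr)^{k}$.

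For part~(a), I would show that each quantity in this bound is at most exponential in the bit length $n$ of the input. The number of distinct outcomes satisfies $k=\abs{\outcomes(\objWR)}\le\abs{\targets}+1\le\abs{\states}+1$, so $k$ is polynomial in $\abs{\MDP}$; the rewards are part of the input and given in binary, so $\util^*=\max_{i}\abs{\util(o_{i})}$ is at most $2^{\text{poly}(n)}$; and $\varepsilon$ is also given in binary, so $\tfrac1\varepsilon\le 2^{\text{poly}(n)}$. Plugging these in, the base $\text{poly}(k,\util^*,\tfrac1\varepsilon)$ is of the form $2^{\text{poly}(n)}$, and raising it to the power $k\le n$ still gives $2^{\text{poly}(n)}$; the remaining $\text{poly}(\abs{\MDP},\tfrac1\varepsilon)$ prefactor is likewise at most $2^{\text{poly}(n)}$ and is absorbed. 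Hence the algorithm runs in $\EXPTIME$.

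For part~(b), I would treat $\varepsilon$, $k$, and $\util^*$ as fixed parameters. Then the factor $\bigl(\text{poly}(k,\util^*,\tfrac1\varepsilon)\bigr)^{k}$ is a constant $f(k,\util^*,\varepsilon)$ depending only on the parameters; the rescaling constant $\lipCPT=\util^*\cdot\max(\lipWp,\lipWm)\cdot(2k^{2}+k)$ likewise depends only on $k$, $\util^*$ and the fixed probability weighting functions of the CPT-instance, hence is a constant as well. The surviving factor $\text{poly}(\abs{\MDP},\tfrac1\varepsilon)$ is polynomial in $\abs{\MDP}$ once $\varepsilon$ is fixed. Thus the overall runtime has the form $f(k,\util^*,\varepsilon)\cdot\text{poly}(\abs{\MDP})$, which is precisely the definition of fixed-parameter tractability in the parameters $k$, $\util^*$, $\varepsilon$~\cite[Def.~1]{grohe1999descriptive}.

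I do not expect any genuine mathematical difficulty here: the corollary is essentially a re-packaging of \cref{th:4-mdp}. The only delicate points are the bookkeeping of what counts as an input versus a fixed parameter, and verifying that the rewards and the precision, being encoded in binary, contribute only an exponential factor, so that the $k$-fold exponentiation does not leave $\EXPTIME$.
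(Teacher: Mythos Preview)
Your proposal is correct and follows essentially the same approach as the paper: both derive (a) and (b) directly from the runtime bound of \cref{th:4-mdp}, arguing that the bound is of the form $2^{\text{poly}(n)}$ for EXPTIME and already factors as $f(k,\util^*,\varepsilon)\cdot\text{poly}(\abs{\MDP})$ for fixed-parameter tractability. Your version is more explicit about the bookkeeping (bounding $k$, $\util^*$, $\tfrac1\varepsilon$ in terms of the input encoding), whereas the paper dispatches (a) with the single observation $n^n=2^{n\log_2 n}$ and (b) by noting the runtime is already in the required product form.
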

\begin{proof}
	Note that $n^n = 2^{n \cdot \log_2(n)}$. Thus, we can rewrite the runtime bound from \cref{th:4-mdp} to be of the form $\mathcal{O}(2^{\text{poly}(n)})$, with $n$ being the size of the input, proving (a). For (b), observe that the runtime already is in the form of a product of a function of fixed parameters and a polynomial of the size of the MDP, thus satisfying~\cite[Def. 1]{grohe1999descriptive}.
\end{proof}

\begin{remark}[Extracting a Strategy]\label{sec:4-strats}
We are interested not only in the optimal CPT-value, but also a CPT-optimal strategy.
Since the optimal CPT-value corresponds to a point on the Pareto frontier, we can use \cite[Corollary 3.5]{EKVY08} to obtain the corresponding strategy.
Additionally, we have to reverse the transformation to a stopping MDP (see \cref{sec:3-stopping}) to obtain a strategy on the original MDP.
This is formally described at the end of \ifarxivelse{\cite[App. B-B]{techreport}}{\cref{app:3-stopping-details}}.
\end{remark}
	\section{Robustness of Our Result}\label{sec:5-title}

\subsection{Variations of the CPT-function $\cptfun$}\label{sec:5-cpt-variations}
We have always used $\cptfun$ as a black-box. Our results only require that it is Lipschitz-continuous (\cref{lem:4-cpt-lip-cont}). 
In particular, our theory still applies for all reasonable choices of utility function $\util$ and probability weighting functions $\pweight$, $\mweight$ (which are used for the decision weight $\decweight$). 
This is relevant because these functions differ between individuals, as they are affected by, e.g., cultural differences \cite{rieger2017estimating}.
Further, if an improved variant of CPT is developed (much like CPT~\cite{CPT92} replacing the original prospect theory~\cite{prospectTheory79}), our methods are most probably still applicable: 
This new function will still map prospects to reals and probably be Lipschitz-continuous, so that all our reasoning applies.

\subsection{Variations of the Objective}

\paragraph{Minimizing Instead of Maximizing.}
So far, our formulations only look at the supremum, as we deem it intuitive to maximize rewards.
Naturally, you could also use infimum to minimize the obtained reward, for example, to find the minimal possible CPT-value in an MDP (aka. worst-case scenario).
All our techniques still apply by replacing the supremum with an infimum.
In particular, computing prospects and the Pareto frontier are independent of the use of maximum/minimum.

\paragraph{Variants of Weighted Reachability.}
We can change the outcome that is associated with not reaching any target state.
Throughout the paper, we assumed that this corresponds to no change.
Depending on the modeled situation, it could be reasonable to say that not reaching any target should be penalized by giving some reward that is less than every other outcome.
Our theory can easily be extended to this case: Modify the definition of prospect induced by an MC to assign this penalty outcome to all non-target BSCCs.
With this change, all results follow without any further modification.
In particular, we can compute the Pareto frontier representing all prospects and solve the optimization problem on it.

\paragraph{Mean Payoff.}\label{sec:mp}

Given a reward function $r\colon\states\rightarrow\Rationals$, a mean payoff objective (a.k.a.\ long-run average reward) computes the average reward obtained in the limit: 
\[
	\text{MP}(\rho)=\lim_{n\rightarrow\infty} \left(\frac{1}{n}\sum_{i=0}^{n-1}r(\rho_i)\right)
\]
In \ifarxivelse{\cite[App. D]{techreport}}{\Cref{app:5-mp}}, we provide a reduction from mean payoff to weighted reachability under CPT, based on a similar reduction under expected utility given in~\cite{KM18-cvar}.
Thus, all our results carry over to mean payoff.

\begin{restatable}[Reduction of Mean Payoff to Weighted Reachability]{theorem}{mpreduction}\label{cor:cpt-mp-is-cpt-wr}
	We can transform an MDP $\MDP$ with a mean payoff objective $\objMP$ to another MDP $\MDP^f$ with a weighted reachability objective $\obj^{WR}$ in polynomial time, such that
	$$\CPTval(\MDP,\objMP)=\CPTval(\MDP^f,\obj^{WR})$$
\end{restatable}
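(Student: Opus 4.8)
The plan is to reduce the mean payoff objective to weighted reachability by exploiting the structure of optimal behaviour under mean payoff: in the long run, the reward accumulated by any path is determined entirely by the maximal end component (MEC) in which the path eventually stays, and within such a MEC the achievable long-run average rewards form a known set. The construction I would carry out mirrors the one for conditional value-at-risk in \cite{KM18-cvar}: first compute all MECs of $\MDP$; for each MEC, determine the set of long-run average rewards achievable inside it; then build $\MDP^f$ in which reaching a MEC is replaced by a transition to a fresh absorbing target state whose weighted-reachability reward equals the chosen long-run average of that MEC. The subtlety is that a single MEC can realize a whole interval (indeed, a rational polytope) of long-run average values depending on the strategy used inside it, so we cannot collapse a MEC to a single target; instead we must expose the relevant achievable mean-payoff values as a finite set of alternative targets, one per extreme achievable value (or per relevant vertex of the corresponding polytope).

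\textbf{Key steps, in order.} First, I would invoke the classical fact (e.g.\ from \cite{puterman}) that under any strategy, a path reaches some MEC almost surely after finitely many steps, and the mean payoff of the path equals the long-run average reward obtained inside that MEC; outside MECs the prefix contributes nothing in the limit. Second, for each MEC $(T,B)$ I would characterize the set $\mathrm{MPval}(T,B)$ of achievable long-run average rewards; this is a closed interval $[\ell_{T,B}, u_{T,B}]$ with rational endpoints, computable in polynomial time by solving the mean-payoff LP (minimizing and maximizing). Third, I would define $\MDP^f$: keep all transient states and actions of $\MDP$; for each MEC, replace its internal behaviour by a gadget that lets the player choose, upon ``committing'' to that MEC, any value $v \in \mathrm{MPval}(T,B)$ and move to a fresh absorbing state $t_v$ with weighted-reachability reward $v$. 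Since only finitely many distinct values can be \emph{outcomes} (the paper works with a finite outcome set $\outcomeset$, and CPT only depends on the induced prospect via the finite outcome vector), it suffices to expose the finitely many outcome-values lying in $[\ell_{T,B},u_{T,B}]$ together with the two endpoints; randomizing between two adjacent exposed targets then reproduces any intermediate achievable distribution, so the set of inducible prospects is preserved. Fourth, I would argue the CPT-values coincide: by \cref{thm:3-defs-align} the CPT-value depends only on the set of prospects $\{\prospect(\MDP^\strat, \obj) \mid \strat \in \Strats\}$ inducible over the finite outcome vector $\vec o$, and the construction establishes a correspondence between strategies of $\MDP$ under $\objMP$ and strategies of $\MDP^f$ under $\obj^{WR}$ that preserves exactly this set of induced prospects (the probability of each outcome is matched: the probability of eventually obtaining long-run average $o_i$ in $\MDP$ equals the probability of reaching a target with reward $o_i$ in $\MDP^f$). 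Hence the suprema of $\cptfun$ over these identical prospect sets agree. Fifth, I would observe the transformation is polynomial: computing MECs is polynomial \cite{DBLP:journals/jacm/CourcoubetisY95}, each MEC contributes an LP solve and a bounded gadget, and the number of fresh states is linear in $|\states| \cdot k$.

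\textbf{The main obstacle} I expect is the bookkeeping around which mean-payoff values inside a MEC need to be exposed as targets and the argument that randomizing between exposed targets recovers \emph{every} inducible prospect — in particular, handling the fact that a strategy in $\MDP$ may enter a MEC with some probability and \emph{then} still leave it (if the MEC is not bottom), or split probability mass across several MECs and transient sinks, so the correspondence of strategies must be set up carefully at the level of the whole path distribution rather than MEC-by-MEC. The cleanest route is to first reduce $\MDP$ (for the mean-payoff objective) to an equivalent MDP where committing to a MEC is an explicit one-shot action, reusing the MEC-quotient machinery already developed in \cref{sec:3-stopping}, and only then read off the weighted-reachability instance; this keeps the prospect-preservation argument essentially the same as in \cref{th:3-core-theorem}. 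A secondary, more routine obstacle is verifying that the endpoints $\ell_{T,B}, u_{T,B}$ are rational and poly-time computable, which follows from standard mean-payoff LP duality \cite{puterman}. The full details are deferred to \ifarxivelse{\cite[App. D]{techreport}}{\Cref{app:5-mp}}.
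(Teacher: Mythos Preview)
Your high-level plan matches the paper's: pass to the MEC quotient, add a stay-action per MEC leading to an absorbing target carrying the MEC's mean-payoff value, and argue prospect equivalence. But you overcomplicate the gadget, and the justification you give for it does not work under CPT.

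The specific gap is the claim that ``randomizing between two adjacent exposed targets then reproduces any intermediate achievable distribution, so the set of inducible prospects is preserved.'' This is false for CPT. If in $\MDP$ a strategy commits to a MEC and realizes mean payoff $v \in (\ell_{T,B}, u_{T,B})$, the induced prospect has an atom at the outcome $v$; in your $\MDP^f$, randomizing between the adjacent exposed targets $a < v < b$ produces a prospect with atoms at $a$ and at $b$, not at $v$. These are genuinely different prospects and in general have different $\cptfun$-values---this is precisely where CPT departs from expected utility. So your construction does \emph{not} preserve the set of inducible prospects, and your prospect-matching argument does not establish $\CPTval(\MDP,\objMP)=\CPTval(\MDP^f,\obj^{WR})$. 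The paper sidesteps this entirely: it exposes only the \emph{maximum} achievable mean payoff per MEC as a single target and argues that this already captures the supremum, because CPT respects first-order stochastic dominance---replacing any realized outcome by a larger one with the same probability can only increase $\cptfun$. Hence an optimal strategy never wants a sub-maximal mean payoff in a committed MEC, your intermediate and minimum targets are unnecessary, and the reduction is one stay-target per MEC (your construction would in fact also give the correct value, since it includes the maximum among its options, but only this stochastic-dominance argument---not yours---proves it).
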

\begin{proof}[Proof Sketch]
	For mean payoff, the decisive parts of an MDP are the infinite tails, which occur within a MEC of the MDP.
	For each MEC, it is possible to compute the mean payoff within it.
	Using this, one can add an additional state for each MEC in the MDP, that represents the idea of staying in the MEC forever (similar to the $\stay$-action in \Cref{sec:3-stopping}) and obtaining its mean payoff.
	A strategy can choose whether it wants to leave the MEC, making the MEC transient and irrelevant to the mean payoff; or it can choose to stay within the MEC, obtaining its mean payoff.
	
	Each of the newly added states has a reward equal to the mean payoff of the respective MEC. 
    We can view it as an MDP with a weighted reachability objective that computes the same prospects as the MDP with the mean payoff objective.
\end{proof}

	\section{Conclusion}\label{sec:6-title}

We have thoroughly investigated MDPs with CPT, in particular (i) giving a new intuitive definition of CPT-value in an MDP, (ii) providing an algorithm that correctly approximates the CPT-value for every precision $\varepsilon>0$, and (iii) establishing results on the strategy and computational complexity.
These results are based on connecting the CPT-value with multi-objective reachability queries.

We list several interesting directions for future work:
Implementing the algorithm allows to evaluate both its performance as well as the quality of the decisions that are recommended.
To improve the former, we can employ a branch-and-bound algorithm, utilizing the information that the Pareto frontier only intersects with a fraction of the possible hypercubes. Further, outcomes are ordered, so some prospects are clearly dominated by others.
This reduces the number of points that have to be considered in the optimization problem.
Regarding the quality of decisions recommended, it is also interesting to compare CPT with other theories of risk like mean-variance~\cite{markowitz1991foundations}, conditional value-at-risk~\cite{rockafellar2000optimization}, or entropic risk~\cite{DBLP:journals/fs/FollmerS02}.

Our definition of CPT-value employs a kind of \enquote{static} semantics, in the sense that we fix a complete strategy for the whole MDP and then view it as a prospect.
A \enquote{dynamic} semantics is also of interest, where the current state and the history that led up to it affect the decision; such a dynamic prospect theory has recently been proposed~\cite{tymula2023dynamic}.
This comes with several complications and design choices to be made, as discussed in~\cite[App. C]{wakker2010prospect}: In particular, under non-expected utility, it makes a difference whether we propagate rewards forward or backward through the graph.
It is open whether the nested definition of CPT-value in~\cite{lin2013dynamic,lin2018probabilistically} coincides with one of these views or is something else entirely.

Finally, extending our theory to multi-player environments (cf.~\cite{DBLP:journals/tac/EtesamiSMP18,danis2023multi}) would allow to lift the unrealistic assumption that all players behave rationally (i.e.\ according to expected utility theory).
Similarly, it is unrealistic to assume that all transition probabilities are fully known. 
We refer to~\cite{kahneman1982judgment} for a psychological analysis of how people judge probabilities and to~\cite[Part III]{wakker2010prospect} for a thorough discussion of this setting and how CPT extends to it.

	\bibliographystyle{IEEEtran}
	\bibliography{ref}

    \ifarxivelse{}{
	\appendices
	\crefalias{section}{appendix} 
	\crefalias{subsection}{appendix} 
	\crefalias{subsubsection}{appendix}
	\section{Additional Details on Preliminaries}\label{app:2-extended-prelims}

\subsection{Common Utility and Probability Weighting Functions}\label{app:2-u-and-w}

This appendix discusses the utility function $\util$ and probability weighting functions $\weightP$ and $\weightM$.
All three functions are increasing (f is \emph{increasing} if for $x\geq x'$, $f(x)\geq f(x')$) and continuous.

In \Cref{fig:2-functions-app}, we depict commonly used utility and probability weighting functions.
	The utility function is concave for gains and convex for losses~\cite[Sec. 1.3]{CPT92}.
	This matches the intuition given above that utility increases less than linearly with high gains and that losses are weighed more than gains.
	The probability weighting functions overestimate smaller probabilities and underestimate higher probabilities.
	For more details, we refer to \cite[Chapter 7.2]{wakker2010prospect}.
	
	\cite[Section 2.3]{CPT92} present standard functions for the utility function~$\util$ and the probability weighting functions $\pweight$, $\mweight$:
	$$\util(x)=\begin{cases}
		x^\alpha & \text{if}\;x\geq0\\
		-\lambda(-x)^\beta& \text{if}\;x<0
	\end{cases}$$
	and
	$$\weightP(x)=\frac{x^\gamma}{(x^\gamma+(1-x)^\gamma)^\frac{1}{\gamma}}$$
	$$\weightM(x)=\frac{x^\delta}{(x^\delta+(1-x)^\delta)^\frac{1}{\delta}}$$
	where it was found in~\cite{CPT92} that good parameters are $\alpha=\beta=0.88$, $\lambda=2.25$, $\gamma=0.61$, $\delta=0.69$.
	
	Note that we changed the parameter $\alpha$ of the utility-function $\util$ in \cref{fig:2-u-pic} to $0.3$ (similar to \cite[Figure 1]{barberis2013thirty}) to make the S-shape clearly visible.
	The actual shape of $\util$ with the recommended value of $0.88$ is shown in \cref{fig:true-u}.
	
	We highlight that all functions are non-convex. Further, they all use roots, and thus applying them can result in irrational numbers.

\begin{figure}
	\centering
	\begin{subfigure}{0.45\linewidth}
		\centering
		\includegraphics[width=\linewidth]{u-0.5-s.png}
		\caption{}
		\label{fig:2-u-pic}
	\end{subfigure}
	\begin{subfigure}{0.45\linewidth}
		\centering
		\includegraphics[width=\linewidth]{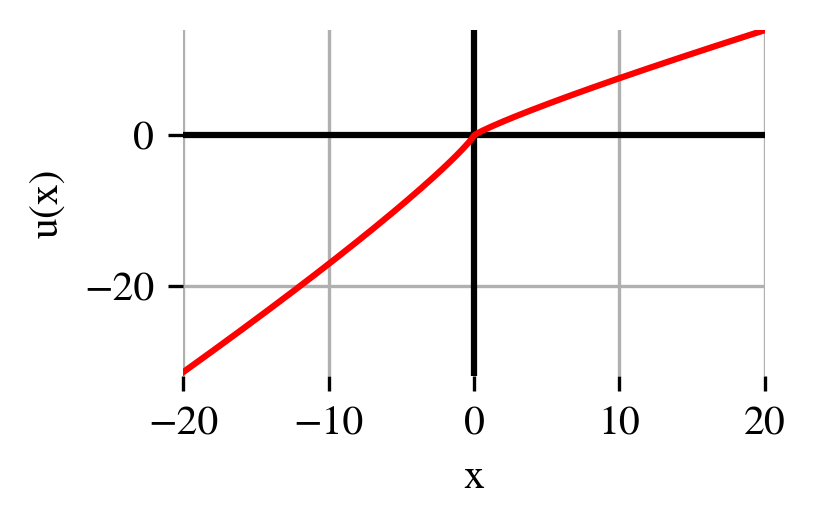}
		\caption{}
		\label{fig:true-u}
	\end{subfigure}
    
	\begin{subfigure}{0.45\linewidth}
		\centering   
		\includegraphics[width=\linewidth]{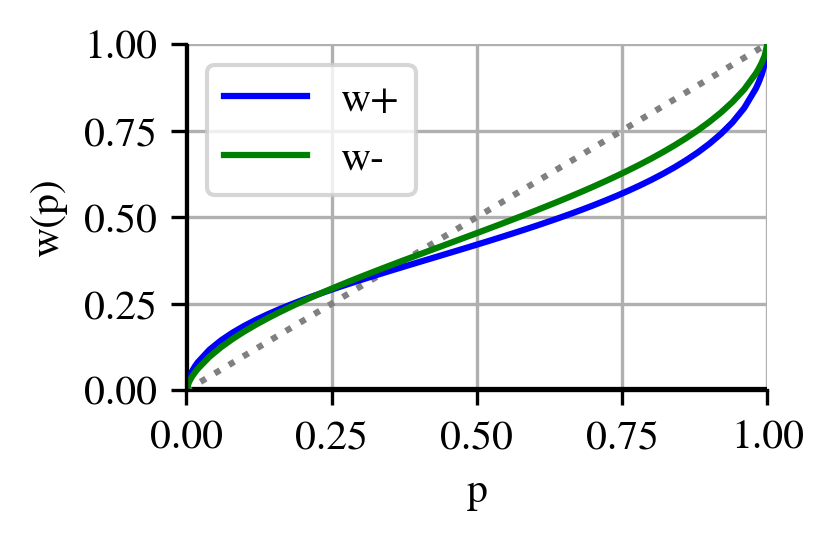}   
		\caption{}
		\label{fig:2-w-pic}
	\end{subfigure}
	\caption{(a,b) Standard function for the utility function $\util$, with $\alpha=0.3$ or $\alpha=0.8$, where (a) was also shown in \cref{fig:2-functions}. (c) Standard functions for the probability weighting functions $\pweight$ and $\mweight$. 
	}
	\label{fig:2-functions-app}
\end{figure}

\subsection{Definition of CPT-function}\label{app:2-CPT}

We list the major components of CPT. Recall that we fix a prospect $\vec{x} = (\outcomevector,\vec{p})$ with $k$ different outcomes.
\begin{itemize}
	\item CPT uses a utility function $\util: \outcomeset\to \Reals$ with the same purpose as in expected utility theory.
	\item CPT splits the outcomes $\outcomevector$ into gains and losses because the relation to the current status quo greatly affects human preference.
	In the following, let $o_j = 0$ be the outcome indicating that nothing changes. 
	As the outcomes are ordered, all outcomes with $1\leq i < j$ are losses ($o_i < 0$) and all outcomes with $j<i\leq k$ are gains ($o_i > 0$).
	\item CPT uses probability weighting functions $\weightP$ and $\weightM$, both of type $[0,1] \to [0,1]$. These modify the actual probabilities since humans tend to under-estimate large probabilities and over-estimate small probabilities, see \Cref{app:2-u-and-w}. 
	It uses two functions because depending on whether humans are weighting the probability of a gain ($\weightP$) or a loss ($\weightM$), their perception of probabilities changes.
	\item CPT multiplies each outcome with a \emph{decision weight}.
	While in expected utility theory, this weight is just the individual probability of each outcome,
	in CPT, we instead consider the \emph{marginal contribution} of the outcome. 
	In other words, the decision weight of a positive outcome $o_i$ depends on the difference between gaining at least $o_i$ and strictly more than $o_i$; dually, for a negative outcome, the decision weight depends on the difference between gaining at most $o_i$ and strictly less than $o_i$.
	Formally, we define the gain rank $\gainrank$~\cite[Def 5.4.1]{wakker2010prospect}, the probability of getting strictly more than $o_i$; and the loss rank $\lossrank$~\cite[Eq. 7.6.3]{wakker2010prospect}, the analog with less instead of more.
	\begin{align*}
		\gainrank(\probabilities,i) \eqdef \sum_{m = i+1}^k p_m \quad \quad
		\lossrank(\probabilities,i) \eqdef \sum_{m = 1}^{i-1} p_m
	\end{align*}
	Note that this definition assumes that the outcome vector is ordered increasingly.
	For further information on these concepts, we refer to the chapters in \cite{wakker2010prospect} surrounding their definition and to the analysis in the original CPT paper~\cite[Sec. 1.1]{CPT92}.
	
	Using gain rank and loss rank, we define the decision weight $\decweight(\probabilities, i)$ of an outcome $o_i$ as follows:
	
\end{itemize}

	\begin{align*}
		\decweight_i(\vec{x}) \eqdef
		\begin{cases}
			\weightP(p_i + \gainrank(\probvector,i)) - \weightP(\gainrank(\probvector,i))     & (o_i > 0)\\
			0 & (o_i = 0)\\
			\weightM(p_i + \lossrank(\probvector,i)) - \weightM(\lossrank(\probvector,i))& (o_i < 0)
		\end{cases}
	\end{align*}

\begin{algorithm}[t]
	\caption{CPT on a single prospect}
\label{alg:get-cpt-recurse}
\begin{algorithmic}[1]
	\Require $\prosp=[o_1:p_1,\ldots,o_k:p_k]$ \Comment{Prospect}
	\State Sort $\prosp$ descending in the outcomes
	\State $c,cp,cm\gets 0$
	\For{$(o,p)$ in $\prosp$}
	\If{$o>0$}
	\State $c$ += $\util(o)(\pweight(p+cp) - \pweight(cp))$
	\State $cp$ += $p$
	\Else
	\State $c$ += $\util(o)(\mweight(p+cm) - \mweight(cm))$
	\State $cm$ += $p$
	\EndIf
	\EndFor
	\Ensure $c$ 
\end{algorithmic}
\end{algorithm}

Using these components of CPT, we now give the formal definition of the CPT-function, see~\cite[Eq. 9.2.2]{wakker2010prospect}.
Note that the label is the same as in the main body.
This is done for all equations and statements that appear both in the main body and the appendix to avoid having different labels for the same content.

\begin{equation}\label{eq:app-2-cpt}
\cptfun(\vec{x}) = \sum_{i=1}^{k}  \util(o_i) \cdot \decweight(\prosp, i) \tag{\ref{eq:2-cpt}}
\end{equation}

For our complexity results, we assume that we are given an oracle for the (potentially elementary) functions $\mweight$,$\pweight$, and $\util$, i.e.\ they can be evaluated in constant time.

For the convenience of an interested reader, we also show an algorithm for evaluating the CPT-function on a given prospect in \cref{alg:get-cpt-recurse}.
In this algorithm, we ease the computation by accumulating the probabilities in a helper variable.

\subsection{Extended Example \cref{ex:2-running}}\label{app:2-exs}
	Consider again \cref{ex:intro-motivate} where we have a coupon for a bet and can choose between two bets.
	Formally, we have an MDP $\MDP=(\states,\initstate,\act,\trans)$ with $\states=\{s_0,s_1,s_2,s_3,s_4\}$, and an initial state $s_0$, as depicted in \cref{fig:app-motivating-example-MDP}. 
	We omit the self looping actions on states $s_1, \ldots, s_4$.
	For $s_0$, we have $\act(s_0)=\{a_1,a_2\}$, 
	with each action representing one of the bets: 
    
    The safe bet resulting in a 95\% chance of 20€, $\trans(s_0,a_1)=(s_1\rightarrow 0.95, s_2\rightarrow 0.05)$, 
    and the risky bet resulting in a 51\% chance of getting 50€, and a 41\% chance of loosing 5€, $\trans(s_0,a_2)=(s_2\rightarrow 0.05, s_3\rightarrow 0.44, s_4\rightarrow 0.51)$.
	We have a set of target states $\targets=\{s_1,s_3,s_4\}$ and the reward function  $r(s_1)=20$, $r(s_3)=-5$, $r(s_4)=50$.

	\begin{figure}
		\centering
		 \begin{tikzpicture}[xscale=1,yscale=0.7]
    \node[draw,circle, minimum size=0.5cm] (s0) at (0,0){$s_0$};

    \node[draw,circle,fill=black,inner sep=0pt,minimum size=5pt] (a1) at (-1.5,0){};
   
    \node[draw,circle,fill=black,inner sep=0pt,minimum size=5pt] (a2) at (1.7,0){};

    \node[draw,circle, minimum size=0.5cm] (s1) at (-3,-1.2){$s_1$};
    \node[draw,circle, minimum size=0.5cm] (s2) at (-0.5,-1.2){$s_2$};
    \node[draw,circle, minimum size=0.5cm] (s3) at (1,-1.2){$s_3$};
    \node[draw,circle, minimum size=0.5cm] (s4) at (3,-1.2){$s_4$};

    \node (s1c) at (-3,-2){$20$};
    \node (s2c) at (-0.5,-2){$0$};
    \node (s3c) at (1,-2){$-5$};
    \node (s4c) at (3,-2){$50$};

    \draw[->,thick] (-.75,0.75) -- (s0);
    \draw[->,thick] (s0) -- node[fill=white, inner sep=1pt] {$a_1$} (a1);
    \draw[->,thick] (s0) -- node[fill=white, inner sep=1pt] {$a_2$} (a2);

    \draw[->,thick] (a1) -- node[fill=white, inner sep=1pt] {$.95$} (s1);
    \draw[->,thick] (a1) -- node[fill=white, inner sep=1pt] {$.05$} (s2);

    \draw[->,thick] (a2) -- node[fill=white, inner sep=1pt] {$.05$} (s2);
    \draw[->,thick] (a2) -- node[fill=white, inner sep=1pt] {$.44$} (s3);
    \draw[->,thick] (a2) -- node[fill=white, inner sep=1pt] {$.51$} (s4);

\end{tikzpicture}
		\caption{Running example as MDP (repeat of \Cref{fig:motivating-example-MDP} for convenience).}
		\label{fig:app-motivating-example-MDP}
	\end{figure}
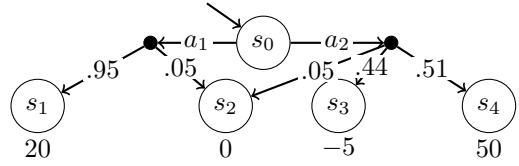
	
	\begin{figure}
		\centering
		 \begin{tikzpicture}
    \node[draw,circle, minimum size=0.9cm] (s0) at (0,0){$s_0$};

    \node[draw,circle,fill=black,inner sep=0pt,minimum size=5pt] (a1) at (.75,3.5){};
   
    \node[draw,circle,fill=black,inner sep=0pt,minimum size=5pt] (a2) at (.75,-3.5){};

    \node[draw,circle, minimum size=0.9cm] (s1) at (2.5,5){$s_1$};
    \node[draw,circle, minimum size=0.9cm] (s2) at (2.5,2){$s_2$};
    \node[draw,circle, minimum size=0.9cm] (s3) at (2.5,-2){$s_3$};
    \node[draw,circle, minimum size=0.9cm] (s4) at (2.5,-5){$s_4$};


    \draw[->,thick] (-.75,0) -- (s0);
    \draw[->,thick] (s0) -- node[fill=white, inner sep=1pt] {$a_1$} (a1);
    \draw[->,thick] (s0) -- node[fill=white, inner sep=1pt] {$a_2$} (a2);

    \draw[->,thick] (a1) -- node[fill=white, inner sep=1pt] {$.95$} (s1);
    \draw[->,thick] (a1) -- node[fill=white, inner sep=1pt] {$.05$} (s2);

    \draw[->,thick] (a2) -- node[left,fill=white, inner sep=1pt] {$.05$} (s2);
    \draw[->,thick] (a2) -- node[fill=white, inner sep=1pt] {$.44$} (s3);
    \draw[->,thick] (a2) -- node[fill=white, inner sep=1pt] {$.51$} (s4);

    \node[draw,circle, minimum size=0.9cm] (s5) at (6,7.5){$s_5$};
    \node[draw,circle, minimum size=0.9cm] (s6) at (6,6.5){$s_6$};
    \node[draw,circle, minimum size=0.9cm] (s7) at (6,5.5){$s_7$};
    \node[draw,circle, minimum size=0.9cm] (s8) at (6,4.5){$s_8$};
    \node[draw,circle, minimum size=0.9cm] (s9) at (6,3.5){$s_9$};
    \node[draw,circle, minimum size=0.5cm] (s10) at (6,2.5){$s_{10}$};
    \node[draw,circle, minimum size=0.5cm] (s11) at (6,1.5){$s_{11}$};
    \node[draw,circle, minimum size=0.5cm] (s12) at (6,.5){$s_{12}$};
    \node[draw,circle, minimum size=0.5cm] (s13) at (6,-.5){$s_{13}$};
    \node[draw,circle, minimum size=0.5cm] (s14) at (6,-1.5){$s_{14}$};
    \node[draw,circle, minimum size=0.5cm] (s15) at (6,-2.5){$s_{15}$};
    \node[draw,circle, minimum size=0.5cm] (s16) at (6,-3.5){$s_{16}$};
    \node[draw,circle, minimum size=0.5cm] (s17) at (6,-4.5){$s_{17}$};
    \node[draw,circle, minimum size=0.5cm] (s18) at (6,-5.5){$s_{18}$};
    \node[draw,circle, minimum size=0.5cm] (s19) at (6,-6.5){$s_{19}$};
    \node[draw,circle, minimum size=0.5cm] (s20) at (6,-7.5){$s_{20}$};

    \node[draw,circle,fill=black,inner sep=0pt,minimum size=5pt] (a11) at (4,6){};
    \draw[->,thick] (s1) -- node[fill=white, inner sep=1pt] {$a_1$} (a11);
    \draw[->,thick] (a11) -- node[fill=white, inner sep=1pt] {$.95$} (s5);
    \draw[->,thick] (a11) -- node[fill=white, inner sep=1pt] {$.05$} (s6);

    \node[draw,circle,fill=black,inner sep=0pt,minimum size=5pt] (a12) at (4,5){};
    \draw[->,thick] (s1) -- node[fill=white, inner sep=1pt] {$a_2$} (a12);
    \draw[->,thick] (a12) -- node[fill=white, inner sep=1pt] {$.05$} (s6);
    \draw[->,thick] (a12) -- node[fill=white, inner sep=1pt] {$.44$} (s7);
    \draw[->,thick] (a12) -- node[fill=white, inner sep=1pt] {$.51$} (s8);

    \node[draw,circle,fill=black,inner sep=0pt,minimum size=5pt] (a21) at (4,2.5){};
    \draw[->,thick] (s2) -- node[fill=white, inner sep=1pt] {$a_1$} (a21);
    \draw[->,thick] (a21) -- node[fill=white, inner sep=1pt] {$.95$} (s9);
    \draw[->,thick] (a21) -- node[fill=white, inner sep=1pt] {$.05$} (s10);

    \node[draw,circle,fill=black,inner sep=0pt,minimum size=5pt] (a22) at (4,1.5){};
    \draw[->,thick] (s2) -- node[fill=white, inner sep=1pt] {$a_2$} (a22);
    \draw[->,thick] (a22) -- node[fill=white, inner sep=1pt] {$.05$} (s10);
    \draw[->,thick] (a22) -- node[fill=white, inner sep=1pt] {$.44$} (s11);
    \draw[->,thick] (a22) -- node[fill=white, inner sep=1pt] {$.51$} (s12);

    \node[draw,circle,fill=black,inner sep=0pt,minimum size=5pt] (a31) at (4,-1.5){};
    \draw[->,thick] (s3) -- node[fill=white, inner sep=1pt] {$a_1$} (a31);
    \draw[->,thick] (a31) -- node[fill=white, inner sep=1pt] {$.95$} (s13);
    \draw[->,thick] (a31) -- node[fill=white, inner sep=1pt] {$.05$} (s14);
    
    \node[draw,circle,fill=black,inner sep=0pt,minimum size=5pt] (a32) at (4,-2.5){};
    \draw[->,thick] (s3) -- node[fill=white, inner sep=1pt] {$a_2$} (a32);
    \draw[->,thick] (a32) -- node[fill=white, inner sep=1pt] {$.05$} (s14);
    \draw[->,thick] (a32) -- node[fill=white, inner sep=1pt] {$.44$} (s15);
    \draw[->,thick] (a32) -- node[fill=white, inner sep=1pt] {$.51$} (s16);

    \node[draw,circle,fill=black,inner sep=0pt,minimum size=5pt] (a41) at (4,-5){};
    \draw[->,thick] (s4) -- node[fill=white, inner sep=1pt] {$a_1$} (a41);
    \draw[->,thick] (a41) -- node[fill=white, inner sep=1pt] {$.95$} (s17);
    \draw[->,thick] (a41) -- node[fill=white, inner sep=1pt] {$.05$} (s18);

    \node[draw,circle,fill=black,inner sep=0pt,minimum size=5pt] (a42) at (4,-6){};
    \draw[->,thick] (s4) -- node[fill=white, inner sep=1pt] {$a_2$} (a42);
    \draw[->,thick] (a42) -- node[fill=white, inner sep=1pt] {$.05$} (s18);
    \draw[->,thick] (a42) -- node[fill=white, inner sep=1pt] {$.44$} (s19);
    \draw[->,thick] (a42) -- node[fill=white, inner sep=1pt] {$.51$} (s20);

    \node (s5c) at (6.75,7.5){$40$};
    \node (s6c) at (6.75,6.5){$20$};
    \node (s7c) at (6.75,5.5){$15$};
    \node (s8c) at (6.75,4.5){$70$};
    \node (s9c) at (6.75,3.5){$20$};
    \node (s10c) at (6.75,2.5){$0$};
    \node (s11c) at (6.75,1.5){$-5$};
    \node (s12c) at (6.75,0.5){$50$};
    \node (s13c) at (6.75,-0.5){$15$};
    \node (s14c) at (6.75,-1.5){$-5$};
    \node (s15c) at (6.75,-2.5){$-10$};
    \node (s16c) at (6.75,-3.5){$45$};
    \node (s17c) at (6.75,-4.5){$70$};
    \node (s18c) at (6.75,-5.5){$50$};
    \node (s19c) at (6.75,-6.5){$45$};
    \node (s20c) at (6.75,-7.5){$100$};

\end{tikzpicture}
		\caption{\cref{ex:intro-motivate} as MDP. 
			We omit the self looping actions on states $s_5, \ldots, s_{20}$.
			Note that many of the states $s_5, \ldots, s_{20}$ share the same reward and could be merged.
			We show all of them to avoid confusion in the picture.
		}
		\label{fig:app-motivating-example-MDP-sequential}
	\end{figure}
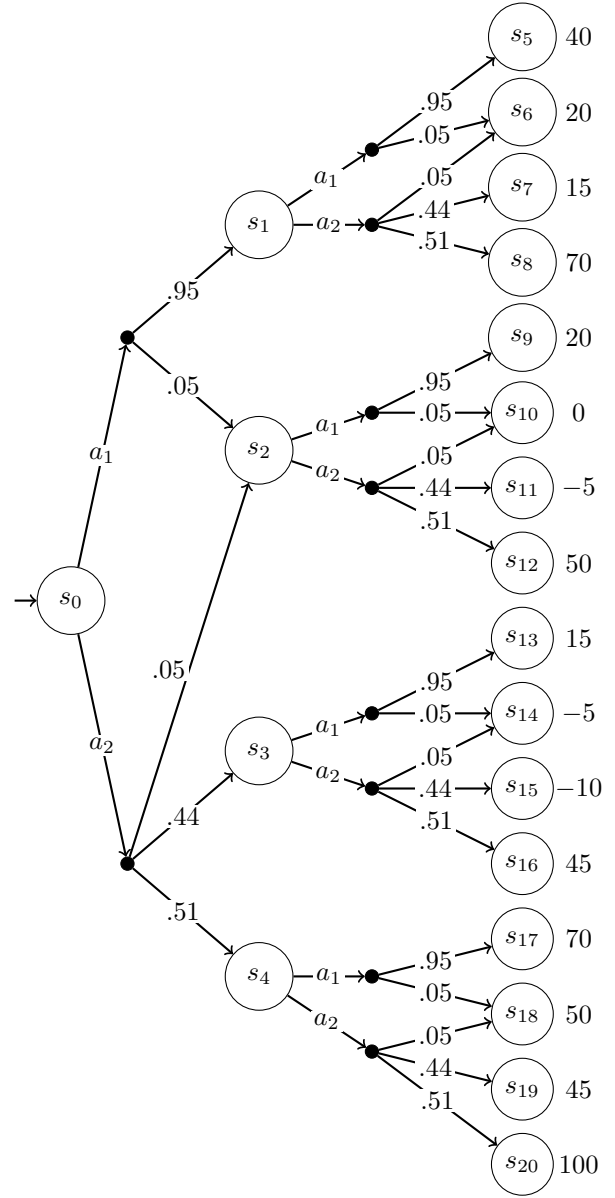
	
	Recall that we assume w.l.o.g. that no target state has a reward of 0; thus, $s_2$ is not a target state.
	One could make it a target state with reward 0, but as every path that does not reach a target gets reward 0, this is unnecessary.
	In \cref{sec:3-title}, our notation will be simplified by assuming that no target state has a reward of 0. 
	
	The actions correspond to the following prospects: 
    $\prosp_1=[0:0.05,\;20:0.95]$ and 
    $\prosp_2=[-5:0.44,\;0:0.05,\;50:0.51]$.
	
	Using the standard choices for $\util$, $\weightM$, and $\weightP$, we compute 
    $\cptfun(\prosp_1) \approx 11.07$ and $\cptfun(\prosp_2) \approx 10.19$.
	On the other hand, 
    we get $\eu(\prosp_1)=19$ and $\eu(\prosp_2)=23.3$, 
    with $\util(x)=x$.
	Thus, expected utility theory recommends choosing $a_2$ while CPT predicts that most humans prefer $a_1$. 	

    We can extend the example further by adding a second choice for a bet.
    After betting once, we bet again with the same conditions.
    This results in a tree-like MDP shown in \cref{fig:app-motivating-example-MDP-sequential}.
    The possible strategies are now a combination of playing $a_1$ or $a_2$ twice: 
    $(a_1,a_1)$ or $(a_1,a_2)$ or $(a_2,a_1)$ or $(a_2,a_2)$.

    For each strategy, we get a prospect:
    \begin{itemize}
    \item$\prosp_{11} = [0:0.025,\;20:0.095,\;40:0.9025]$
    \item$\prosp_{12} = [-5~:~0.022,\; 0~:~0.0025,\;15~:~0.418,\;$
    
    $20~:~0.0475,\;50~:~0.0255,\;70:0.4845]$
    \item$\prosp_{21}$ = $\prosp_{12}$.
    \item$\prosp_{22} = [-10:0.1936,\;-5:0.044,\;0:0.0025,\;$
    
    $45:0.4488,\;50:0.051,\;100:0.2601]$
    \end{itemize}
    Evaluating each of them with expected utility and CPT yields:
    
    \begin{tabular}{c|c|c}
         Prospect & $\eu$ & $\cptfun$ \\
         \hline
         $\prosp_{11}$& 38& 21.79 \\
         $\prosp_{12}$ &42.3&\textbf{21.99}\\
         $\prosp_{21}$&42.3&\textbf{21.99}\\
         $\prosp_{22}$&\textbf{46.6}&21.18\\
    \end{tabular}
    
    Note that the expected utility can be computed exactly, and the CPT-value in this case is an approximation.
    This shows that expected utility chooses the risky bet twice (resulting in an expected utility of $46.6$), whereas CPT recommends the safe and risky bet each once (resulting in a CPT value of $21.99$).
    
    We highlight that it is only possible to compute this manually and simply because of the tree structure.
    We highlight that this example can only be computed manually because of the simple tree structure. 
    More complex structures, in particular loops, require computing the Pareto frontier as shown in \cref{alg:4-cpt-mdp}.
	\qee

\subsection{Definition of CPT-value in previous works}\label{app:2-cptval}

Intuitively, we replace the expectation operator in \cref{eq:2-eu-val} with the CPT-function.
We use the functions $\weightP$ and $\weightM$, and compute the gain rank and loss rank, see \Cref{app:2-CPT}.
Further, we define the set of possible outcomes as follows: $\outcomes(\objWR) \eqdef \{ o \mid \exists s\in\targets. \rew(s) = o\} \cup \{0\}$.
This is useful, because CPT sums over all possible outcomes.
Furthermore, we split positive and negative outcomes, so we write
$\outcomes^+(\objWR) \eqdef \{ o\in \outcomes(\objWR)\mid o>0\}$, and analogously $\outcomes^-(\objWR)$.  We define the CPT-value as follows:
\vspace{-1em}
{\par\nopagebreak\small%
	\begin{align}
		\scriptsize
		\tag{\ref{eq:2-cpt-val}}
		&\CPTval(\MDP,\obj) \eqdef \sup_{\strat\in\Strats} \bigg(\\
		&
		\sum_{o \in \outcomes^+(\obj)} \util(o) \cdot \left( \weightP\left( \probability_{\MDP}^{\strat}(\obj(\infinitepath) \geq o) \right) - \weightP\left( \probability_{\MDP}^{\strat}(\obj(\infinitepath) > o) \right)\right) + \nonumber \\
		&\sum_{o \in \outcomes^-(\obj)} \util(o) \cdot \left(\weightM\left( \probability_{\MDP}^{\strat}(\obj(\infinitepath) \leq o) \right) - \weightM\left( \probability_{\MDP}^{\strat}(\obj(\infinitepath) < o) \right)\right) \nonumber
		\bigg)
	\end{align}
}

Note that this definition mimics the style of other papers considering MDPs with CPT, e.g.~\cite[Chap. 3.7]{DBLP:journals/ftml/A022}, but is different because, unlike us, they consider infinitely many outcomes.
We refer to~\cite[App. 9.7]{wakker2010prospect} for a derivation of their formulation.

We provide a new and equivalent (shown in \Cref{thm:3-defs-align}) definition of CPT-value in \Cref{sec:3-new-cpt-value}, based on viewing an MC as a prospect.
Note that this is exactly what we have done in \Cref{ex:2-running}: View the MC resulting from choosing an action as a prospect and evaluate that using expected utility or the CPT-function.

	\section{Additional details for Section~\ref{sec:3-title}}\label{app:3-title}
\subsection{Equivalence of CPT-Value Definitions --- Proof of \Cref{thm:3-defs-align}}\label{app:3-proof-thm-defs-align}

Before we prove \cref{thm:3-defs-align}, we first explicitly prove that the definitions align in the special case of expected utility.
This is instructive and further supports the correctness of the definition of induced prospect.
We define the expected value as follows, using the $\eu$-function on prospects as defined in \cref{sec:2-prelim-eu}:
\begin{equation}
	\widehat{\val}(\MDP,\obj) \eqdef \sup_{\strat\in\Strats}\eu(\prospect(\MDP^\strat,\obj)) \label{eq:3-eu-value}
\end{equation}

\begin{lemma}
	The definitions of expected value in \cref{eq:2-eu-val,eq:3-eu-value} coincide. Formally:
	\[
	{\val}(\MDP,\obj) = \widehat{\val}(\MDP,\obj)
	\]
\end{lemma}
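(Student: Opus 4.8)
The plan is to prove the equality strategy-by-strategy: I would show that for every fixed strategy $\strat\in\Strats$ we have $\Expectation_{\MDP}^{\strat}[\util(\obj(\infinitepath))] = \eu(\prospect(\MDP^\strat,\obj))$, and then take the supremum over $\strat$ on both sides. This is essentially a warm-up for \cref{thm:3-defs-align}; the point is that for expected utility the decision weights are just the probabilities, so the argument reduces to \cref{lem:3-defs-make-sense}.

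First I would fix $\strat$ and pass to the induced Markov chain $\MDP^\strat$. Writing $(\vec{o},\vec{p}) = \prospect(\MDP^\strat,\obj)$ for its induced prospect, the definition of $\eu$ in \cref{eq:2-eu-app} gives $\eu(\prospect(\MDP^\strat,\obj)) = \sum_{i=1}^{k} \util(o_i)\cdot p_i$. By \cref{lem:3-defs-make-sense}, $p_i = \probability_{\MDP}^{\strat}[\obj(\infinitepath)=o_i]$ for each $i$, so this equals $\sum_{i=1}^{k} \util(o_i)\cdot \probability_{\MDP}^{\strat}[\obj(\infinitepath)=o_i]$.

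Second, I would argue that this last sum is exactly the expectation of the discrete random variable $\util(\obj(\infinitepath))$. The variable $\obj(\infinitepath)$ takes values only in the finite set $\outcomes(\obj) = \{o_1,\dots,o_k\}$: by \cite[Thm. 3.2 and Cor. 3.1]{de1997formal} almost every path ends up in a BSCC, which is either an absorbing target (yielding some $o_i \neq 0$) or a non-target BSCC (yielding $0$), matching the case distinction in \cref{eq:3-obtainset}. Hence $\sum_{i=1}^{k} \probability_{\MDP}^{\strat}[\obj(\infinitepath)=o_i] = 1$ and, by the definition of the expectation over a countable set recalled in \cref{sec:2-expval-and-cptval}, $\Expectation_{\MDP}^{\strat}[\util(\obj(\infinitepath))] = \sum_{i=1}^{k} \util(o_i)\cdot\probability_{\MDP}^{\strat}[\obj(\infinitepath)=o_i]$. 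Combining the two displays gives $\Expectation_{\MDP}^{\strat}[\util(\obj(\infinitepath))] = \eu(\prospect(\MDP^\strat,\obj))$.

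Finally, since this holds for every $\strat\in\Strats$, taking $\sup_{\strat\in\Strats}$ on both sides yields ${\val}(\MDP,\obj) = \widehat{\val}(\MDP,\obj)$, which is the claim. I do not anticipate a genuine obstacle; the only point that needs a little care is verifying that $\obj(\infinitepath)$ is almost surely supported on the finite outcome set, so that the finite sum really is the expectation — but this is precisely what \cref{lem:3-defs-make-sense} together with the BSCC argument of \cite{de1997formal} provides. The genuinely technical version of this equivalence is \cref{thm:3-defs-align}, where $\cptfun$ replaces $\eu$ and the decision weights $\decweight$ must be unfolded.
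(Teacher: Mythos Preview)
Your proposal is correct and follows essentially the same route as the paper: unfold the definition of expected utility, invoke \cref{lem:3-defs-make-sense} to identify $p_i$ with $\probability_{\MDP}^{\strat}[\obj(\infinitepath)=o_i]$, and recognise the resulting finite sum as the expectation. The paper presents the same chain of equalities starting from $\val$ and ending at $\widehat{\val}$, whereas you fix $\strat$ first and take the supremum at the end; your added justification that $\obj(\infinitepath)$ is almost surely supported on the finite outcome set is a detail the paper leaves implicit under ``by definition of expectation.''
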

\begin{proof}
	The following chain of equations proves our goal.
	\begin{align*}
		\val(\MDP,\obj) &\eqdef \sup_{\strat\in\Strats} \Expectation_{\MDP}^{\strat}[\util(\obj(\infinitepath))] \tag{By \Cref{eq:2-eu-val}}\\
		&=\sup_{\strat\in\Strats} \sum_{o \in \outcomes(\objWR)} \util(o) \cdot \probability_{\MDP}^{\strat}[\obj(\infinitepath)=o] \tag{By definition of expectation}\\
		&=\sup_{\strat\in\Strats} \sum_{i=1}^k \util(o_i) \cdot \probability_{\MDP}^{\strat}[\obj(\infinitepath)=o_i] \tag{By definition of $\vec{o}$}\\
		&=\sup_{\strat\in\Strats} \sum_{i=1}^k \util(o_i) \cdot p_i, \text{where } (\vec{o},\vec{p}) = \prospect(\MDP^\strat,\obj) \tag{By \Cref{lem:3-defs-make-sense}}\\
		&=\sup_{\strat\in\Strats} \eu(\prospect(\MDP^\strat,\obj)) \tag{By \Cref{eq:2-eu-app}}
	\end{align*}    
\end{proof}

\defsAlign*

\begin{proof}
    The proof for CPT is analogous to the one for expected utility, but cumbersome due to the complicated definition of the CPT-function.
    Throughout the proof we let $\vec{x} = (\vec{o},\vec{p}) = \prospect(\MDP^\strat,\obj)$.
    Further, we split the outcomes $\vec{o}$ into positive and negative outcomes as described in \Cref{app:2-CPT}, i.e.\ we let $\vec{o}_j = 0$ be the outcome that nothing changes. 
    As the outcomes are ordered increasingly, all outcomes with $1\leq i < j$ are losses ($\vec{o}_i < 0$) and all outcomes with $j<i\leq k$ are gains ($\vec{o}_i > 0$).
    
    \textbf{Observation 1:} By definition of the decision weights, we have $\pi_j(\vec{x}) = 0$. Thus, we also have $\util(o_j) \cdot \pi_j(\vec{x}) = 0$.
    
    \textbf{Observation 2:} We have $\outcomes^-(\obj) = \{o_i \mid 1\leq i < j\}$ and $\outcomes^+(\obj) = \{o_i \mid j<i\leq k\}$.

    \medskip
    
	We start from the new definition of \Cref{eq:3-cpt-value}.
    \begin{align*}
        \widehat{\CPTval}(\MDP,\obj) &\eqdef\sup_{\strat\in\Strats}\cptfun(\prospect(\MDP^\strat,\obj)) \tag{By \Cref{eq:3-cpt-value}}\\
        &=\sup_{\strat\in\Strats}\sum_{i=1}^k \util(o_i) \cdot \pi_i(\vec{x})\tag{By \Cref{eq:2-cpt}}\\
        &=\sup_{\strat\in\Strats}\sum_{i=1}^{j-1} \util(o_i) \cdot \pi_i(\vec{x}) +\sum_{i=j+1}^{k} \util(o_i) \cdot \pi_i(\vec{x})\tag{By splitting the sum and using \textbf{Observation 1}}\\
        &=\sup_{\strat\in\Strats}\sum_{o\in\outcomes^-(\obj)} \util(o_i) \cdot \pi_i(\vec{x}) + \\&\phantom{=\sup_{\strat\in\Strats}}\sum_{o\in\outcomes^+(\obj)} \util(o_i) \cdot \pi_i(\vec{x})\tag{By \textbf{Observation 2}}
    \end{align*}

    The final line of this sequence of equations is almost the same as the definition of CPT-value in \Cref{eq:2-cpt-val}.
    It only remains to show that the decision weights are computed correctly.
    For this, we first phrase the gain rank $\gainrank$ in terms of the probability measure in the MC.
    \begin{align*}
        \gainrank(\vec{x},i) &\eqdef \sum_{m = i+1}^k p_m \tag{Definition of gain rank}\\
        &= \sum_{m = i+1}^k \probability_{\MDP}^{\strat}[\obj(\infinitepath)=o_m] \tag{By \Cref{lem:3-defs-make-sense}}\\
        &= \probability_{\MDP}^{\strat}[\obj(\infinitepath)>o_i] \tag{By definition of the ordering of $\vec{o}$}
    \end{align*}
    By using the same argument, we can also get the analogue for the probability to get something equal or better: $
    p_i + \gainrank(\vec{x},i) = \probability_{\MDP}^{\strat}[\obj(\infinitepath)\geq o_i].$
    Using this, we now show that the decision weights for positive outcomes are defined as in \Cref{eq:2-cpt-val}:
    
    \begin{align*}
        \decweight_i(\vec{x}) &\eqdef \weightP(p_i + \gainrank(\vec{x},i)) - \weightP(\gainrank(\vec{x},i)) \tag{Definition of decision weight for positive outcomes}\\
        &=  \left( \weightP\left( \probability_{\MDP}^{\strat}(\obj(\infinitepath) \geq o) \right) - \weightP\left( \probability_{\MDP}^{\strat}(\obj(\infinitepath) > o) \right)\right) \tag{By the above analysis of gain rank}
    \end{align*}

    The proof for negative outcomes follows by an analogous analysis of the loss rank $\lossrank$.
    Thus, we conclude that \Cref{eq:2-cpt-val} and \Cref{eq:3-cpt-value} coincide, i.e.\ they describe the same value.
\end{proof}

\subsection{Reduction to Stopping Markov Decision Processes}\label{app:3-stopping-details}

For the formal proof, we first recall the definition of MEC Quotient MDP, building on~\cite[Alg. 3.3]{de1997formal}. Apart from adapting the notation, one key difference is that we introduce the $\stay$ action and the dedicated sink state~$\sink$ (inspired by~\cite{AshokCDKM17}). These additions are necessary, as we deal with weighted reachability where in the presence of negative rewards it can be advantageous to stay and obtain a reward of 0; in contrast, for the reachability objective considered in~\cite{de1997formal}, 0 is the worst value, so the $\stay$ action is omitted.

We assume w.l.o.g. that every action is unique, i.e.\ for all $s,s'\in\states$ we have $\act(s)\cap\act(s')=\emptyset$.

\begin{definition}[MEC Quotient MDP]\label{def:3-mec-quotient}
	Let $\MDP = (\states, \initstate, \act, \trans)$ be an MDP. 
	Let $\MECs_\MDP(\states\setminus\targets) = \{(T_1,A_1),\ldots,(T_n,A_n)\}$ be the set of all non-target MECs .
	Further, define $\MECs_\states=\bigcup_{i=1}^nT_i$ as the set of all states contained in some non-target MEC. The \emph{MEC quotient of $\MDP$} is defined as the MDP $\QMDP=(\Qstates, \Qinit, \Qact, \Qtrans)$, where:
	\begin{itemize}
		\item $\Qstates=\states\backslash \MEC_\states\cup\{\shat_1,\ldots,\shat_n\} \cup \{\sink\},$
		\item $\Qact = \act \cup \{\stay\}$
		\item if for some $T_i$, we have $\initstate\in T_i$, then $\Qinit=\shat_i$, otherwise $\Qinit=\initstate$
		\item for original states $s\in \states\backslash \MEC_\states$:
		$\Qact(s)=\act(s)$
		
		for collapsed states, i.e. for $\shat_i$ with $1\leq i\leq n$: $\Qact(\shat_i)=\{a\in\act\;|\; s\in T_i \land a\in\act(s) \land a\notin A_i\} \cup \{\stay\}$
		
		For the sink state, we have $\Qact(\sink) = \stay$.
		
		\item for the transition function, we need to address transitions coming from new states or leading to new states.
		Denote by $\mathsf{orig}(s,a)$ the state in the original MDP that is represented by $s$; the additional action $a$ allows us to uniquely identify a state in a collapsed MEC, since we assumed that actions are unique.
		Formally, $\mathsf{orig}(s,a) \eqdef s$ if $s\in\states$, and otherwise, i.e.\ if $s=\shat_j$, we have $\mathsf{orig}(s,a) \eqdef t$, where $t$ is the state in $T_j$ with $a \in\act(t)$.
		For all states $s\in\Qstates$ and actions $a\in\act$:
		
		$$\Qtrans(s,a,t)=\begin{cases}
			\sum_{s'\in T_j}\trans(\mathsf{orig}(s,a),a,s') & \text{if } t=\shat_j\\
			\trans(\mathsf{orig}(s,a),a,t)&\text{otherwise} 
		\end{cases}$$
		
		For all collapsed states, i.e.\ $\shat_i$ with $1\leq i\leq n$, the transition for the $\stay$ action is:
		$\Qtrans(\shat_i,\stay,\sink) = 1$.
		Similarly, we have $\Qtrans(\sink,\stay,\sink) = 1$.
		
	\end{itemize}
\end{definition}

Using this, we can now prove \cref{lem:3-stopping}.

\stopping*

\begin{proof}
	Let $\QMDP$ be the MEC Quotient of $\MDP$, as defined in \Cref{def:3-mec-quotient}.
	The MECs can be computed in polynomial time, and iterating over them once to replace each of them takes linear time.
	We now prove that $\QMDP$ is stopping: There are no MECs in $\Qstates\setminus (\targets \cup \{\sink\}$, i.e.\ in the part of the state space other than the absorbing targets and sink.
	As a MEC is reached almost surely~\cite[Thm. 3.2]{de1997formal}, we have that for every strategy $\Qstrat$ in $\QMDP$ it holds that 
	$\probability_{\QMDP}^{\Qstrat}[\reach \targets \cup \{\sink\}] = 1$.
	
	It remains to show that $\CPTval(\MDP) = \CPTval(\QMDP)$.
	To this end, we show that for every strategy $\strat$ in the original MDP, we can construct a strategy $\Qstrat$ in the quotient MDP such that the prospect of the induced MCs is the same, i.e.\ $\prospect(\MDP^\strat,\obj) = \prospect(\QMDP^\Qstrat,\obj)$, and vice versa.
	(Note that, as usual, we have fixed a weighted reachability objective $\obj$.)
	From this, it follows by the definition of CPT-value in \Cref{eq:3-cpt-value} that the CPT-value of both MDPs is the same.
	Note that, by the same argument and \Cref{eq:3-eu-value}, the expected value in both MDPs also coincides.
	
	Let $(T_i,A_i)$ be an arbitrary non-target MEC of $\MDP$, and let $\infinitepath s$ be the history so far, where the last state in the history $s$ is in the EC $T_i$.
	In the following, we use $\probability_{\MDP,s}^{\strat(\infinitepath)}$ to denote the probability measure in the MDP with $s$ as the initial state, where the strategy assumes that the history $\infinitepath$ is already given (and analogously for $\QMDP$ and $\Qstrat$).
	We now distinguish two cases:
	\begin{itemize}
		\item $\probability_{\MDP,s}^{\strat(\infinitepath)} [\reach \states\setminus T_i] = 0$, i.e.\ the play remains in $T_i$ forever. In this case, $\Qstrat(\infinitepath \shat_i)$ surely picks the action $\stay$.
		Then, we have that in the original MDP
		$\probability_{\MDP,s}^{\strat(\infinitepath)} [\reach \obtainset(\MDP^\strat(\infinitepath,\obj,0))] = 1$ and in the quotient MDP
		$\probability_{\QMDP,s}^{\Qstrat(\infinitepath)} [\reach \obtainset(\QMDP^\Qstrat(\infinitepath,\obj,0))] =1$.
		Intuitively, we have shown that we can mimic the staying behaviour.
		\item $\probability_{\MDP,s}^{\strat(\infinitepath)} [\reach \states\setminus T]_i > 0$, i.e.\ there is a chance to leave the MEC.
		Intuitively, in this case we need to compute the distribution over all leaving actions in the original MDP so that we can mimic the leaving behaviour in the quotient MDP.
		Formally, this is cumbersome to write down.
		
		Consider the countable induced MC $\MDP^\strat$ where paths in the MDP $\MDP$ are states (see~\cite[Def. 10.92]{BK08} for details on this standard notion), and let the initial state be $\infinitepath s$.
		Modify this MC as follows: For every state $\infinitepath s \infinitepath'$, make it absorbing if there exists a state $t\in\infinitepath'$ such that $t\notin T_i$.
		Every of these new absorbing states has the form is labelled $\infinitepath s \infinitepath' a t$, where $a$ is the action that was picked before $t$ was reached. 
		Thus, by computing the limiting distribution in this modified MC, we can infer the probability distribution over the leaving actions.
		This distribution is exactly the probability distribution over actions that we use as strategy $\Qstrat$ in the quotient MDP.
		Thus, for every target state, the probability to reach it in $\MDP$ and $\QMDP$ is equal.
	\end{itemize}
	
	We comment on a further technicality: Reaching different states in the MEC can lead to different behaviour. 
	For example, assume that after seeing the history $\infinitepath$ there is an equal chance of reaching $s_1$ and $s_2$, both of which are in $T_i$.
	The strategy might recommend staying in $s_1$, but leaving in $s_2$.
	In the quotient MDP, we cannot distinguish these cases, as the history only contains $\shat_i$, the representative of the whole MEC.
	
	This is not a problem for two reasons.
	Firstly, our overall goal is to find the optimum strategy.
	Since all states in a MEC can reach each other with probability 1, all states in a MEC have the same value. 
	Thus, an optimal strategy need not distinguish between states in the MEC.
	
	Secondly, we can modify the definition of quotient MDP slightly to account for this fact; this is necessary to prove our current sub-goal, i.e.\ that we can mimic the behaviour of every strategy such that $\prospect(\MDP^\strat,\obj) = \prospect(\QMDP^\Qstrat,\obj)$; intuitively, we also want to be able to mimic \enquote{stupid} strategies that differentiate between different states in a MEC.
	To this end, we keep every original MEC state in the quotient, but only give it one action that surely transitions to the representative $\shat_i$.
	This representative is unchanged. 
	(The transition function then becomes simpler, because we do not have to distinguish transitions leading into a MEC and sum the probability of all MEC states, but we can just go to the original state.)
	In this quotient MDP, the history now contains the original state which allows us to also mimic behaviour that differs depending on the entry state.
	
	\medskip
	Overall, we have shown that we can indeed construct a strategy $\Qstrat$ such that for every strategy $\strat$ in the original MDP, we can mimic its behaviour in the quotient MDP using $\Qstrat$.
	The proof for the other direction is mostly analogous and even slightly simpler.
	The $\stay$ action can be replaced with a strategy that chooses an arbitrary subset of actions in $A_i$.
	A distribution over leaving actions is achieved by picking an appropriate memoryless randomized strategy inside the MEC.
	Formally, this can be achieved as follows:
	Every leaving action can be considered a target set of a multi-objective reachability query. 
	Since every state in the MEC can reach every other state almost surely (by definition of MEC), the Pareto frontier of this query contains all probability distributions over leaving actions.
	Thus, using~\cite[Cor. 3.5]{EKVY08}, we can obtain a memoryless randomized strategy that achieves exactly the distribution over leaving actions that is prescribed by $\Qstrat$.
	
	Thus, we get that the set of all induced prospects in both MDPs are equal, and can conclude that the CPT-value is equal.
	
\end{proof}
This lemma allows us to assume that w.l.o.g. the MDP we are given is stopping because given an arbitrary MDP, we can transform it to a stopping MDP with the same CPT-value and then compute the CPT-value in the stopping MDP.

\begin{figure}[t]
	\centering
	 \begin{tikzpicture}[scale=1.4]
    \node[draw,circle, minimum size=0.5cm] (s0) at (0,0){$s$};

    \node[draw,circle, minimum size=0.5cm] (t) at (1.5,0){$t$};
    
    \node (tc) at (2,0){$-5$};

    \draw[->,thick] (-.75,0) -- (s0);
    \draw[->,thick] (s0) --  node[above] {b} (t);
     \draw[->,thick] (s0) edge[loop above] node[above] {a} (s0);
     \draw[->,thick] (t) edge[loop above] node[above] {a} (t);

\end{tikzpicture}
	\caption{A simple example of a non-stopping MDP. Transition probabilities are omitted, as they are equal to 1 (repeat of \cref{fig:3-stopping} for convenience).
	}
	\label{fig:app-3-stopping}
\end{figure}
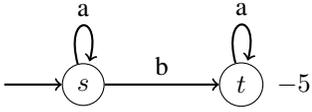

\begin{example}
	We briefly exemplify the necessity of the assumption that the MDP is stopping for the correctness of the multi-objective reachability query defined in \cref{eq:3-mo-query}:
	For the MDP in \Cref{fig:app-3-stopping}, there are two outcomes, 0 and $-5$. 
	However, the $\obtainset$ of outcome 0 is empty.
	Hence, there is only a single point on the Pareto frontier of the corresponding multi-objective query, namely the one that gives probability 1 to outcome $-5$.
	However, there is the additional prospect [-5:0, 0:1] (with better CPT-value); it is induced by playing the staying action.
	This prospect is not a point on the Pareto frontier.
	Thus, without the stopping assumption, the Pareto frontier does not contain all possible prospects. \qee
\end{example}


\subsection{Lower Bound on Strategy Complexity}
\label{app:examples-rand}

\begin{figure}[t]
    \centering
    \begin{subfigure}{0.4 \textwidth}
        \begin{tikzpicture}
    \node[] (s0) at (0,0) {};
    \node[draw, circle, minimum size=0.5cm, right of=s0] (s) {$s_0$};
    \node[draw, circle, minimum size=0.5cm, right of=s, xshift=2cm] (t) {$s_1$};
    \node[right of=s, fill=black,inner sep=1.5pt, circle] (bd) {};
    \node[draw, circle, right of=t, minimum size=0.5cm, xshift=1.5cm, yshift=0.8cm] (u) {$s_2$};
    \node[draw, circle, right of=t, minimum size=0.5cm, xshift=1.5cm, yshift=-0.3cm] (v) {$s_3$};
    \node[draw, circle, right of=t, minimum size=0.5cm, xshift=1.5cm, yshift=-1.3cm] (w) {$s_4$};
    \node[right of=t, fill=black,inner sep=1.5pt, circle, yshift=-0.5cm] (bd2) {};
    \node[right of=s,yshift=-1cm,xshift=1cm] (zero) {0};
    \node[right of=u, xshift=-0.4cm] () {2};
    \node[right of=v, xshift=-0.4cm] () {1};
    \node[right of=w, xshift=-0.4cm] () {5};

    \draw[->,thick] (s0) -- (s);
    \draw[-,thick] (s) edge[below] node{$a_1$} (bd);
    \draw[->, thick] (bd) edge[bend right=90] node[fill=white, inner sep=1pt]{0.4} (s);
    \draw[->,thick] (bd) edge[] node[fill=white, inner sep=1pt]{0.1} (zero);
    \draw[->,thick] (bd) edge[] node[fill=white, inner sep=1pt]{0.5} (t);
    \draw[-,thick] (t) edge[] (bd2);
    \draw[->, thick] (t) edge[]  (u);
    \node[right of=t,xshift=-0.5cm,yshift=0.5cm] () {safe};
    \node[right of=t,xshift=-0.5cm,yshift=-0.6cm] () {risky};
    \draw[->, thick] (bd2) edge[] node[fill=white, inner sep=1pt]{0.9} (v);
    \draw[->, thick] (bd2) edge[] node[fill=white, inner sep=1pt]{0.1} (w);
    
\end{tikzpicture}
        \caption{}
        \label{fig:app-mdp-random}
    \end{subfigure}
	\hfill
    \begin{subfigure}{0.4 \textwidth}
        \includegraphics[width=\linewidth]{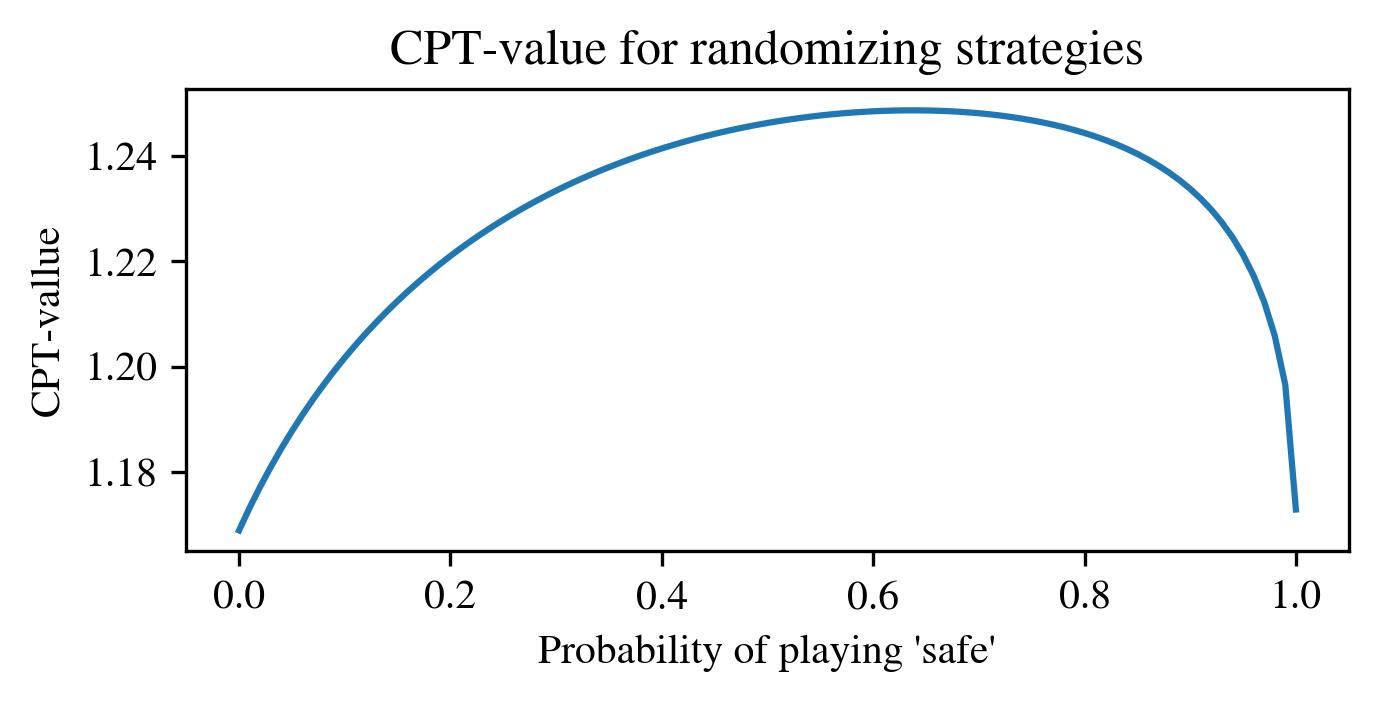}
        \caption{}
        \label{fig:app-mdp-plot}
    \end{subfigure}
    \caption{MDP for demonstrating that an optimal strategy can be randomizing. (a) The MDP. (b) CPT-value of the MDP for different strategies, computed with the standard functions of \cref{app:2-u-and-w}.}
\end{figure}
Consider the example MDP in \cref{fig:app-mdp-random}.
In $s_0$, there is no choice to take, as there is only one action available.
This will lead back to the initial state with a probability of 0.4, to a (not drawn) sink-state with a probability of 0.1 and it will move to state $s_1$ with a probability of 0.5.
In state $s_1$, we are faced with a decision: take the \emph{safe} option and gain 2 or choose the \emph{risky} option where we gain 1 with a probability of 0.9 and 5 otherwise.

There are two deterministic strategies available: playing \emph{safe} or playing \emph{risky}.
But we can see in \cref{fig:app-mdp-plot} that the optimal choice is neither of the two deterministic strategies, but it is achieved (approximately) when we play \emph{safe} with probability 0.7, and \emph{risky} with 0.3.

	\section{Additional Details for Section~\ref{sec:4-title}}\label{app:4-title}

\subsection{Non Pseudo-Convexity of $\cptfun$}\label{app:4-pseudoconvex}
\begin{lemma}[Non-Pseudo-Convexity]
	\label{lem:cpt-not-pseudoconvex}
	The CPT function is not pseudo-convex (and thus not convex).
\end{lemma}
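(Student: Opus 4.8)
The plan is to reduce the statement to an explicit local counterexample. Recall that pseudo-convexity of a differentiable function asks that $\nabla f(\vec x)^\top(\vec y-\vec x)\ge 0$ imply $f(\vec y)\ge f(\vec x)$; every convex function is pseudo-convex, so it is enough to falsify pseudo-convexity, which also yields ``not convex''. In fact it suffices to find an interior point $\vec p$ of the probability simplex and a tangent direction $v$ (with $\sum_i v_i=0$) such that $\nabla\cptfun((\outcomevector,\vec p))^\top v = 0$ but $v^\top H v<0$, where $H$ is the Hessian of $\vec p\mapsto\cptfun((\outcomevector,\vec p))$: then $\vec q\eqdef\vec p+\varepsilon v$ is, for small $\varepsilon>0$, still an interior prospect, satisfies $\nabla\cptfun(\vec p)^\top(\vec q-\vec p)=0\ge 0$, yet $\cptfun(\vec q)-\cptfun(\vec p)=\tfrac{\varepsilon^2}{2}v^\top Hv+o(\varepsilon^2)<0$.

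The non-convexity must come from the probability weighting: if $\weightP,\weightM$ were the identity, $\cptfun$ would be expected utility, which is affine in $\vec p$ hence convex, and with only two outcomes $\cptfun$ is a monotone reparametrisation of a single coordinate and so still pseudo-convex. I therefore take the smallest nontrivial case, $k=3$ with $\outcomevector=(0,1,2)$. Unfolding the decision weights of \cref{app:2-CPT} (the ranks are trivial here, so this is the only place the intricate CPT definition is touched) gives
\[
\cptfun((\outcomevector,\vec p)) \;=\; \util(1)\,\weightP(p_2+p_3)\;+\;\bigl(\util(2)-\util(1)\bigr)\,\weightP(p_3),
\]
which does not depend on $p_1$, so $\partial_{p_1}\cptfun\equiv 0$. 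I would pick $\vec p$ with $p_2,p_3$ small enough that $p_3$ and $p_2+p_3$ lie in the region where the standard $\weightP$ is strictly concave (the empirically essential overweighting of small probabilities --- e.g.\ $\weightP(x)=x^\gamma/(x^\gamma+(1-x)^\gamma)^{1/\gamma}$ with $\gamma\in(0,1)$ satisfies $(\weightP)''(x)<0$ near $0$), say $p_2=p_3=\tfrac{1}{100}$. Set $b\eqdef\partial_{p_2}\cptfun(\vec p)=\util(1)(\weightP)'(p_2+p_3)$ and $a\eqdef\partial_{p_3}\cptfun(\vec p)$, and let $v\eqdef(b-a,\,a,\,-b)$; then $\sum_i v_i=0$ and, since the $p_1$-component of the gradient vanishes, $\nabla\cptfun(\vec p)^\top v = ba-ab=0$.

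It remains to compute $v^\top Hv$. In the $(p_2,p_3)$-plane the Hessian equals $\util(1)(\weightP)''(p_2+p_3)$ times the all-ones matrix plus $(\util(2)-\util(1))(\weightP)''(p_3)$ in the lower-right entry (all entries involving $p_1$ vanish), and using $a-b=(\util(2)-\util(1))(\weightP)'(p_3)$ one gets
\[
v^\top Hv \;=\; \util(1)\bigl(\util(2){-}\util(1)\bigr)\Bigl[(\weightP)''(p_2{+}p_3)\,\bigl(\util(2){-}\util(1)\bigr)\,(\weightP)'(p_3)^2 \;+\; \util(1)\,(\weightP)'(p_2{+}p_3)^2\,(\weightP)''(p_3)\Bigr].
\]
Since $\util(2)>\util(1)>0$ (both $1$ and $2$ are gains and $\util$ is increasing with $\util(0)=0$), $(\weightP)'>0$, and $(\weightP)''<0$ at both $p_3$ and $p_2+p_3$, the bracket is strictly negative and the prefactor strictly positive, so $v^\top Hv<0$, completing the argument. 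I expect the main obstacle to be only the careful bookkeeping: correctly unfolding the decision-weight formula and the second derivatives without sign errors, and isolating the single external input used --- strict concavity of the probability weighting function near $0$. One could instead give a derivative-free proof by exhibiting three collinear probability vectors violating quasi-convexity, but pinning down such a triple with a safe numerical margin (the loss side tends to cancel the gain-side ``bump'') is more delicate, so the local Hessian argument is the route I would present.
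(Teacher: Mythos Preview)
Your proof is correct but takes a genuinely different route from the paper's. The paper gives a direct numerical counterexample: with outcomes $\{0,1,2\}$, utility $\util(x)=x$, and the weighting function $w(x)=x^{2.25}/(x^{2.25}+(1-x)^{2.25})^{1/2.25}$, it exhibits two explicit prospects $\prosp^1=[2{:}0.01,\,1{:}0.31,\,0{:}0.68]$ and $\prosp^2=[2{:}0.01,\,1{:}0.01,\,0{:}0.98]$ and verifies numerically that $\nabla\cptfun(\prosp^1)^\top(\prosp^2-\prosp^1)\approx 0.245>0$ while $\cptfun(\prosp^2)-\cptfun(\prosp^1)\approx -0.105<0$. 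Your argument is instead a local second-order one: you locate a point and a simplex-tangent direction along which the directional derivative vanishes but the Hessian quadratic form is strictly negative, which forces a violation for all sufficiently small steps. What you gain is generality --- your computation shows non-pseudo-convexity for \emph{every} CPT instance whose $\weightP$ is strictly concave near $0$ (the empirically standard regime $\gamma\in(0,1)$), without committing to specific parameter values. The paper's approach is more concrete and immediately checkable by plugging in numbers, though it happens to use the non-standard exponent $2.25$ (hence a $\weightP$ that is convex near $0$), so the two counterexamples in fact live in complementary parameter regimes.
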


\begin{proof}
	We denote by $\nabla f$ the gradient of $f=(\frac{\partial  f}{x_1},\ldots,\frac{\partial  f}{x_n})$.
	A function $f:X\rightarrow \Reals$ is pseudo-convex (see, e.g.,~\cite{pseudo-convex}) if the following holds for all $x,y\in X$: $$\nabla f(x)^T(y-x)\geq 0\Rightarrow f(y)\geq f(x)$$
	
	To show that $\cptfun$ is not generally pseudo-convex, we provide a counterexample.
	Assume, we have two prospects $\prosp^1$ and $\prosp^2$, with the outcomes $\outcomes=\{0,1,2\}$
	We have \begin{align*}
		\cptfun(\prosp^1)=&\util(2) \cdot w(p^1_1) \\
		&+\util(1) \cdot (w(p^1_1+p^1_2)-w(p^1_1))\\
		&+ \util(0) \cdot (1-w(p^1_1+p^1_2))
	\end{align*}
	and equivalently
	\begin{align*}
		\cptfun(\prosp^2)=&\util(2) \cdot w(p^2_1) \\
		&+ \util(1) \cdot (w(p^2_1+p^2_2)-w(p^2_1)) \\
		&+ \util(0) \cdot (1-w(p^2_1+p^2_2))
	\end{align*}
	
	We have that 
	\begin{align*}
		\nabla \cptfun(\prosp^1)^T=[\util(2) \cdot w'(p^1_1) + \util(1) \cdot (w'(p^1_1+p^1_2)-w'(p^1_1)),\\ \util(1)\cdot w'(p^1_1+p^1_2)]
	\end{align*}
	where $w'(x)$ is the derivative of $w$ (which we omit to write out for readability).
	
	If we assume the prospects to have the following numbers, $\prosp^1=[2:0.01, 1:0.31, 0:0.68]$, $\prosp^2=[2:0.01,1:0.01, 0:0.98]$, and using $\util(x)=x$, and $$w(x)=\frac{x^{2.25}}{(x^{2.25}+(1-x)^{2.25})^{1/{2.25}}}$$
	we get that $$\nabla \cptfun(\prosp_1)^T(\prosp_2-\prosp1)\;\approx\;0.245>0$$
	but $$\cptfun(\prosp_2)-\cptfun(\prosp_1)\approx-0.105<0$$
	
	Therefore, we found a violation of the pseudo-convexity condition.
	Since it is known that pseudo-convexity implies convexity~\cite{pseudo-convex}, this also proves (what by inspection of $w$ and $\util$ we already assumed) that in general $\cptfun$ is not convex.
\end{proof}

\subsection{Non-monotonicity of the CPT-function}\label{app:4-cpt-increasing}

The CPT-function is not monotonic with respect to the probability vector in the prospect. Intuitively, this is the case because increasing the probability of a negative outcome decreases the CPT-function.
We formalize this intuition in \Cref{lem:4-cpt-increasing}.
Further, we show that it is in fact a \emph{difference-of-monotonic} function (\cref{lem:4-cpt-dom}); this notion is relevant in non-convex optimization, cf.~\cite[11.1.2]{tuy1998convex}. We do not utilize this fact, but include it for completeness.

\begin{restatable}{lemma}{cptinc}
	\label{lem:4-cpt-increasing}
	Assume that $\util$, $\pweight$, and $\mweight$ are increasing functions.
	The CPT function $\cptfun((\outcomevector, \probabilities))$ is monotonically increasing in $\probabilities$ for only positive outcomes and monotonically decreasing for only negative outcomes. Formally, for two vectors $\probabilities\geq\probabilities'$, we have
	\begin{enumerate}
		\item if $\outcomevector>\vec{0}$ then $\cptfun((\outcomevector,\probabilities))\geq\cptfun((\outcomevector,\probabilities'))$
		\item if $\outcomevector<\vec{0}$ then $\cptfun((\outcomevector,\probabilities))\leq\cptfun((\outcomevector,\probabilities'))$
	\end{enumerate}
\end{restatable}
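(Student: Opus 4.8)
The plan is to rewrite $\cptfun$ on an all-positive (resp.\ all-negative) prospect in a form where monotonicity in $\probabilities$ is visible by inspection, and then read off the claim. I will treat item~(1) in detail and note that item~(2) is entirely symmetric, with $\weightM$ and the loss rank $\lossrank$ playing the role of $\weightP$ and the gain rank $\gainrank$, and all inequalities reversed because the relevant utility values are $\leq 0$ rather than $\geq 0$.

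First I would observe that when every outcome is strictly positive there is no zero outcome, so every index is a ``gain'' and the loss rank never enters the definition of the decision weights. Writing $P_i \eqdef \sum_{m=i}^{k} p_m$ for the cumulative tail mass (with $P_{k+1}=0$), the telescoping structure of the gain rank gives $\gainrank(\probabilities,i)=P_{i+1}$ and $p_i+\gainrank(\probabilities,i)=P_i$, so that
\[
  \cptfun((\outcomevector,\probabilities)) \;=\; \sum_{i=1}^{k} \util(o_i)\bigl(\weightP(P_i)-\weightP(P_{i+1})\bigr).
\]
Next I would apply summation by parts (Abel's identity), using the convention $o_0=0$ together with $\util(0)=0$, and collect the boundary term $-\util(o_k)\weightP(P_{k+1})=-\util(o_k)\weightP(0)$, which is a constant independent of $\probabilities$; this yields
\[
  \cptfun((\outcomevector,\probabilities)) \;=\; \text{const} \;+\; \sum_{i=1}^{k}\bigl(\util(o_i)-\util(o_{i-1})\bigr)\,\weightP(P_i).
\]
Every coefficient $\util(o_i)-\util(o_{i-1})$ is then $\geq 0$ since the outcomes are sorted increasingly and $\util$ is increasing, and every $P_i$ is a (weakly) monotone function of $\probabilities$ in the pointwise order; since $\weightP$ is increasing, $\probabilities\geq\probabilities'$ forces $\weightP(P_i)\geq\weightP(P_i')$ for all $i$, and summing the nonnegative contributions gives $\cptfun((\outcomevector,\probabilities))\geq\cptfun((\outcomevector,\probabilities'))$. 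For item~(2) the identical manipulation with $Q_i\eqdef\sum_{m=1}^{i}p_m$ produces $\cptfun((\outcomevector,\probabilities))=\util(o_k)\weightM(Q_k)+\sum_{i=1}^{k-1}\bigl(\util(o_i)-\util(o_{i+1})\bigr)\weightM(Q_i)+\text{const}$, where now $\util(o_k)\leq 0$ and each $\util(o_i)-\util(o_{i+1})\leq 0$, so increasing $\probabilities$ can only decrease the value.

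The step most prone to slip — and hence the one I would write out most carefully — is the summation by parts together with the sign tracking: one must check that the ranks telescope exactly into the cumulative sums $P_i$/$Q_i$ (which relies on $\outcomevector$ being sorted increasingly and, in these pure cases, having no zero entry), perform the index shift in Abel's identity without error, and confirm that every surviving coefficient has the claimed sign. The only modelling assumptions needed beyond ``$\util,\weightP,\weightM$ increasing'' are the normalizations $\util(0)=0$ and $\weightP(0)=\weightM(0)=0$, and even these are dispensable, since the boundary terms they control are constant in $\probabilities$ and thus irrelevant to monotonicity. Beyond this bookkeeping I do not anticipate a substantive obstacle: the statement is essentially the monotonicity of the rank-dependent (Choquet) representation of CPT in its underlying capacity when the integrand has constant sign, read off through the decision-weight formula.
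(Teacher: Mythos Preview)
Your proposal is correct and is essentially the same argument as the paper's: both rewrite $\cptfun$ via Abel summation/regrouping into $\sum_i(\util(o_i)-\util(o_{i-1}))\,\weightP(P_i)$ plus boundary terms, observe that the coefficients are nonnegative by monotonicity of $\util$ and the ordering of outcomes, and conclude using monotonicity of $\weightP$ in the cumulative tails $P_i$. The only cosmetic difference is that the paper keeps the $i=1$ term $\util(o_1)\weightP(P_1)$ separate rather than absorbing it via the convention $o_0=0$, and it silently drops $\util(o_k)\weightP(0)$ using $\weightP(0)=0$, whereas you correctly note this boundary term is constant in $\probabilities$ regardless.
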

\begin{proof}
	The proof is straightforward by rewriting the CPT-function. 
	
	The CPT function satisfies stochastic dominance \cite{wakker2010prospect}, but we want to show monotonicity nevertheless since the definition in \cite{wakker2010prospect} lacks a precise mathematical characterization.
	As usual, w.l.o.g. let $\outcomevector$ be ordered increasingly.
	
	\textbf{Property (1)}
	Recall that for $\vec{o}>\vec{0}$, we have 
	$$\cptfun((\vec{o},\vec{p}))=\sum_{i=1}^k\util(o_i)\cdot\left(\pweight\left(\sum_{j=i}^kp_j\right)-\pweight\left(\sum_{j=i+1}^kp_j\right)\right)$$

	We regroup this sum, such that each term only looks at one sum of probabilities $\sum_{j=i}^kp_j$:
	\begin{align*} &\cptfun((\vec{o},\vec{p}))\\
		&=\left(\sum_{i=2}^{k}\util(o_i)\cdot\pweight\left(\sum_{j=i}^kp_j\right)-\util(o_{i-1})\cdot\pweight\left(\sum_{j=i}^kp_j\right)\right)\\
		&\quad +\util(o_1)\cdot\pweight\left(\sum_{j=1}^kp_j\right) - \util(o_k)\cdot\pweight(0)\\
		&=\left(\sum_{i=2}^{k}(\util(o_i)-\util(o_{i-1}))\cdot\pweight\left(\sum_{j=i}^kp_j\right)\right)\\
		&\quad +\util(o_1)\cdot\pweight\left(\sum_{j=1}^kp_j\right)
	\end{align*}
	We know from the increasing ordering of $o$ and because $\util$ is an increasing function that $\util(o_i)\geq\util(o_{i-1})$.
	Since also $\pweight$ is an increasing function, and since we have the assumption $\vec{p}\geq\vec{p}'$, we have for each term 
	\begin{align*}
		&(\util(o_i)-\util(o_{i-1}))\cdot\pweight\left(\sum_{j=i}^kp_j\right)\geq\\
		&(\util(o_i)-\util(o_{i-1}))\cdot\pweight\left(\sum_{j=i}^kp_j'\right)
	\end{align*}
	and clearly $$\util(o_1)\cdot\pweight\left(\sum_{i=1}^kp_i\right)\geq\util(o_1)\cdot\pweight\left(\sum_{i=1}^kp_i'\right).$$ 
	
	Thus, we have $\cptfun((\vec{o},\vec{p}))\geq\cptfun((\vec{o},\vec{p}'))$.\\
	
	\textbf{Property (2)}
	The proof works analogously. 
	We have to consider that $\util$ is now negative, and we use the weighting function $\mweight$ instead of $\pweight$.
	The regrouping works equivalently, and the arguments of an increasing $\util$ and increasing $\mweight$ still hold.
\end{proof}

\begin{corollary}
	\label{lem:4-cpt-dom}
	The CPT-function $\cptfun$ is a difference-of-monotonic function.
\end{corollary}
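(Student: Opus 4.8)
The plan is to exploit the additive structure of the CPT-function together with the monotonicity already established in \cref{lem:4-cpt-increasing}. Recall that a function is \emph{difference-of-monotonic} if it can be written as $g - h$ with $g$ and $h$ both monotonically increasing (here with respect to the componentwise order on probability vectors). First I would fix an arbitrary prospect $(\outcomevector,\probabilities)$ with $k$ outcomes ordered increasingly and with $o_j = 0$, and split the defining sum $\cptfun((\outcomevector,\probabilities)) = \sum_{i=1}^{k} \util(o_i)\cdot\decweight_i(\outcomevector,\probabilities)$ into the slice ranging over the gain indices $i > j$, the (vanishing, by definition of $\decweight$) term $i = j$, and the slice over the loss indices $i < j$. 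Writing $\outcomevector^+,\probabilities^+$ for the restrictions to the positive outcomes and $\outcomevector^-,\probabilities^-$ for the restrictions to the negative ones, the crucial observation is that the gain rank of a positive outcome sums only probabilities of strictly larger (hence also positive) outcomes, and dually the loss rank of a negative outcome sums only probabilities of negative outcomes; after matching up the re-indexing, the gain slice therefore equals $\cptfun((\outcomevector^+,\probabilities^+))$ and the loss slice equals $\cptfun((\outcomevector^-,\probabilities^-))$, so $\cptfun((\outcomevector,\probabilities)) = \cptfun((\outcomevector^+,\probabilities^+)) + \cptfun((\outcomevector^-,\probabilities^-))$.

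Next I would define $g(\probabilities) \eqdef \cptfun((\outcomevector^+,\probabilities^+))$ and $h(\probabilities) \eqdef -\cptfun((\outcomevector^-,\probabilities^-))$, both regarded as functions of the full vector $\probabilities \in [0,1]^k$ that simply ignore the irrelevant coordinates; by the decomposition above, $\cptfun = g - h$. To see that $g$ is increasing, note that $\probabilities \geq \probvectorQ$ implies $\probabilities^+ \geq \probvectorQ^+$; since every coordinate of $\outcomevector^+$ is strictly positive and $\probabilities^+$ is the positive part of a genuine probability vector (so all partial sums appearing in the evaluation lie in $[0,1]$ and the weighting functions are applied legally), \cref{lem:4-cpt-increasing}(1) gives $\cptfun((\outcomevector^+,\probabilities^+)) \geq \cptfun((\outcomevector^+,\probvectorQ^+))$, i.e.\ $g(\probabilities) \geq g(\probvectorQ)$. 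Symmetrically, $\probabilities \geq \probvectorQ$ gives $\probabilities^- \geq \probvectorQ^-$, and since $\outcomevector^-$ is strictly negative, \cref{lem:4-cpt-increasing}(2) yields $\cptfun((\outcomevector^-,\probabilities^-)) \leq \cptfun((\outcomevector^-,\probvectorQ^-))$; negating, $h(\probabilities) \geq h(\probvectorQ)$, so $h$ is increasing as well. Hence $\cptfun = g - h$ exhibits $\cptfun$ as a difference of two monotonically increasing functions, which is exactly the claim.

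I expect the only real obstacle to be the bookkeeping in the first paragraph: justifying cleanly that the "gain slice'' of the CPT sum coincides with the CPT-function evaluated on the sub-prospect of positive outcomes (the re-indexing must line the gain ranks up exactly), and checking that the restricted vectors $\probabilities^{\pm}$ are admissible arguments of the weighting functions so that \cref{lem:4-cpt-increasing} genuinely applies. Everything after that is a direct reuse of \cref{lem:4-cpt-increasing} together with the trivial fact that the negation of a decreasing function is increasing.
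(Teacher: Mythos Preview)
Your proposal is correct and follows exactly the same approach as the paper: split $\cptfun$ into its gain and loss slices, set $g$ to be the CPT-function on the positive outcomes and $h$ to be $(-1)$ times the CPT-function on the negative outcomes, and invoke \cref{lem:4-cpt-increasing} for the monotonicity of each piece. The paper's proof is terser and does not spell out the re-indexing bookkeeping you flag, but the argument is the same.
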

\begin{proof}
	The CPT-function is a sum over all outcomes.
	We can also write it as the difference between the following two monotonic functions:
	firstly the CPT-function on only the positive outcomes, and secondly $(-1)$ times the CPT-function on only the negative outcomes.
	By \cref{lem:4-cpt-increasing}, both of these functions are monotonically increasing (the latter since the CPT-function on negative outcomes is decreasing, and we multiply it with $(-1)$).
\end{proof}

\subsection{Lipschitz-continuity of CPT-function}\label{app:4-cpt-lipschitz}

\cptLc*
\begin{proof}
	The following chain of equations proves our goal
	\begin{align*}
		\abs{\cptfun&((\outcomevector,\probabilities))-\cptfun((\outcomevector,\probvectorQ))} \\
		&= \abs{\sum_{i=1}^k\util(o_i)\decweight(p_i)-\sum_{i=1}^k\util(o_i)\decweight(q_i)}
		\tag{Unfolding definition of $\cptfun$}\\
		&\leq \sum_{i=1}^k \abs{\util(o_i)(\decweight(p_i)-\decweight(q_i))}
		 \tag{Reordering and triangle inequality}\\
		&\leq \util^* \cdot \sum_{i=1}^k \abs{(\decweight(p_i)-\decweight(q_i)}
		\tag{Definition of $\util^*$}\\
		&\leq \util^* \cdot \varepsilon \cdot \max(\lipWp,\lipWm) \cdot (2k^2+k) \tag{See below}\\
		&= \varepsilon \cdot \lipCPT \tag{Definition of $\lipCPT$}
	\end{align*}

    It remains to prove:
	\[
	\sum_{i=1}^k \abs{(\decweight(p_i)-\decweight(q_i)} \leq \varepsilon \cdot \max(L_{\pweight},L_{\mweight}) \cdot (2k^2+k).
	\]
	
	We first analyse the summand, i.e.\ the difference of decision weights. 
	The definition of decision weight $\decweight$ distinguishes whether an outcome is 0, positive, negative.
	If it is 0, then its decision weight is also 0 independent of the associated probability being $p_i$ or $q_i$, and thus it does not contribute to the difference.
	We now focus on the case of a positive outcome, i.e.\ $o_i>0$:
	
	\begin{align*}
		&\left|\decweight(p_i)-\decweight(q_i)\right|\\
		&=|\left(\weightP(p_i + \gainrank(\probvector,i)) - \weightP(\gainrank(\probvector,i))\right)\\
		&\quad\quad - \left(\weightP(q_i + \gainrank(\probvectorQ,i)) - \weightP(\gainrank(\probvectorQ,i))\right)|
		\tag{Unfolding definition of $\decweight$}\\
		&=\Bigg|\left(\weightP(\sum_{m = i}^k p_m) - \weightP(\sum_{m = i+1}^k p_m)\right)\\
		&\quad\quad - 
		\left(\weightP(\sum_{m = i}^k q_m) - \weightP(\sum_{m = i+1}^k q_m)\right) \Bigg|
		\tag{Unfolding definition of $\gainrank$}\\
		&\leq \left|\weightP(\sum_{m = i}^k p_m) - \weightP(\sum_{m = i}^k q_m)\right|\\
		&\quad\quad + 
		\left|\weightP(\sum_{m = i+1}^k q_m)) - \weightP(\sum_{m = i+1}^k p_m)\right|
		\tag{Reordering and triangle inequality}
	\end{align*}
	Now, we use the fact that $\weightP$ is Lipschitz-continuous on [0,1] (as it is monotonic and bounded by [0,1] by assumption).
	Thus, we need to bound the difference between its arguments.
	We have
	\begin{align*}
		\left|\sum_{m = i}^k p_m - \sum_{m = i}^k q_m\right|\\
		= \left|\sum_{m = i}^k p_m - q_m \right|\tag{Reordering}\\
		\leq \sum_{m = i}^k \max_{j \in [i,k]} \left|p_j - q_j\right| \tag{Using the maximum component-wise difference and triangle inequality}\\
		\leq \sum_{m = i}^k \varepsilon \tag{As $\abs{\probabilities - \probvectorQ} \leq \varepsilon$}\\
		= (k-i) \cdot \varepsilon
	\end{align*}
	By the analogous argument, we get that 
	\[\left|\sum_{m = i+1}^k q_m - \sum_{m = i+1}^k p_m\right| \leq (k-i+1) \cdot \varepsilon.\]
    We denote the Lipschitz-constant of $\weightP$ by $\lipWp$.
	Utilizing these facts and the Lipschitz-continuity of $\weightP$, we obtain:
	\begin{align*}
		&\left|\decweight(p_i)-\decweight(q_i)\right|\\
		&\leq \left(\lipWp \cdot (k-i) \cdot \varepsilon\right) + \left(\lipWp \cdot (k-i+1) \cdot \varepsilon\right)\\
		&= \varepsilon \cdot \lipWp \cdot (2k-2i+1)
	\end{align*}

	We repeat the argument for negative outcomes. The only difference is in the borders of the summations, now running from 1 to $i$ or from 1 to $i-1$. Thus, the last factor becomes $i-1 + i-2$, which is equal to $2i - 3$.
	Overall, for $o_i < 0$, we have
	\[\left|\decweight(p_i)-\decweight(q_i)\right| \leq \varepsilon \cdot \lipWm \cdot (2i-3)\]
	
	We unify the expressions for positive and negative outcomes. 
	For this, we take the larger of the two Lipschitz-constants $\max(\lipWp,\lipWm)$, and upper bound the last factor by $2k+1$ (which is certainly larger than $2i-3$, as $i\leq k$).
	Thus, we have 
	\[
		\left|\decweight(p_i)-\decweight(q_i)\right| \leq \varepsilon \cdot \max(\lipWp,\lipWm) \cdot (2k+1)
	\]
	Now we have analysed the summand, and in the process eliminated the dependence on $i$.
	Applying our knowledge, we conclude with the desired claim:
	\begin{align*}
	\sum_{i=1}^k \abs{(\decweight(p_i)-\decweight(q_i)} &\leq k \cdot \left(\varepsilon \cdot \max(\lipWp,\lipWm) \cdot (2k+1)\right)\\
	&= \varepsilon \cdot \max(\lipWp,\lipWm) \cdot (2k^2+k)
	\end{align*}

\end{proof}

	\section{Cumulative Prospect Theory for Other Objectives}\label{app:5-mp}

\subsection{CPT-value for Mean Payoff Objectives}\label{app:mp-prelim}

To define the CPT-value of a mean payoff (MP) objective, we employ the characterization of CPT-value through \cref{eq:3-cpt-value}.
To do this, we need to define the $\obtainset$ on an MC with an MP-objective.
For a BSCC $C$ in an MC, we can compute its mean payoff by solving a system of linear equations, see e.g.~\cite[Chap. 8]{puterman}.
Let $\meanpay(C,\objMP)$ be the mean payoff of a BSCC.

\begin{align}
    \obtainset(\MC,\objMP,o_i) &\eqdef \nonumber\\
    &\{C \in \BSCCs(\MC) \mid \meanpay(C,\objMP) = o_i\} 
    \label{eq:MP-obtainset}
\end{align}
Note that we do not need a case distinction to treat the 0-outcome separately.
Given this definition of $\obtainset$, the definition of prospect induced by the MC is the same for both WR- and MP-objective.

Given this definition of CPT-value, we repeat the key claim.
\mpreduction*
The next sections provide the formal proof.

\subsection{Reduction of Mean Payoff to Weighted Reachability}\label{app:mp-reduction}

The MEC Quotient MDP describes an MDP where all MECs are collapsed into single states.
We modify the MEC quotient MDP as described in \cref{def:3-mec-quotient} to include the reward of a BSCC (whereas previously BSCCs always led to a sink with reward 0).

The Weighted MEC Quotient (similar to the one presented in \cite{AshokCDKM17}) adds to the MEC Quotient MDP a possibility to compute rewards on the collapsed MECs.
Intuitively, it assumes that we already know the outcome of a MEC. 
We add another action to each collapsed state ($\stay$) that represents the choice of staying indefinitely in the MEC and thus gaining the outcome. Alternatively, it is still possible to leave the MEC (if this was already possible before).
\begin{definition}[Weighted MEC Quotient]
\label{def:weighted-mec-quotient}
    Let $\MEC_{\Qstates}=\{\shat_1,\ldots,\shat_n\}$ and $f\colon \MEC_{\Qstates}\rightarrow \Rationals$ be a function assigning a value to every collapsed state. 
    We define the \emph{weighted MEC quotient of $\MDP$ and $f$} as the MDP $\MDP^f=(\states^f,\initstate^f,\act^f,\trans^f)$, where
    \begin{itemize}
        \item $\states^f=\Qstates\backslash\{\sink\}\cup\{\shat_1^f,\ldots,\shat_n^f\}$ for each collapsed state $\shat_i$
        \item $\initstate^f=\Qinit$
        \item $\act^f=\act$
        \item $\trans^f$ is defined as 
        $$\forall \shat\in\Qstates,\widehat{a}\in\Qact\backslash\{\stay\}.\;\trans^f(\shat,\widehat{a})=\Qtrans(\shat,\widehat{a})$$
        $$\forall\shat_i\in\MEC_{\Qstates}.\;\trans^f(\shat_i,\stay)=(\shat_i^f \rightarrow 1)$$
    \end{itemize}
    We now define the reward function for weighted reachability on the target states $\{\shat_1^f,\ldots,\shat_n^f\}$ and the reward to be $r(\shat_i^f)=f(\shat_i)$.
\end{definition}

We show how we can transform an MDP with a mean payoff objective into an MDP with weighted reachability objective, such that the optimal prospects for each of them are the same. 
To this end, we introduce the transformation into a \emph{MEC Quotient MDP} and consequently, a \emph{Weighted MEC Quotient MDP}. 
In \cref{th:mp-wr-mec}, we show that the optimal prospects are the same.
Finally, we conclude in \cref{cor:cpt-mp-is-cpt-wr} that the CPT-value of the MDP with mean-payoff objective is the same as the transformed MDP with weighted reachability objective.

\begin{lemma}\label{th:mp-wr-mec}
With the transformation from \cref{def:weighted-mec-quotient}, we have that 
$$\sup_\strat\prospect(\MDP^\strat,\objMP)=\sup_\strat\prospect({\MDP^f}^\strat, \obj^{WR})$$
for $f(\hat{s_i})=\max_{\strat}r^\strat(\hat{s_i})$, where each MEC gets the highest possible outcome possible under the reward function for the objective.
\end{lemma}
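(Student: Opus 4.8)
The plan is to read the displayed equality at the level of CPT-values --- equivalently, up to first-order stochastic dominance of the induced prospects --- since this is exactly what \cref{cor:cpt-mp-is-cpt-wr} uses, and to prove it via two opposite inequalities. The starting point is the classical fact that under any strategy in a finite MDP almost every path eventually remains forever inside a single MEC, and that the mean payoff of such a path is determined by that MEC together with the sub-strategy followed there; moreover, for each MEC $(T_i,A_i)$ the quantity $m_i \eqdef f(\shat_i) = \max_\strat r^\strat(\shat_i)$ --- the largest mean payoff realisable inside it --- is attained by a memoryless deterministic sub-strategy that yields mean payoff exactly $m_i$ on \emph{every} continuation almost surely, because the optimal average reward in a communicating class is constant (see, e.g., \cite{puterman}). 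Call a prospect of $\MDP$ under $\objMP$ a \emph{MEC-max prospect} if, for every $i$, all the probability mass of paths whose limiting MEC is $i$ sits on the single outcome $m_i$.

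The first step is to show that the MEC-max prospects of $\MDP$ under $\objMP$ are precisely the prospects of $\MDP^f$ under $\obj^{WR}$. A MEC-max prospect is determined by the vector $(p_i)_i$ of probabilities of settling in MEC $i$; a prospect of $\MDP^f$ is determined by the vector of probabilities of reaching the target $\shat_i^f$, i.e.\ of playing $\stay$ at $\shat_i$. These two families of vectors coincide by the standard MEC-quotient correspondence: given a strategy in $\MDP^f$, replacing every occurrence of $\stay$ at $\shat_i$ by an irrevocable switch to the optimal memoryless mean-payoff sub-strategy inside MEC $i$ turns it into a strategy of $\MDP$ realising the same $(p_i)_i$ with outcome $m_i$; conversely, given a strategy $\strat$ of $\MDP$, one builds a strategy of $\MDP^f$ that copies $\strat$ on the non-collapsed states and at $\shat_i$ plays $\stay$ with the probability that $\strat$ (from the current history) never again leaves MEC $i$ and otherwise exits according to $\strat$'s conditional exit distribution --- this is exactly the limiting-distribution construction already carried out in the proof of \cref{lem:3-stopping}, applied verbatim, and it reaches $\shat_i^f$ with the mass $p_i$.

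The second step closes the gap between arbitrary prospects and MEC-max ones. Given any strategy $\strat$ of $\MDP$ under $\objMP$, group the mass of $\prospect(\MDP^\strat,\objMP)$ by limiting MEC: for MEC $i$ all of it sits on outcomes $\le m_i$, so moving that mass up onto $m_i$ produces a prospect that first-order stochastically dominates $\prospect(\MDP^\strat,\objMP)$ and is a MEC-max prospect; since $\cptfun$ respects first-order stochastic dominance (cf.\ \cref{lem:4-cpt-increasing} and \cite{wakker2010prospect}), $\cptfun$ of the original prospect is no larger. Combined with the first step --- and with the obvious fact that every MEC-max prospect is itself induced by some strategy of $\MDP$ (mimic the transient part, then commit to the optimal sub-strategy in each MEC) --- this yields $\sup_\strat \cptfun(\prospect(\MDP^\strat,\objMP)) = \sup_{\hat\strat}\cptfun(\prospect({\MDP^f}^{\hat\strat},\obj^{WR}))$, which is the claim. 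I expect the main obstacle to be conceptual rather than technical: one cannot hope for literal equality of the two achievable prospect sets, because in $\MDP$ under $\objMP$ a strategy can realise a mean payoff strictly between the minimum and the maximum of a MEC as a genuine single outcome, which $\MDP^f$ cannot; the equality only survives after passing to stochastic dominance, and this is precisely why $f$ must be the MEC-wise maximum. The remaining care is bookkeeping --- making sure the two strategy transfers preserve the settling probabilities $p_i$ exactly, which we get for free by reusing the machinery of \cref{lem:3-stopping} instead of redeveloping it.
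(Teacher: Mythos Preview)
Your proposal is correct and in fact tidier than the paper's own argument. The paper proceeds via an intermediate, exponentially larger weighted quotient in which \emph{every} memoryless deterministic sub-strategy of each MEC gets its own $\stay$-target; it then shows literal equality of prospect sets between $\MDP$ and this blown-up MDP, and only afterwards observes that among the many $\stay$-targets of a MEC one can always prefer the one with maximal reward, collapsing back down to the single function $f=\max$. Your route avoids this detour: you work with $f=\max$ from the start, isolate the \emph{MEC-max prospects} as the image of $\MDP^f$, and bridge the gap to arbitrary prospects of $\MDP$ via first-order stochastic dominance, which is what the paper's final reduction step is implicitly using anyway. You also make explicit a point the paper glosses over: the two achievable prospect sets are genuinely different (a MEC can realise intermediate mean payoffs as single outcomes), so the displayed ``$\sup_\strat\prospect(\cdot)$'' can only be meaningful at the level of CPT-values, which is precisely what \cref{cor:cpt-mp-is-cpt-wr} needs. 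Reusing the strategy-transfer machinery of \cref{lem:3-stopping} for the correspondence in your Step~1 is the right economy; the only small adaptation needed beyond ``verbatim'' is that the $\stay$-action now leads to $\shat_i^f$ with reward $m_i$ rather than to the zero-reward sink, which does not affect the argument about matching settling probabilities.
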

\begin{proof}
We first demonstrate how one can intuitively transform the mean payoff MDP to a weighted reachability MDP, but the resulting MDP might be exponentially bigger.
Then we show that we can restrict ourselves to a much smaller subset, making the transformation polynomial.

We can find a function $f$ for the transformation by looking at a strategy $\strat$ for the original MDP $\MDP$. 
For a fixed strategy $\strat$, each MEC in the MDP will be either transient (i.e.\ not a BSCC in the induced MC, and as such irrelevant for the mean-payoff, because we only care about the infinite tail) or a BSCC and thus, the only relevant part of the MDP for the mean payoff objective.
Therefore, for each MEC $C_i=(T_i, A_i)$, we can compute its ``reward''. 
$$r(C_i)=\begin{cases}
    0&\text{if the MEC is transient}\\
    r(C_i^\strat)&\text{else}
\end{cases}$$
Note that we can compute the mean payoff in an MEC by computing its steady state distribution and directly computing the mean payoff from that (see e.g. \cite{DBLP:conf/tacas/Meggendorfer23}). 
Therefore, any strategy induces a function $f^\strat$ with which we can transform the MDP to a weighted MEC quotient MDP. 

For a fixed strategy $\strat$, and thus $\MDP$ being an MC $\MDP^\strat$, it is easy to see that $\prospect(\MDP^\strat,\objMP)=\prospect({\MDP^f}^\strat)$.
In the steady state distribution, any transient state has a probability of 0. 
Therefore, we only care about the BSCCs.
In the transformed MDP $\MDP^f$, all BSCCs are single states.
In the original MDP, they might be actual BSCCs containing several states.
However, we transformed each such BSCC into one single state in the transformed MDP, with the exact mean payoff value as a reward.

For an easier understanding, let us look at a weighted MEC quotient MDP, where we not only look at one function $f$ but at finitely many functions $f_1,\ldots f_r$. 
Intuitively, each of them can be derived from a strategy $\strat_i$.
Assume, we look at the MDP $\MDP^{f_1,\ldots,f_r}$, where we add states $\{\hat{s_1}^{f_i},\ldots,\hat{s_n}^{f_i}\}$ for all functions $f_i$, where $f_i(\hat{s_j})=r^{\strat_i}(\hat{s_j})$. 
For all possible rewards of an MEC, we add a state to the MDP that represents this reward.
First, prove that if we add a stay action for every BSCC that can result from the MEC, then we have exactly the same set of prospects.
Then, we can easily show that \begin{align*}
\{\prospect(&\MDP^\strat,\objMP)\;|\;\strat\in\Sigma\}=\\
&\{\prospect(\MDP^{f_i},\objWR)\;|\;\forall o_i\;\text{MD strategies}\}\end{align*}
Note that there are only finitely many memoryless deterministic strategies for a finite MDP.

We show equality by proving inclusion in both directions.

(i) Suppose we are given a prospect in the original mean payoff MDP.
This means there must be a strategy $\sigma_j$ under which we can achieve a set of outcomes with certain probabilities.
Then, each outcome must be achieved in a BSCC, and since we added a specific state with this outcome to the transformed MDP $\MDP^{f_j}$, we also have the same outcome there.
On the other hand, the probability of such an outcome is only determined by the probability of reaching its respective BSCC.
Since we don't change transient states in the transformed MDP, the probabilities are still the same.

(ii) If we are given a prospect in the transformed MDP, each outcome is received in a target state.
If this target state was already a state in the original MDP, the outcome must still be the same in the original.
If the state is a collapsed state, it comes from an MEC in the original MDP, for which there must exist a strategy such that this BSCC can achieve the exact outcome as a mean payoff. 
Thus, there must be a strategy that will reach this BSCC and returns the outcome as mean payoff.\\

However, this new MDP is now potentially exponentially bigger than before.
But we know that we are not interested in all additional states $\hat{s_i}^f$.
If $r(\hat{s_i}^{f_k})<r(\hat{s_i}^{f_l})$, we definitely prefer the latter and it will always give a greater prospect.
Thus, we can ignore all new stay actions with less than maximum reward.

Thus, overall, we add one stay action per MEC in the weighted MEC quotient, and certainly, the sup over strategies in the weighted reach MDP is the same as in the MP MDP.
\end{proof}
Note that the MDP $\MDP^f$ resulting from this transformation is immediately stopping.
Thus, we do not need to apply \cref{lem:3-stopping} when calculating the CPT-value of $\MDP^f$.

\mpreduction*
\begin{proof}
    This follows immediately from \cref{th:mp-wr-mec}.
    Note that it takes polynomial time to compute the MECs as well as the mean payoff of every MEC for the weighted MEC quotient.
\end{proof}

\subsection{Effect of the Number of Outcomes}\label{app:5-num-outcomes}

As an interesting side-observation, we discuss the effect of the number of outcomes on the applicability of decision theories in general, and CPT in particular.

\paragraph{Objectives with Two Outcomes.}\label{sec:num-outcomes}
Switching from expected utility to CPT only makes a difference for the resulting strategy if we have more than two possible outcomes. 
Intuitively, when there are only two outcomes with one being strictly better than the other, an optimal strategy will try to put as much probability on the better outcome as possible.
Thus, both CPT or expected utility will always yield a higher value for more probability mass in the higher outcome.
More generally, every function for evaluating prospects that is monotonic results in the same set of optimal strategies.
Thus, it does not make sense to investigate risk-aware MDPs with these kinds of objectives, including non-weighted reachability or $\omega$-regular objectives.

\paragraph{Objectives with Infinitely Many Outcomes.}
	We remark that our theory does not apply to objectives with infinitely many outcomes such as 
	total reward~\cite[Chap. 7]{puterman}, where the reward is obtained along the path.
	Since no set of states uniquely corresponds to an outcome, there is no $\obtainset$.
	Even if there was, the multi-objective reachability query would have infinitely many target sets.
	Already for a Markov chain, the computation of a prospect is non-trivial, as the following example shows:

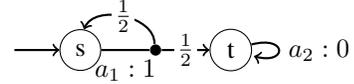
\begin{figure}
\centering
    \begin{tikzpicture}
    \node[] (s0) at (0,0) {};
    \node[draw, circle, minimum size=0.5cm, right of=s0] (s) {s};
    \node[draw, circle, minimum size=0.5cm, right of=s, xshift=1cm] (t) {t};
    \node[right of=s, fill=black,inner sep=1.5pt, circle] (bd) {};

    \draw[->,thick] (s0) -- (s);
    \draw[-,thick] (s) edge[below] node{$a_1:1$} (bd);
    \draw[->,thick] (bd) edge[bend right=80] node[fill=white,inner sep=1pt]{$\frac{1}{2}$} (s);
    \draw[->,thick] (bd) edge[] node[fill=white,inner sep=1pt]{$\frac{1}{2}$} (t);
    \draw[->,thick] (t) edge[loop right] node{$a_2:0$} (t);
    
\end{tikzpicture}
    \caption{Small example of Markov Chain with total reward}
    \label{fig:app-mc-total-reward}
\end{figure}
\begin{example}
    Let us consider the Markov chain in \cref{fig:app-mc-total-reward}.
    It has two states and one transition that returns to state $s$ with a probability of 0.5 and continues toward state $t$ with a probability of 0.5.
    On each iteration of action $a_1$, the total reward is increased by 1.
    Once we reach state $t$, we loop there and the total reward is no longer changed.
    The prospect of this MC is $\prosp=[1:\frac{1}{2}, 2:\frac{1}{4}, \ldots, n:\frac{1}{2}^n, \ldots]$ and it is infinite.
    Thus, the CPT value is given by
    $$\CPTval=\sum_{n=1}^\infty \util(n)\left(\pweight\left(\frac{1}{2}^{n-1}\right)-\pweight\left(\frac{1}{2}^n\right)\right)$$
    Even for such a small example, it is non-trivial to compute the actual value.
    \qee
\end{example}
    }
\end{document}